\theoremstyle{definition}
\newtheorem{thm}{Theorem}
\newtheorem{lem}{Lemma}
\newtheorem{cor}{Corollary}
\newtheorem{exm}{Example}
\newcommand{\false}{\ensuremath{\mathit{false}}}
\newcommand{\true}{\ensuremath{\mathit{true}}}
\newcommand{\ite}{\ensuremath{\mathbf{ite}}}
\newcommand{\bbt}{\ensuremath{\mathcal{B}}}
\newcommand{\lst}{\ensuremath{\mathbf{lst}}}
\newcommand{\pref}{\ensuremath{\mathbf{pref}}}
\newcommand{\exits}{\ensuremath{\mathbf{exits}}}
\newcommand{\var}[1]{{\tt #1}}
\newcommand{\sym}[1]{\ensuremath{\underline{#1}}}
\newcommand{\arrsym}[1]{\ensuremath{\lambda \vec{\chi}.\sym{#1}(\vec{\chi})}}
\newcommand{\matr}[1]{\ensuremath{ \left( \begin{smallmatrix} #1 \end{smallmatrix} \right) }}
\newcommand{\N}{\mathbb{N}}
\newcommand{\Pex}{\textsc{Pex}\xspace}
\newcommand{\Sage}{\textsc{Sage}\xspace}
\newcommand{\Cute}{\textsc{Cute}\xspace}
\newcommand{\Klee}{\textsc{Klee}\xspace}
\newcommand{\Exe}{\textsc{Exe}\xspace}
\newcommand{\APC}{\textsc{Apc}\xspace}
\newcommand{\aarg}[2]{#1~{\it //~#2}}
\newcommand{\aargm}[2]{\\ \qquad \aarg{#1}{#2}}
\newcommand{\aset}{\ensuremath{\longleftarrow}}
\newcommand{\note}[1]{}
\newcommand{\todo}[1]{}
\begin{document}

\title{Abstracting Path Conditions for Effective Symbolic Execution}
\author{Marek Trt\'{i}k\\trtik@fi.muni.cz}
\maketitle

\begin{abstract}
We present an algorithm for tests generation tools based on symbolic execution. The algorithm is supposed to help in situations, when a tool is repeatedly failing to cover some code by tests. The algorithm then provides the tool a necessary condition strongly narrowing space of program paths, which must be checked for reaching the uncovered code. We also discuss integration of the algorithm into the tools and we provide experimental results showing a potential of the algorithm to be valuable in the tools, when properly implemented there.
\end{abstract}

\section{Introduction} 

Symbolic execution serves as a basis in many successful tools for test generation, including  \Klee~\cite{CDE08}, \Exe~\cite{Cadar08}, \Pex~\cite{TdH08}, \Sage~\cite{GLM08:fuzzing}, or \Cute~\cite{SMA05}. These tools can relatively quickly find tests that cover majority of code close to program entry location. But then the ratio of covered code increases very slowly or not at all. The reason for that is a huge number of program paths to be explored. And it typically becomes very difficult to find a path to a given yet uncovered program location among all those paths. We speak about the \emph{path explosion problem}.

In this paper we introduce an algorithm for the tools mentioned above, which can be very useful in situations, when all attempts to cover a particular program location are repeatedly failing (so a tool stops making a progress). Given that program location, our algorithm computes a nontrivial \emph{necessary condition} over-approximating a set of program paths leading to that location. The intention is to have the over-approximation as small as possible, while still keeping the condition simple for SMT solvers. Having this condition a tool can quickly recover from the failing situation by exploring only paths satisfying it.

For a given program and a target location in it we construct the necessary condition by collecting constraints appearing along acyclic program paths from entry location to the target one, while summarizing effects of loops along them. It is well known that loops are the main source of the path explosion problem. Therefore, the key part of the algorithm is the computation of loop summaries.

The algorithm is supposed to be integrated into the tools as their another heuristic. Therefore, complexity of the integration also matters. We show that since the algorithm actually computes a formula, the integration is very straightforward. And we show on a small set of representative benchmarks that \Pex could benefit from the algorithm, when properly implemented there. The results can be extrapolated to the remaining tools, since they have the common theoretical background. 

\section{Program}

\paragraph{Program definition} A \emph{program} is a tuple $P = (V_P, E_P, l_s, l_t, \iota_P)$ such that $(V_P, E_P)$ is a connected oriented graph, vertices $V_P$ represent program locations and edges $E_P$ represent control flow between them. $P$ has a single \emph{start vertex} $l_s \in V_P$ and a single \emph{target vertex} $l_t \in V_P$, satisfying $l_s \ne l_t$. Each vertex has out-degree at most $2$. A vertex is a \emph{branching vertex} if its out-degree is exactly $2$. All other vertices, except $l_t$, have out-degree $1$. In-degree of $l_s$ and out-degree of $l_t$ are both $0$. Function $\iota_P: E_P \rightarrow \mathcal{I}$ assigns to each edge $e$ a single \emph{instruction} $\iota(e)$ from the set $\mathcal{I}$ of all instructions. Out-edges of any branching vertex are labelled with instructions $\texttt{assume}(\gamma)$ and $\texttt{assume}(\neg \gamma)$, where $\gamma$ is a boolean expression. Any other edge (i.e.~non-branching one) is labelled either with an assignment instruction $e_1 \aset e_2$, where $e_1, e_2$ are l-value and r-value expressions respectively, or with an assertion $\texttt{assert}(\psi)$ or an assumption $\texttt{assume}(\psi)$, for some boolean expression $\psi$, or \texttt{skip} instruction, which does nothing. We assume that expressions in program instructions have no side-effects. Without loss of generality we require that boolean expressions in \texttt{assume} and \texttt{assert} instructions contain no logical connective (i.e.~they are predicates). We further require that semantic of all instructions in $\mathcal{I}$ uses only linear integer arithmetic and arrays. Note that $P$ does not contain neither function calls nor pointer arithmetic.  We can supply a precondition $\varphi$ and a postcondition $\psi$ for $P$ by introducing new vertices $l_s, l_t$ and connecting them to old ones by two edges. And the labelling of the only out-edge of $l_s$ and of the only in-edge to $l_t$ are $\texttt{assume}(\varphi)$ and $\texttt{assert}(\psi)$ instructions respectively. When program $P$ is known form a context we often abbreviate $V, E, l_s, l_t$, and $\iota$.

\paragraph{Treating lists as arrays} Let us first consider an array \var{A}. If we define successor function $\textit{succ}$ on elements of \var{A}, the $k$-th element of \var{A}, commonly described as \texttt{A[$k$]}, can be identified by $\textit{succ}^k(\var{A})$. Note that $\textit{succ}^k(x)$ represents a composition of $k$ applications of $\textit{succ}$ starting on $x$. Let us now consider a list \var{L} with successor function $\textit{next}$. Then the $k$-th element of \var{L} can be identified by $\textit{next}^k(\var{L})$. Therefore, we can also use notation \texttt{L[$k$]} even for lists. Because of this equivalence in treating lists and arrays, we consider only arrays in the remainder of the text. It is also important to note that we do not provide shape analysis. Thus shape of lists and arrays are immutable. 

\paragraph{Assertions and assumptions in a program} Suppose that we execute symbolically a program. Let $\varphi$ be a path condition. Then $\texttt{assert}(\gamma)$ forces validity check of formula $\varphi \rightarrow \gamma$. The execution may continue only if the check succeeded. Note that the path condition is not updated. On the other hand $\texttt{assume}(\gamma)$ updates the path condition such that $\varphi \aset \varphi \wedge \gamma$ and the execution may continue, if updated $\varphi$ is satisfiable.

\paragraph{Program variables and expressions} Let $P$ be a program. Then $\mathcal{V}_P = \{ \var{a}, \var{A}, \var{b}, \var{B}, \ldots \}$ is a finite set of \emph{program variables}. We suppose each program variable has its type. We further define a countable set $\mathcal{E}_P$ of all syntactically correct expressions of $P$ over variables $\mathcal{V}_P$. We also suppose that each expression in $\mathcal{E}_P$ has its type. When $P$ is known from the context we write $\mathcal{V}$ and $\mathcal{E}$.

\paragraph{Path in $P$} A sequence $\pi = v_1 v_2 \cdots v_k$ is a \emph{path} in a program $P$, if for all $1 \leq i < k$ a pair $(v_i, v_{i+1})$ is an edge of $P$. We denote the empty path by $\varepsilon$. We identify the $i$-th vertex in $\pi$ as $\pi(i)$ and $|\pi|$ denotes the total number of vertices in $\pi$. But instead of $\pi(|\pi|)$ we write $\lst(\pi)$. For each path $\pi$ we define a set $\pref(\pi) = \{ \beta \mid \pi = \beta\gamma \}$ of all prefixes of $\pi$. A path $\pi$ from $l_s$ in a program $P$ is \emph{feasible} if there exists an input, such that execution of $P$ on the input follows $\pi$. Otherwise $\pi$ is \emph{infeasible}.

\paragraph{Backbone paths in $P$} Each acyclic path from $l_s$ to $l_t$ in $P$ is a \emph{backbone path} in $P$. Let $\pi_1, \pi_2$ be two backbone paths. Since each backbone path starts in $l_s$, there always exists a non-empty common prefix of $\pi_1, \pi_2$.

\paragraph{Reduction to a backbone path} Let $\pi$ be a path from $l_s$ to $l_t$ in a program $P$. We say that $\pi$ is \emph{reducible} to a backbone path $\pi'$, when a result of the following procedure applied to $\pi$ produces exactly the path $\pi'$: Let $k$ be the least index in $\pi$ such that vertex $\pi(k)$ occurs in $\pi$ once again (i.e.~at an index bigger then $k$). If no such $k$ exists, then we are done, as $\pi$ is a backbone path. Otherwise let $l$ be the greatest index such that $\pi(l) = \pi(k)$. If we denote $\pi(k)$ by $u$, then $\pi$ is of the form $\pi = \alpha u \beta u \gamma$, where $\beta \ne \varepsilon$ (since $l > k$). We set $\pi$ to $\alpha u \gamma$ and repeat the procedure.

Note that for each path $\pi$ from $l_s$ to $l_t$ there exists exactly one backbone path the path $\pi$ is reducible to.

\paragraph{Loop, loop entry vertex, and loop exit vertex of $P$} Let $P$ be a program and $\alpha v$ be an acyclic path from $l_s$ in $P$. Let $\mathcal{C}_v$ be the smallest subset of $V_P$ such that for each path $v v_1 \cdots v_n v$ of two or more vertices in $P$, where none of $v_1, \ldots, v_n \in V_P$ appears in the path $\alpha v$, all the vertices $v, v_1, \ldots, v_n \in \mathcal{C}_v$. If $\mathcal{C}_v \ne \emptyset$, then $\mathcal{C}_v$ is a \emph{loop at $v$ in $P$}, and $v$ is a \emph{loop entry vertex} of $P$. And a vertex $u \in V_P \smallsetminus \mathcal{C}_v$ is a \emph{loop exit from} $\mathcal{C}_v$, if there exists $w \in \mathcal{C}_v$ such that $(w,u) \in E_P$. We denote a set of all exit vertices from a loop $\mathcal{C}_v$ as $\exits(\mathcal{C}_v)$.

\paragraph{Program equivalence} Programs $P$ and $Q$ are \emph{equivalent}, if there exists a bijection between all paths from start to target vertex in $P$ and all paths from start to target vertex in $Q$ such that sequences of instructions along related paths are exactly the same, when ignoring \texttt{skip} instructions.

\paragraph{Normalized program} Program $P$ is \emph{normalized}, if each in-edge of each loop entry vertex of $P$ is labelled by an instruction \texttt{skip}. Given a program $P$, it is easy to compute a normalized program $P'$ which is equivalent with $P$: We start with $P'$ as a copy of $P$. For each loop entry vertex $v$ we create its copy $v'$ and then we replace every edge $(u,v)$ by a new edge $(u,v')$ with the same label. Finally we connect $v'$ with $v$ by a new edge $(v',v)$ labelled with \texttt{skip} instruction. 

\paragraph{} In the remainder of the text, whenever we speak about program we always assume it is normalized.

\paragraph{Program induced by a loop} Let $P$ be a program, $v$ be a loop entry vertex of $P$, and $\mathcal{C}$ be a loop at $v$. We can compute a program $P(\mathcal{C}, v)$, representing reachability in $\mathcal{C}$, as follows. We start with $P(\mathcal{C}, v)$ as a copy of $\mathcal{C}$. To get a program we only need to set right start and target vertices. In $P(\mathcal{C}, v)$ there must be a copy of $v$. Let $v'$ be the copy. Then we set $v'$ as the start vertex $l_s$ of $P'$. Further, we add a new vertex into $P(\mathcal{C}, v)$ and we set it as a target vertex $l_t$. Finally we replace each edge $(u,l_s)$ of $P(\mathcal{C}, v)$ by a new edge $(u,l_t)$ with the same label. We call the resulting program $P(\mathcal{C}, v)$ a \emph{program induced by a loop $\mathcal{C}$ at $v$}. \todo{toto by chtelo zpresnit}

\paragraph{Iterating $\bbt$} Let $P$ be a program induced by some loop $\mathcal{C}$ at a loop entry vertex of some bigger program. Further let $\bbt$ be a backbone tree of $P$. Since $\bbt$ represents all acyclic paths along $\mathcal{C}$, thus any execution looping in $\mathcal{C}$ actually iterates backbone paths in $\bbt$. Therefore, it is relevant to speak about \emph{iterating} $\bbt$. Similarly, when we consider a single backbone path $\pi$, then we can speak about \emph{iterating} $\pi$. Note that in case of presence of nested loops in $\mathcal{C}$ we can extend the definition recursively for iterating backbones trees of sub-induced programs, and so on.

\section{Over-approximation $\hat{\varphi}$ of Feasible Paths}

\paragraph{Partitioning feasible path} There can be a huge or even infinite number of feasible paths from $l_s$ to $l_t$ in $P$. However, we can partition them into a finite number of classes according to the following lemma.

\begin{lem}
Let $\Phi_P$ be a set of all feasible paths form $l_s$ to $l_t$ in $P$. If $\Phi_P \ne \emptyset$, then there exists a finite partitioning $\Pi_P$ of $\Phi_P$ such that for each partition class $A$ of $\Pi_P$, there exists a unique backbone path $\pi_A$ in $P$ such that each path $\pi \in A$ is reducible to $\pi_A$.
\begin{proof}
Obvious.
\end{proof}
\end{lem}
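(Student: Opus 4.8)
The plan is to use the already-established fact (stated in the paragraph ``Reduction to a backbone path'') that for each path $\pi$ from $l_s$ to $l_t$ there exists \emph{exactly one} backbone path to which $\pi$ is reducible. This gives a well-defined map $r : \Phi_P \rightarrow \bbt_P$, where $\bbt_P$ denotes the (finite) set of backbone paths in $P$, sending $\pi$ to the backbone path $r(\pi)$ it reduces to. I would then define $\Pi_P$ to be the collection of nonempty fibers of $r$, i.e.\ the partition of $\Phi_P$ induced by the equivalence relation $\pi_1 \sim \pi_2 \iff r(\pi_1) = r(\pi_2)$.

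The verification then splits into three routine checks. First, $\Pi_P$ is genuinely a partition of $\Phi_P$: the classes are nonempty by construction, they are pairwise disjoint since they are fibers of a function, and their union is all of $\Phi_P$ since $r$ is total on $\Phi_P$. Second, $\Pi_P$ is finite: each class is the image under $r^{-1}$ of a single backbone path, and backbone paths are acyclic paths in the finite graph $(V_P, E_P)$, hence finite in number; so $|\Pi_P| \leq |\bbt_P| < \infty$. Third, the defining property holds: for a class $A = r^{-1}(\pi')$ with $\pi'$ a backbone path, set $\pi_A := \pi'$; then every $\pi \in A$ is by definition reducible to $\pi_A$, and $\pi_A$ is the unique such backbone path for $\pi$ (again by the uniqueness noted after ``Reduction to a backbone path''), so in particular $\pi_A$ is unique for the class $A$.

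Two small caveats worth stating explicitly in the write-up: one must note that $\bbt_P$ is nonempty whenever $\Phi_P \neq \emptyset$ — indeed any feasible path from $l_s$ to $l_t$ reduces to some backbone path, so such a path exists — which is what lets us talk about the fibers at all; and one should observe that the reduction procedure does terminate (each step strictly decreases $|\pi|$, since $\beta \neq \varepsilon$) and ends in an acyclic $l_s$-to-$l_t$ path, i.e.\ a backbone path, so $r$ is indeed well-defined into $\bbt_P$. I do not expect any real obstacle here: the content of the lemma is essentially repackaging the uniqueness statement already proved in the ``Reduction to a backbone path'' paragraph, which is why the authors simply write ``Obvious.'' The only thing to be a little careful about is not to conflate ``partition of $\Phi_P$'' with ``set of all backbone paths'' — the map $r$ need not be surjective, since some backbone paths may themselves be infeasible and then have empty fiber, so $\Pi_P$ is indexed by the feasible backbone paths, or more precisely by those backbone paths that arise as $r(\pi)$ for some feasible $\pi$.
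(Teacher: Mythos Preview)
Your proposal is correct and is precisely the natural elaboration of the paper's ``Obvious'': the uniqueness of the reduction (stated just before the lemma) gives a well-defined map from $\Phi_P$ to the finite set of backbone paths, and the partition is its set of nonempty fibers. One minor notational point: in this paper $\bbt_P$ denotes the backbone \emph{tree}, not the set of backbone paths (that set is $B_P$ or $B_\bbt$), so you should rename your target set accordingly.
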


\begin{cor}
Let $P$ be a program. Then $|\Pi_P| \leq |B_P|$.
\end{cor}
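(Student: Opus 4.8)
The plan is to leverage the previous lemma directly. The lemma establishes a partitioning $\Pi_P$ of the feasible paths in which every partition class $A$ is tied to a \emph{unique} backbone path $\pi_A$ to which all members of $A$ reduce. The corollary's claim $|\Pi_P| \leq |B_P|$ then amounts to observing that the assignment $A \mapsto \pi_A$ is an injection from $\Pi_P$ into $B_P$, the set of all backbone paths of $P$.

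First I would handle the degenerate case: if $\Phi_P = \emptyset$, then $\Pi_P = \emptyset$ (there is nothing to partition), so $|\Pi_P| = 0 \leq |B_P|$ trivially, and we may assume $\Phi_P \ne \emptyset$ so that the lemma applies and furnishes the map $A \mapsto \pi_A$. Next I would argue injectivity: suppose $A$ and $A'$ are partition classes with $\pi_A = \pi_{A'}$. Pick any $\pi \in A$; by the lemma $\pi$ reduces to $\pi_A = \pi_{A'}$. But recall the remark in the excerpt that \emph{for each path $\pi$ from $l_s$ to $l_t$ there exists exactly one backbone path the path $\pi$ is reducible to}. So the backbone path a given feasible path reduces to is well-defined, and membership of $\pi$ in a class $A$ forces $\pi$ to reduce to $\pi_A$; conversely if $\pi$ reduces to some backbone path $\pi_0$, then $\pi$ cannot lie in any class whose associated backbone path differs from $\pi_0$. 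Hence $A$ and $A'$ consist of exactly the feasible paths reducing to the common backbone path, which means $A = A'$ since $\Pi_P$ is a partition (classes are determined by their elements). Therefore $A \mapsto \pi_A$ is injective, and since its codomain is $B_P$, we get $|\Pi_P| \leq |B_P|$.

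The only genuinely delicate point is making sure the map is truly well-defined and injective rather than merely ``each class has some backbone path'': one must invoke the uniqueness clauses — both the uniqueness of $\pi_A$ per class (from the lemma) and the uniqueness of the backbone path any individual path reduces to (from the reduction paragraph) — to rule out two distinct classes sharing a backbone path. I expect this bookkeeping to be the main, though still routine, obstacle; there are no quantitative estimates to carry out. A one-line proof suffices:

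\begin{proof}
If $\Phi_P = \emptyset$ then $\Pi_P = \emptyset$ and the claim is trivial, so assume $\Phi_P \ne \emptyset$ and let $\Pi_P$ be a partitioning as in the previous lemma, with $\pi_A \in B_P$ the unique backbone path associated to each class $A$. Since every path from $l_s$ to $l_t$ is reducible to exactly one backbone path, distinct classes $A \ne A'$ of the partition cannot satisfy $\pi_A = \pi_{A'}$: otherwise both $A$ and $A'$ would coincide with the set of feasible paths reducible to that common backbone path. Hence $A \mapsto \pi_A$ is an injection $\Pi_P \hookrightarrow B_P$, so $|\Pi_P| \leq |B_P|$.
\end{proof}
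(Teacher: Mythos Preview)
Your proof is correct and matches the intended reasoning: the paper states the corollary without proof, treating it as an immediate consequence of the lemma via exactly the injection $A \mapsto \pi_A$ you describe.
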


\paragraph{Partitioning of path conditions} A basic property of symbolic execution says that when symbolic execution on $P$ terminates, then each path condition uniquely identifies one feasible path in $P$ and vice versa. This bijection between feasible paths and related path conditions implies that the partitioning $\Pi_P$ actually also represents partitioning of related path conditions. Since both partitions are equivalent we do not distinguish between them.

\paragraph{Over-approximating $A \in \Pi_P$} Any formula $\hat{\varphi}_A$ is called an \emph{over-approximation} of a partition class $A \in \Pi_P$ if for each path condition $\varphi_A \in A$ a formula $\varphi_A \rightarrow \hat{\varphi}_A$ is valid. Note that $\hat{\varphi}_A \equiv \true$ is an abstraction of $A$.

\paragraph{Over-approximating $\Phi_P$} Any formula $\hat{\varphi}_P$ is called an \emph{over-approximation} of non-empty $\Phi_P$ if for each path condition $\varphi \in \Phi_P$ a formula $\varphi \rightarrow \hat{\varphi}_P$ is valid. Note that $\hat{\varphi}_P \equiv \true$ is an abstraction of $\Phi_P$. We write $\hat{\varphi}$, when $P$ is known from a context.

\paragraph{} We compute $\hat{\varphi}_P$ as a disjunction $\hat{\varphi}_{A_1} \vee \ldots \vee \hat{\varphi}_{A_n}$, where $\hat{\varphi}_{A_1}, \ldots, \hat{\varphi}_{A_1}$ are over-approximations of all partition classes $A_1, \ldots, A_n$ of $\Pi_P$ respectively.

\section{Computation of $\hat{\varphi}$}
\label{sec:Algorithm}

\paragraph{Overview} Let $P$ be a program. If $P$ does not contain a loop and target location is reachable, then each partition class $A \in \Pi_P$ contains a single path, which is a backbone one. Therefore, if $\pi \in B_P$ is the only backbone path in a class $A$, then we can compute an over-approximation $\hat{\varphi}_A$ of $A$ as follows. We symbolically execute $\pi$. So we receive a path condition $\varphi$ and symbolic state $\theta$ from the execution. Since $\varphi \rightarrow \varphi$ is valid, we conclude that $\hat{\varphi}_A \equiv \varphi$.

Let us now consider a case, when $P$ contains a single loop at some loop entry vertex $w$, the target location is reachable, and $A \in \Pi_P$ is a partition class such that each $\alpha \in A$ is reducible to a backbone path $\pi = v_1 \cdots v_j w v_{j+1} \cdots v_n \in B_P$. We can compute an over-approximation $\hat{\varphi}_A$ of $A$ as follows. The class $A$ may contain even infinitely many feasible paths $\alpha_1, \alpha_2, \ldots$. But each path $\alpha_i$ is of a form $\alpha_i  = v_1 \cdots v_j \beta_i v_{j+1} \cdots v_n$, where $\beta_i \ne \varepsilon$ represents a different cyclic path along the loop from $w$ back to $w$. So the paths $\beta_i$ differ in number of iterations along the loop, and in interleaving of paths along the loop in separate iterations. Let us observe symbolic execution of paths $\alpha_i$. The execution proceeds exactly the same for a common prefix $v_1 \cdots v_j$. But then we reach the loop entry vertex $w$. Then the symbolic execution proceeds differently for all paths $\beta_i$. As a result we can get even infinitely many different path conditions and symbolic states. To prevent this, we compute an over-approximation of all those symbolic executions along the loop, so we get a single over-approximated path condition and a single over-approximated symbolic state. We compute the over-approximation as follows.

We build an induced program $P'$ of the loop at $w$ and we recursively call symbolic execution of its backbone paths as we do here for $P$. For each backbone path of $P'$ we receive a single path condition and single symbolic state. The path conditions and single symbolic states represent all possibilities, how to symbolically execute the loop once from $w$ back to $w$. But paths $\beta_i$ may go along the loop arbitrary number of times with arbitrary interleaving of paths through the loop. To maintain arbitrary iterations of backbone paths, we express values of all program variables of $P'$ as functions of number of iterations of backbone paths of $P'$. Then, to handle arbitrary interleaving of backbone paths in different iterations along the loop, we ``merge'' symbolic states of different backbone paths (separately and independently for each variable) into a single resulting symbolic state. Then we insert values in the resulting state into computed path conditions. We use them to build a formula stating that symbolic execution will keep looping in the loop, until proper number of iterations of individual backbone paths of $P'$ are met. The formula is a single resulting path condition. Both the resulting formula and symbolic state over-approximate sets of path conditions and symbolic states of the paths $\beta_i$ respectively, because their computation typically involve some lose of precision. We discuss in details the computation of the resulting formula and symbolic state in separate sections later. Only note that when $P'$ also contains some loops, then we resolve the situation by another recursive calls in all loop entry vertices in backbone paths of $P'$. This is the same process as we did at vertex $w$ of the backbone paths $\alpha_i$.
 
Let us now suppose that we already have the over-approximation, i.e~a single over-approximated path condition and a single over-approximated symbolic state. We can use them to proceed to symbolic execution of the common remainder $v_{j+1} \cdots v_n$ of paths $\alpha_i$. Obviously, we receive a single over-approximated path condition $\varphi$ and a single over-approximated symbolic state $\theta$ at the end. We show later in the section, that such a computed $\varphi$ is indeed an over-approximation $A$, i.e~$\hat{\varphi}_A \equiv \varphi$.

A case, when the backbone path $\pi$ contains more then one loop entry vertex (i.e.~$\pi$ goes through more then one loop), is now simple. Symbolic executions of the common parts of paths $\alpha_i$ (i.e.~those between loop entry vertices) are the same for all the paths $\alpha_i$. Whenever we reach a loop entry vertex, we call the over-approximation procedure to get a single over-approximated path condition and a single over-approximated symbolic state. At the end we again receive a single over-approximated path condition $\varphi$ representing $\hat{\varphi}_A$.

When the partitioning $\Pi_P$ has $n$ classes $A_1, \ldots, A_n$, then we apply the described procedure $n$ times, once for each class. We receive path conditions $\varphi_1, \ldots, \varphi_n$. Then a formula $\varphi_1 \vee \cdots \vee \varphi_n$ is an over-approximation $\hat{\varphi}$ of $\Phi_P$, since each $\varphi_i$ is an over-approximation of $A_i$.

In case the target location is not reachable, then none of the computed formulae $\varphi_1, \ldots, \varphi_n$ is satisfiable. Therefore, $\varphi_1 \vee \cdots \vee \varphi_n$ is unsatisfiable as well.

\paragraph{} It remains to discuss individual parts of the presented algorithm, in details. First of all, the algorithm is based on symbolic execution. Therefore, we need a formal definition of a symbolic expressions and symbolic state. We provide the definitions in Sections~\ref{sec:SymbolExpr} and~\ref{sec:SymbolState}. Since we symbolically execute backbone paths of a program, we provide their compact representation in a tree structure, called a backbone tree. The definition of the tree and its construction can be found in Section~\ref{sec:BBT}. The key property of the algorithm is a collection of  path conditions computed along backbone path. Since we work intensively with their structure we decompose their structure along vertices of a backbone tree. Therefore we discuss definition and handling path conditions separately in Section~\ref{sec:PathCond}. Symbolic execution of a backbone tree is then depicted in Section~\ref{sec:ExecBBT}. The key part of the algorithm -- the computation of an over-approximation of a loop at an entry vertex -- is described in details in Section~\ref{sec:OverApproxLoop}. And finally, an algorithm building the formula $\hat{\varphi}$ from results of the symbolic execution of a backbone tree is described in Section~\ref{sec:BuildResult}.

\subsection{Symbolic Expressions}
\label{sec:SymbolExpr}

\paragraph{Symbolic expressions} Let $P$ be a program, $\mathsf{V}$ be a set of variable names such that $\mathcal{V}_P \subseteq \mathsf{V}$, and $T_P$ be a first order theory that captures the constants of $\mathcal{I}_P$ (like \texttt{0,1,true}, etc.), the functions of $\mathcal{I}_P$ (like \texttt{+,-}, etc.), predicates of $\mathcal{I}_P$ (like \texttt{<,=}, etc.), and it also is a combination of several theories including theory of equality and uninterpreted functions, and theory of integers. We extend $T_P$ as follows
\begin{itemize}
	\item[(1)] For each program variable $\var{a} \in \mathsf{V}$ of a scalar type $\tau$ we introduce a new constant symbol $\sym{a}$ ranging over data domain of the type $\tau$.
	\item[(2)] For each program variable $\var{A} \in \mathsf{V}$ of an array type $\texttt{int}^n \rightarrow \tau$ we introduce a new function symbol $\sym{A}$ identifying a function from $n$-tuples of integers into data domain of the type $\tau$.
	\item[(3)] For all data types of $P$ we extend their data domains such that they have a special new value $\bot$ in common. We also introduce constant symbol $\star$, which is supposed to be always interpreted to $\bot$.
	\item[(4)] For all three terms $t, t', t''$ of extended $T_P$ such that $t$ is of type \texttt{bool}, and $t', t''$ has a same type $\tau$, we introduce term $\ite(t,t',t'')$ of type $\tau$, whose value is $t'$ if $t$ is $\true$, and $t''$ otherwise.
	\item[(5)] If $t$ is a term of the extended $T_P$ containing symbol $\star$ in it, then we require that $t = \star$ is a valid formula in the theory. If $p$ is a predicate symbol of the extended $T_P$ containing symbol $\star$ as one of its arguments, then we require that $p \leftrightarrow A$ is a valid formula in the theory, where $A$ is a fresh propositional variable (in other words, $p$ can be replaced by a fresh propositional variable).
\end{itemize}
A set $\mathcal{S}_P(\mathsf{V})$ of all terms and formulae of extended $T_P$ is a \emph{set of symbolic expressions of a program} $P$. Each $e \in \mathcal{S}_P(\mathsf{V})$ is a \emph{symbolic expression of a program} $P$. Note that each such $e$ has its type (i.e.~if $e$ is a term, then type of $e$ is a type of an element of a data domain defined by any interpretation of $T_P$, and if $e$ is a formula, then type of $e$ is \texttt{bool}). When program $P$ is known from a context, then we write $\mathcal{S}(\mathsf{V})$. And if we do not care what superset of $\mathcal{V}_P$ the set $\mathsf{V}$ exactly is, then we omit it as well. So we write $\mathcal{S}_P$  or even $\mathcal{S}$ (when $P$ is known from a context).

\paragraph{Basic symbols and variables of basic symbols} Let $P$ be a program and $\mathsf{V}$ be a set of variable names such that $\mathcal{V}_P \subseteq \mathsf{V}$. Then $\Sigma(\mathcal{S}_P(\mathsf{V})) = \{ \sym{a} \mid \var{a} \in \mathsf{V} \}$ is a set of \emph{basic symbols} of $\mathcal{S}_P(\mathsf{V})$ and $\mathcal{V}(\mathcal{S}_P(\mathsf{V})) = \mathsf{V}$ is a set of \emph{variables of basic symbols} of $\mathcal{S}_P(\mathsf{V})$.

\paragraph{Substitution into symbolic expression} Let $h, e, e' \in \mathcal{S}_P$ be symbolic expressions of $P$ of the same type. Then $h[e/e']$ is such a symbolic expression $h$, where all occurrences of $e$ in $h$ were replaced by the expression $e'$. An expression $h[e_1/e'_1, \ldots, e_n/e'_n]$ denotes simultaneous substitution of all pairs $e_i / e'_i$ in $h$.

\paragraph{Expression equivalence} Let $e, e'\in \mathcal{S}_P$ be a two symbolic expressions of $P$. Then $e$ is \emph{equal} to $e'$, if (1) $e, e'$ are both terms of extended $T_P$ and $e = e'$ is a valid formula in extended $T_P$, or (2) $e, e'$ are both formulae of extended $T_P$ and $e \leftrightarrow e'$ is a valid formula in extended $T_P$. \todo{doplnit par prikladu na ekvivalentni vyrazy: $\sym{a} \equiv \sym{a} + 1 - 1$, atd.}

\paragraph{Special variable names of $T_P$} Let $P$ be a program. We distinguish the following sets: (1) $\mathcal{K} = \{ \kappa_i \mid i \in \N\}$ is a set of \emph{path counters}. Each $\kappa_i$ is a variable of $T_P$ and it ranges over $\N_0$. (2) $\mathcal{T} = \{ \tau_i \mid i \in \N\}$ is a set of \emph{parameters}. Each $\tau_i$ is a variable of $T_P$ and it ranges over $\N_0$. (3) $\mathcal{X} = \{ \chi_i \mid i \in \N\}$ is a set of \emph{argument placeholders}. Each $\chi_i$ is a variable of $T_P$ and it ranges over integers. We assume all the sets are disjunctive. We further use the following notation. Let $e \in \mathcal{S}_P$ be a symbolic expression. Then we denote by $\mathcal{K}(e)$ a set of all path counters appearing in $e$, and by $\mathcal{T}(e)$ a set of all the parameters appearing in $e$. 

\paragraph{$\tau$-substitution} Let $e, h\in \mathcal{S}_P$ be two symbolic expressions of $P$ and $\{ \tau_1, \ldots, \tau_n \} \subseteq \mathcal{T}$ be all parameters contained in them. And let $g \in \mathcal{S}_P$ be any symbolic expression containing none of the parameters $\{ \tau_1, \ldots, \tau_n \}$. if both $h$ and $g$ are of the same integer type, then $e\{h/g\}$ is a symbolic expression computed from $e$ as follows. Let $e'$ be a symbolic expression equal to $e$ with the same parameters and same number of their occurrences as in $e$ and with a maximal number of occurrences of $h$ as subexpressions. Then $e\{h/g\} = e'[h/g][\tau_1/\star, \ldots, \tau_n/\star]$. \todo{doplnit par prikladu: $(\sym{a} + \tau)\{\tau + 2 / \chi \} = (\sym{a} + \tau + 2 - 2)[\tau + 2 / \chi][\tau/\star] = \sym{a} + \chi - 2$, atd.} We naturally extend the subexpression substitution to vector expressions: $e\{\vec{h}/\vec{g}\}$ is an expression $e\{h_1/g_1\}\ldots\{h_n/g_n\}$. Note that we require that vectors $\vec{h}$ and $\vec{g}$ have the same dimension.

\paragraph{Comparison of vectors of symbolic expressions} Let $\vec{u} = (u_1, \ldots, u_n)$ and $\vec{v} = (v_1, \ldots, v_n)$ be two vectors of some symbolic expressions $u_1, \ldots, u_n$ and $v_1, \ldots, v_n$ respectively. Then we use the following notation
\begin{align*}
\vec{u} \leq \vec{v} & \equiv u_1 \leq v_1 \wedge \ldots \wedge u_n \leq v_n \\
\vec{u} < \vec{v} & \equiv \vec{u} \leq \vec{v} \wedge \sum_{i = 1}^n{u_i} < \sum_{i = 1}^n{v_i}
\end{align*}

\subsection{Symbolic State}
\label{sec:SymbolState}

\paragraph{Symbolic noname functions} Let $P$ be a program, $\mathsf{V}$ be a set of variable names such that $\mathcal{V}_P \subseteq \mathsf{V}$. Then $\mathcal{S}_{\lambda P}(\mathsf{V}) = \{ \lambda \chi_1, \ldots, \chi_n~.~e \mid e \in \mathcal{S}_P(\mathsf{V}) \wedge n \in \N \wedge \chi_1, \ldots, \chi_n \in \mathcal{X} \}$ is a set of \emph{symbolic noname functions of $\mathcal{S}_P(\mathsf{V})$}. Let $\lambda \chi_1, \ldots, \chi_n~.~e \in \mathcal{S}_{\lambda P}(\mathsf{V})$ and $e, e_1, \ldots, e_n \in \mathcal{S}_P(\mathsf{V})$. Then $(\lambda \chi_1, \ldots, \chi_n~.~e)(e_1, \ldots, e_n) \in \mathcal{S}_P(\mathsf{V})$ is a symbolic expression $e[\chi_1/e_1, \ldots, \chi_n/e_n]$. When program $P$ is known from a context, then we write $\mathcal{S}_\lambda(\mathsf{V})$. And if we do not care what superset of $\mathcal{V}_P$ the set $\mathsf{V}$ exactly is, then we omit it as well. So we write $\mathcal{S}_{\lambda P}$  or even $\mathcal{S}_\lambda$ (when $P$ is known from a context).

\paragraph{Symbolic state} Let $P$ be a program, $\mathsf{V}$ be a set of variable names such that $\mathcal{V}_P \subseteq \mathsf{V}$. A function $\theta : \mathsf{V} \rightarrow \mathcal{S}_P(\mathsf{V}) \cup \mathcal{S}_{\lambda P}(\mathsf{V})$ is a \emph{symbolic state of} $P$, if it satisfies the following
\begin{itemize}
	\item If $\var{a} \in \mathsf{V}$ is of a scalar type $\tau$, then $\theta(\var{a}) \in \mathcal{S}_P(\mathsf{V})$ is also of a scalar type $\tau$.
	\item If $\var{A} \in \mathsf{V}$ is of an array type $\texttt{int}^n \rightarrow \tau$, then $\theta(\var{A}) \in \mathcal{S}_{\lambda P}(\mathsf{V})$ is also of a type $\texttt{int}^n \rightarrow \tau$, and it is of a form $\lambda \chi_1, \ldots, \chi_n~.~e$, for some $e \in \mathcal{S}_P(\mathsf{V})$ of type $\tau$. We often use an abbreviated vector notation $\lambda \vec{\chi}~.~e$.
\end{itemize}

\paragraph{Symbolic states} Let $P$ be a program, $\mathcal{S}_P$ be a set of symbolic expressions and $\mathcal{S}_{\lambda P}$ be a set of symbolic noname functions of $\mathcal{S}_P$. Then we denote by $\mathcal{M}(\mathcal{S}_P) = \{\theta \mid \theta : \mathcal{V}(\mathcal{S}_P) \rightarrow \mathcal{S}_P \cup \mathcal{S}_{\lambda P} \}$ a set of \emph{symbolic states of $P$}.

\paragraph{The most general symbolic state} Let $P$ be a program. We distinguish a special symbolic state $\theta_G$ of $P$. It has the following properties:
\begin{itemize}
  \item For each $\var{a} \in \mathcal{V}(\mathcal{S}_P)$ of a scalar type we have $\theta_G( \var{a} ) = \sym{a}$.
  \item For each $\var{A} \in \mathcal{V}(\mathcal{S}_P)$ of an array type we have $\theta_G( \var{A} ) = \lambda \vec{\chi}~.~\sym{A}(\vec{\chi})$.
\end{itemize}

\paragraph{The most unknown symbolic state} Let $P$ be a program. We distinguish a special symbolic state $\theta_\star$ of $P$. It has the following properties:
\begin{itemize}
  \item For each $\var{a} \in \mathcal{V}(\mathcal{S}_P)$ of a scalar type we have $\theta_\star( \var{a} ) = \star$.
  \item For each $\var{A} \in \mathcal{V}(\mathcal{S}_P)$ of an array type we have $\theta_\star( \var{A} ) = \lambda \vec{\chi}~.~\star$.
\end{itemize}

\paragraph{Substitution into symbolic state} Let $\theta$ be a symbolic state of $P$ and $e, e'$ some symbolic expressions of $P$ of the same type. Then $\theta[e/e']$ is a symbolic state of $P$ such that for each variable $\var{a} \in \mathcal{V}(\mathcal{S}_P)$ we have $\theta[e/e'](\var{a}) = \theta(\var{a})[e/e']$. A symbolic state $\theta[e_1/e'_1, \ldots, e_n/e'_n]$ denotes simultaneous substitution of all pairs $e_i / e'_i$ into $\theta$.

\paragraph{Change in symbolic state} Let $\theta$ be a symbolic state of $P$, $\var{a} \in \mathcal{V}(\mathcal{S}_P)$ be a program variable of a scalar type $\tau$, $\var{A} \in \mathcal{V}(\mathcal{S}_P)$ be a program variable of an array type $\texttt{int}^n \rightarrow \tau$, and $e$ be a symbolic expression of $P$ of the type $\tau$. Then $\theta[\var{a} \rightarrow e]$ is a symbolic state equal to $\theta$ except for variable \var{a}, where $\theta[\var{a} \rightarrow e](\var{a}) = e$, and $\theta[\var{A} \rightarrow e]$ is a symbolic state equal to $\theta$ except for variable \var{A}, where $\theta[\var{A} \rightarrow e](\var{A}) = \lambda \chi_1, \ldots, \chi_n~.~e$.

\paragraph{Extending symbolic state to program expressions} Let $\theta$ be a symbolic state of $P$ and $e \in \mathcal{E}_P$ be a program expression. Then $\theta(e) \in \mathcal{S}_P$ is a symbolic expression received from $e$ such that (1) Each occurrence of each variable \var{a} appearing in $e$ is replaced by symbolic expression $\theta(\var{a})$, where we assume that all substitutions are applied simultaneously. (2) We replace all constant, operator and function symbols appearing in $e$ by their counterparts in $T_P$.

\paragraph{Substituting symbolic state into symbolic expression} Let $e \in \mathcal{S}_P$ and $\theta$ be a symbolic state of $P$. Then $e\theta \in \mathcal{S}_P$ is a symbolic expression received from $e$ such that each occurrence of each basic symbol $\sym{a} \in \Sigma(S_P)$ appearing in $e$ is replaced by a symbolic expression $\theta(\var{a})$. We assume that all the substitutions are applied simultaneously.

\paragraph{Merge of symbolic states} Let $\theta$ and $\theta'$ be two symbolic states of $P$. Then $\theta \theta'$ denotes a symbolic state of $P$ such that for each program variable \var{a} we have $(\theta \theta')(\var{a}) = (\theta(\var{a})) \theta'$.

\subsection{Backbone Tree}
\label{sec:BBT}

Any two backbone paths of a program $P$ always have some non-empty prefix in common. Therefore, we effectively store the backbone paths in a tree defined as follows.

\paragraph{Backbone tree of $P$} Let $V_\bbt$ be a set of all non-empty prefixes of backbone paths of a program $P$. Let $E_\bbt \subseteq V_\bbt \times V_\bbt$ be a set of all pairs $(\alpha,\alpha v)$. Then we call a rooted tree $\bbt_P=(V_\bbt,E_\bbt,l_s)$, where $l_s$ is the root, a \emph{backbone tree} of $P$. \note{Honza: Definici $\bbt$ by slo zredukovat na pouhou $V_\bbt$ (tj. $\bbt$ je $V_\bbt$), protoze vnitrni struktura vrcholu ty hrany vlastne presne definuje} Note that vertices of $\bbt_P$ identify acyclic paths from $l_s$ in $P$. We denote the set of all leaf vertices of $\bbt_P$ by $B_\bbt$. Note that $B_\bbt$ is actually a set of all the backbone paths of $P$. When a program is known from a context or it is not important, then we simply write $V, E, B$ and $\bbt$. Algorithm \ref{alg:buildBBT} computes $\bbt$ for a program $P$. \todo{doplnit slovni popis toho algoritmu a spocitat jeho  casovou a prostorovou slozitost}

\begin{algorithm}[!htb]
\newcommand{\buildBBT}{\texttt{buildBackboneTree($P$)}}
\caption{\buildBBT\label{alg:buildBBT}}
\KwIn{ \aarg{$P$}{a normalized program} }
\KwOut{ \aarg{$\bbt$}{a backbone tree for $P$} }
\BlankLine
$V_\bbt \aset \{ l_s \}$\;
$E_\bbt \aset \emptyset$\;
$D \aset \emptyset$\;
\While{there is a leaf $\alpha u \in V_\bbt \smallsetminus D$ such that $u \ne l_t$}{
  \ForEach{vertex $v$ such that $(u,v) \in E_P$}{
    \If{$\exists k$ such that $v = (\alpha u)(k)$}{
      $D \aset D \cup \{ \alpha u \}$
    } \Else {
      $V_\bbt \aset V_\bbt \cup \{ \alpha u v \}$\;
      $E_\bbt \aset E_\bbt \cup \{ (\alpha u, \alpha u v) \}$\;
    }
  }
}
\While{$D \ne \emptyset$}{
  $\alpha u$ \aset any element of $D$\;
  $D \aset D \smallsetminus \{ \alpha u \}$\;
  \If{$\alpha u$ is a leaf vertex in current $\bbt$}{
    $D \aset D \cup \{ \alpha \}$\;
    $E_\bbt \aset E_\bbt \smallsetminus \{ (\alpha, \alpha u) \}$\;
    $V_\bbt \aset V_\bbt \smallsetminus \{ \alpha u \}$\;
  }
}
\Return{$\bbt$}\;
\end{algorithm}

\paragraph{Loop entry vertex, and loop} Let $\bbt$ be a backbone tree of a program $P$. Then each vertex $\alpha v \in V(\bbt)$ such that $v$ is an entry vertex of $P$ is a \emph{loop entry vertex of $\bbt$}. Let $\mathcal{C}$ be a loop at $v$. Then $\mathcal{C}$ is also a \emph{loop at $\alpha v$}.

\paragraph{Counting backbone paths of induced programs} Let $\bbt$ be a backbone tree of a program $P$. We define a function $\eta_\bbt : V_\bbt \rightarrow \N$ as follows. Let $\alpha v \in V_\bbt$ be a vertex of $\bbt$. If $\alpha v$ is not a loop entry vertex $\bbt$, then $\eta_\bbt(\alpha v) = 0$. Otherwise, let $\mathcal{C}$ be a loop at a loop entry vertex $v$, and $B_{\bbt'}$ be a set of all leaf vertices of a backbone tree $\bbt'$ of an induced program $P(\mathcal{C},v)$. Then $\eta_\bbt(\alpha v) = |B_{\bbt'}|$. When $\bbt$ is known from a context we write $\eta$.

\subsection{Path Condition}
\label{sec:PathCond}

\paragraph{Function $\Psi$} Let $\bbt$ be a backbone tree of a program $P$. In Section~\ref{sec:ExecBBT} we show, how to execute $\bbt$ symbolically. In our analysis path conditions from these executions play a crucial role. Since we work with them intensively, and we examine their internal structure, it is not effective to represent them as a whole formulae (as typical in original symbolic execution). We rather attach their parts to vertices of $\bbt$. This can be explain as follows. Let $\pi \in B_\bbt$ be a backbone path of $P$. When executing $\pi$ symbolically, we execute instructions occurring along the path $\pi$. Execution of some instructions may cause extension of current path condition $\varphi$ by some formula $\gamma$ such that extended path condition is of a form $\varphi \wedge \gamma$. Other instruction only change symbolic state, but keep path condition $\varphi$ unchanged. To unify the approach for all instructions, we want that also these instructions extend path condition $\varphi$ by some formula $\gamma$. If the formula $\gamma \equiv \true$ for these instructions, then we are done. Now we can assign to each vertex along $\pi$ a formula $\gamma$ received from executing an instruction. More precisely, path condition is initially set to $\true$. Therefore, we assign formula $\true$ to the first vertex of $\pi$. Now suppose that symbolic execution reached a vertex $\alpha u$ of $\pi$ and $\alpha u v$ is next one in $\pi$. Then execution of an instruction $\iota((u,v))$ produce a formula $\gamma$ which we attach to the vertex $\alpha u v$. It is important to note, that we can always reconstruct actual path condition $\varphi$ in each step of symbolic execution from the formulae attached to vertices along currently processed path such that we return conjunction of those formulae.

The situation is different in loop entry vertices of $\bbt$. There we enter a loop and we call the over-approximation algorithm. The result of the call is a single (over-approximated) path condition and single(over-approximated) symbolic state. We assign the resulting formula to the loop vertex. Note that there is always place for the formula, since we assume only normalized programs, so all in-edges to loop entry vertices are labelled with \texttt{skip} instruction.

So all parts of path conditions can indeed be assigned to vertices of $\bbt$. We formally introduce a function $\Psi_\bbt: V_\bbt \rightarrow \mathcal{S}_P$ assigning each vertex of $\bbt$ an symbolic expression of type \texttt{bool}. We build a content of the function during symbolic executions of backbone paths of $\bbt$. We discuss a details of the execution in Section~\ref{sec:ExecBBT}. But since $\Psi_\bbt$ contains formula from which we construct the path conditions, therefore this function is a key property of whole algorithm. When a backbone tree is known from a context we simply write $\Psi$.

\paragraph{Path counters at loop entry vertex} The key part of the algorithm is computation of an over-approximation of a loop at some loop entry vertex. We already know that we compute the over-approximation such that we express values of program variables as functions of how many times backbone paths of induced program of the loop are executed. For this purpose we introduce for each such a backbone path a single and unique path counter. A path counter is a variable of a theory $T_P$ of an integer type. We have already distinguish the infinite set $\mathcal{K}$ of variable symbols for the path counters.

For each loop entry vertex of $\bbt$ we know exactly how many fresh path counters we need to introduce. The count is equal to a number of backbone paths of an induced program at the loop entry vertex. We use the following naming convention for identifying path counters introduced at a loop entry vertices: Let $\alpha$ be a loop entry vertex of $\bbt$. Then we identify the fresh paths counters introduced at $\alpha$ as $\kappa_{\alpha,1}, \ldots, \kappa_{\alpha, \eta(\alpha)}$. We assume, that order of backbone paths in induced program is fixed to provide unique mapping between the path counters and related backbone paths.

\paragraph{Path condition part at vertex of $\bbt$} Let $\alpha \in V_\bbt$ be a vertex of $\bbt$ and $\vec{\kappa}_\alpha = (\kappa_{\alpha,1}, \ldots, \kappa_{\alpha,\eta(\alpha)} )^T$ identify all the path counters introduced at $\alpha$. Then formula
\begin{equation*}
\hat{pc}(\alpha, \Psi, \varphi) \equiv
\begin{cases}
\varphi & \alpha = \varepsilon \\
\Psi(\alpha) \wedge \varphi & \alpha \ne \varepsilon \wedge \eta(\alpha) = 0 \\
\exists \vec{\kappa}_\alpha~(\vec{0} \leq \vec{\kappa}_\alpha \wedge \Psi(\alpha) \wedge \varphi) & \text{Otherwise},
\end{cases}
\end{equation*}
is a \emph{path condition part at vertex} $\alpha$. When a backbone tree is known from a context we simply write $\hat{pc}(\alpha, \Psi, \varphi)$. Note that $\hat{pc}$ has additional parameter $\varphi$ to allow insertion of a formula into a scope of the existential quantifier introduced in the last case of the definition.

\paragraph{Path condition at vertex of $\bbt$} Let $v_1 v_2 \cdots v_k \in V_\bbt$, where $k > 0$, be a vertex of $\bbt$. Then recursively defined formula
\begin{equation*}
pc_\bbt(v_1 v_2 \cdots v_k, \Psi) \equiv \hat{pc}_\bbt(v_1, \Psi, \hat{pc}_\bbt(v_1 v_2, \Psi, \ldots \hat{pc}_\bbt(v_1 v_2 \cdots v_k, \Psi, \true) \ldots ))
\end{equation*}
is a \emph{path condition at vertex} $v_1 v_2 \cdots v_k$. When a backbone tree is known from a context we simply write $pc(\alpha, \Psi)$.

\subsection{Symbolic Execution of Backbone Tree}
\label{sec:ExecBBT}

Let $\bbt$ be a backbone tree of a program $P$. To execute a $\bbt$ symbolically means that we symbolically execute all its backbone paths. To symbolically execute a backbone path $\pi$ of $\bbt$ means the following. We start at the first vertex $l_s$ of $\pi$. There we set $\Psi(l_s) = \true$ and we set actual symbolic state $\theta$ to be the most general one, i.e.~$\theta_G$. Then we proceed along $\pi$ per vertex until we process the last one. Let $u$ be a vertex of $\pi$ lastly processed and let $v$ be its successor in $\pi$. If $v$ is a loop entry vertex of $P$, then we call an algorithm, depicted in details in Section~\ref{sec:OverApproxLoop}, computing an over-approximation of the loop. If $v$ is not a loop vertex of $P$, then we symbolically execute instruction $\iota((u,v))$. We discuss symbolic execution of individual instructions in details later in this section. In both cases we receive a formula which we put into $\Psi$ and we also receive updated symbolic state. Then we proceed to another vertex of $\pi$ with the updated state. It may also happen at some vertex during symbolic execution of $\pi$ that path condition, composed of formulae assigned to already processed vertices of $\pi$, is not satifiable. Then there is no feasible path in $P$ reducible to $\pi$. Therefore, we stop the execution at that vertex. We can also remove this path $\pi$ from the tree $\bbt$, since we have discovered it is useless for reachability of the target location of $P$.

In Algorithm~\ref{alg:executeBBT} we present symbolic execution of $\bbt$ in more details. The algorithm works as described above. But we do not execute backbone paths separately one by one. We rather execute them simultaneously, all at once. Therefore, we maintain a set $Q$ of lastly processed vertices of all backbone paths. Since we also need to save actual symbolic states at those vertices, the elements of $Q$ are actually pairs, i.e.~vertex plus symbolic state. Another difference is, that the algorithm also computes function $\Theta$ assigning final symbolic states to leaves of $\bbt$. This function is a by product of the algorithm. It is only used by the over-approximation algorithm of Section~\ref{sec:OverApproxLoop}. There it is used to compute an over-approximated symbolic state such that a backbone tree of induced program of a loop to be over-approximated is symbolically executed first (by this algorithm). Let us discuss all three cases which may occur at each vertex during the execution.

At line~\ref{l:loopVtxOrNot} we determine, whether successor vertex $\alpha u v$ of $\alpha u$ is a loop entry or not. If so, then we identify a loop $\mathcal{C}$ at $v$ and at line~\ref{l:callOverApprox} we call the over-approximation algorithm \texttt{overapproximateLoop}, discussed in Section~\ref{sec:OverApproxLoop}, to obtain a formula $\varphi^{\vec{\kappa}}$, which is an over-approximation of path conditions of all feasible paths looping in $\mathcal{C}$, and symbolic state $\theta^{\vec{\kappa}}$, which is an over-approximation of all changes in symbolic state made by all feasible paths looping in $\mathcal{C}$. Having these over-approximations, we need to integrate them into current symbolic execution. It means, that we assign the formula $\varphi^{\vec{\kappa}}$ into function $\Psi$ at vertex $\alpha u v$, and we store $\alpha u v$ plus $\theta^{\vec{\kappa}}$ to be later able to process successors of $\alpha u v$ in $\bbt$. Note that both $\varphi^{\vec{\kappa}}$ and $\theta^{\vec{\kappa}}$ are updated by symbolic state $\theta$ before they are integrated. This is because the over-approximation of $\mathcal{C}$ is computed independently form the remainder of $P$. And symbolic state $\theta$ captures the current progress of symbolic execution up to the loop vertex $v$. We need to incorporate that progress into the over-approximation, before we integrate it into symbolic execution of $\bbt$.

If $\alpha u v$ is not a loop entry vertex, then we must symbolically execute an instruction $\iota((u,v))$ labelling a program edge $(u,v)$. Since it is purely technical matter, we leave its detailed description to the end of this section. Having the instruction executed we receive a formula representing an add-on to a current path condition. Therefore, we can directly assign it into $\Psi$ at $\alpha u v$. As the second value form execution of $\iota((u,v))$, we receive an updated symbolic state, capturing an effect of $\iota((u,v))$ on original symbolic state $\theta$. Next we check, whether a path condition, composed of all formulae assigned to vertices along the path $\alpha u v$ so far.

Let us suppose first the path condition is satisfiable. If we have not reached the target vertex yet, we store current progress in $Q$. Otherwise we store final symbolic state into function $\Theta$ for the leaf $\alpha u v$ and we are done executing current backbone path.

In case the path condition is not satisfiable, we stop symbolic execution at $\alpha u v$. We know that any further progress form $\alpha u v$ along any backbone path with prefix $\alpha u v$ cannot represent feasible path to the target location. Therefore we can reduce $\bbt$ such that we remove from it exactly those backbone paths with a prefix $\alpha u v$, while keeping there all the others. Such reduced set of vertices of $\bbt$ is computed at line~\ref{l:pruneBBTVtxs}. Then we need to update all the remaining sets forming $\bbt$. We cannot forget to update also function $\Psi$ to be defined only on proper set of vertices of $\bbt$ at the end.

\paragraph{} Note that symbolic execution of $\bbt$ is always finite, since $\bbt$ is a finite binary tree of backbone paths and the same holds for backbone tree of induced programs of its loops. There is a finite number of loops in a program.

\begin{algorithm}[!htb]
\newcommand{\executeBBT}{\texttt{executeBackboneTree}}
\caption{\executeBBT\texttt{(}$\bbt,P$\texttt{)}\label{alg:executeBBT}}
\KwIn{
\aargm{$\bbt$}{a backbones tree of $P$}
\aargm{$P$}{a normalized program}
}
\KwOut{
\aargm{$\Psi$}{function assigning parts of path conditions to vertices of $\bbt$}
\aargm{$\Theta: B \rightarrow \mathcal{M}(\mathcal{S})$}{final symbolic states at leaves of $\bbt$}
}
\BlankLine
$\Psi$ \aset $\{ (l_s, \true) \}$\;
$\Theta$ \aset $\emptyset$\;
$Q$ \aset $\{ (l_s, \theta_G) \}$\;
\Repeat{ $Q = \emptyset$ }{
  $(\alpha u, \theta)$ \aset any element of $Q$\;
  $Q$ \aset $Q \smallsetminus \{ (\alpha u, \theta) \}$\;
  \ForEach{$\alpha u v \in V_\bbt$}{
    \If{$\alpha u v$ is a loop entry vertex of $\bbt$}{ \label{l:loopVtxOrNot}
      Let $\mathcal{C}$ be a loop at $v$\;
      $(\varphi^{\vec{\kappa}}, \theta^{\vec{\kappa}})$ \aset $\texttt{overapproximateLoop}(\mathcal{C},v)$\; \label{l:callOverApprox}
      $\Psi(\alpha u v)$ \aset $\varphi^{\vec{\kappa}}\theta$\;
      $Q$ \aset $Q \cup \{ (\alpha u v, \theta^{\vec{\kappa}}\theta) \}$\;
    }\Else{
      $(\Psi(\alpha u v), \theta) \aset \iota((u,v))(\theta, pc(\alpha u, \Psi))$\;
      \If{$pc(\alpha u v, \Psi)$ is satisfiable}{
	      \If{ $v \ne l_t$ }{
	        $Q$ \aset $Q \cup \{ (\alpha u v, \theta) \}$\;
	      }
	      \Else{
	        $\Theta(\alpha u v)$ \aset $\theta$\;
	      }
	    } \Else{
	      $V_\bbt \aset \{ \beta \mid \exists \pi \in B_\bbt \wedge \alpha u v \not \in \pref(\pi) \wedge \beta \in \pref(\pi) \}$\; \label{l:pruneBBTVtxs}
	      $E_\bbt \aset {E_\bbt}|_{V_\bbt}$\;
	      $B_\bbt \aset V_\bbt \cap B_\bbt$\;
	      $\Psi \aset \Psi|_{V_\bbt}$\;
	    }
    }
  }
}
\Return{$(\Psi, \Theta)$}\;
\end{algorithm}

\paragraph{Symbolic execution of a program instruction} Let $P$ be a program, $\varphi$ be a symbolic expression of $P$ of \texttt{bool} type representing a path condition, $\theta$ be a symbolic state of $P$, and let $I$ be an instruction. Then we compute a result $I(\theta, \varphi)$ of symbolic execution of $I$ in $\theta$ and $\varphi$ according to a syntax structure of $I$ as follows. We assume \var{a} is a variable a scalar type $\tau$, \var{A} is a variable an array type $\texttt{int}^n \rightarrow \tau$, $\gamma$ is a program expression of type \texttt{bool}, $e$ is a program expression of type $\tau$, and $e_1, \ldots, e_n$ are program expressions of $P$ of type \texttt{int}.
\begin{itemize}
	\item $I$ is an assumption $\texttt{assume}(\gamma)$: If a formula $\varphi \rightarrow \theta(\gamma)$ is satisfiable, then $I(\theta, \varphi)$ is a pair $(\theta(\gamma), \theta)$, and $(\false, \theta)$ otherwise.
	\item $I$ is an assertion $\texttt{assert}(\gamma)$: If a formula $\varphi \rightarrow \theta(\gamma)$ is valid, then $I(\theta, \varphi)$ is a pair $(\true, \theta)$, and $(\false, \theta)$ otherwise.
	\item $I$ is an instruction \texttt{skip}: Then $I(\theta, \varphi)$ is a pair $(\true, \theta)$.
	\item $I$ is an assignment $\var{a} \aset e$: Then $I(\theta, \varphi)$ is a pair $(\true, \theta[\var{a} \rightarrow \theta(e)])$.
	\item $I$ is an assignment $\var{A}(e_1, \ldots, e_n) \aset e$: Then $I(\theta, \varphi)$ is a pair $(\true, \theta[\var{A} \rightarrow \ite( \chi_1 = \theta(e_1) \wedge \cdots \wedge \chi_n = \theta(e_n), e, \theta(\var{A})(\chi_1, \ldots, \chi_n))])$.
\end{itemize}

\subsection{Building $\hat{\varphi}$}
\label{sec:BuildResult}

After symbolic execution of a backbone tree $\bbt$ of a program $P$ we have computed all the information we need to build resulting over-approximation $\hat{\varphi}$ of $\Psi_P$. The information is stored in function $\Psi$ as formulae attached to vertices of $\bbt$. We know, that for a backbone path $\pi \in B_\bbt$ a formula $pc_\bbt(\pi,\Psi)$ is an over-aproximated path condition for all feasible paths reducible to $\pi$. Therefore, the over-approximation $\hat{\varphi}$ is given by the formula
\begin{equation*}
\hat{\varphi} \equiv
\begin{cases}
\false & B_\bbt = \emptyset \\
\bigvee_{\pi \in B_\bbt} pc_\bbt(\pi,\Psi) & Otherwise.
\end{cases}
\end{equation*}
Since backbone paths have always non-empty common prefix, it is usually the case that we can simplify the formula $\hat{\varphi}$. For each pair of backbone paths we move common part of their path conditions in front of the disjunction of their remainders.

Observe, that composition of backbone paths in $\bbt$ precisely matches structure of such simplified formula. Therefore, we can infer a simple algorithm on $\bbt$, which build $\hat{\varphi}$ in already simplified form. We depict its pseudo-code in Algorithm~\ref{alg:buildSimplified}. The algorithm is recursive. It accepts the backbone tree $\bbt$, function $\Psi$ already filled in during symbolic execution of $\bbt$, and a vertex $\alpha$ of $\bbt$. To receive $\hat{\varphi}$ we need to call the algorithm with the root vertex $l_s$.

\begin{algorithm}[!htb]
\newcommand{\buildSimplified}{\texttt{buildSimplified}}
\caption{\buildSimplified\texttt{(}$\bbt, \Psi, \alpha$\texttt{)} \label{alg:buildSimplified}}
\BlankLine
$\gamma$ \aset $\ite(\textit{$\alpha$ is a leaf of $\bbt$},\true,\false)$\;
\ForEach{$\alpha v \in V_\bbt$}{
  $\gamma$ \aset $(\gamma \vee \texttt{buildSimplified}(\bbt,\Psi,\alpha v))$\;
}
\Return{$\hat{pc}_\bbt(\alpha, \Psi, \gamma)$}\;
\end{algorithm}

Note that the Algorithm~\ref{alg:buildSimplified} cannot be used, when $\bbt$ is empty. Nevertheless, this case is trivial, since $\hat{\varphi}$ is $\false$. We use Algorithm~\ref{alg:buildSimplified} only for non-empty backbone trees.

\section{Loop Over-approximation}
\label{sec:OverApproxLoop}

Let $\mathcal{C}$ be a loop at a loop entry vertex $v$ of a backbone tree $\bbt$ of program $P$. We want to over-approximate all feasible paths representing all possible looping in $\mathcal{C}$ by a single formula $\varphi^{\vec{\kappa}}$ and single symbolic state $\theta^{\vec{\kappa}}$. The formula $\varphi^{\vec{\kappa}}$ is an over-approximation of path conditions of all those feasible paths and it is supposed to ensure, that none of these feasible paths is early terminated. In other words, it prunes out all those input to the loop $\mathcal{C}$ such that an execution of the loop for any such an input would terminate in some vertex of $\mathcal{C}$ different to $v$. Therefore, we call the formula $\varphi^{\vec{\kappa}}$ a \emph{looping condition of} $\mathcal{C}$. The symbolic state $\theta^{\vec{\kappa}}$ over-approximates all changes into symbolic state which could be made by the feasible paths looping in $\mathcal{C}$. Since its computation is based on expressing values of program variables as functions of how many times backbone paths of induced program of $\mathcal{C}$ are iterated, we call the symbolic state $\theta^{\vec{\kappa}}$ an \emph{iterated symbolic state of} $\mathcal{C}$.

We depict a computation of the over-approximation $(\varphi^{\vec{\kappa}}, \theta^{\vec{\kappa}})$ of $\mathcal{C}$ in Algorithm~\ref{alg:overapproximateLoop}. We first build an induced program $P'$ for the loop $\mathcal{C}$ at $v$ and then we construct a backbone tree $\bbt'$ of $P'$. When we have $\bbt'$, we can execute it symbolically as described in Section~\ref{sec:ExecBBT}. As a result from the execution we receive functions $\Psi'$ and $\Theta'$. At line~\ref{l:loopInfeasible} we resolve a trivial case, when the backbone tree $\bbt'$ becomes empty after its symbolic execution. That indicates, there is no feasible path iterating in $\mathcal{C}$. Therefore, returned value at that line is indeed an over-approximation of $\mathcal{C}$. If $\bbt'$ is not empty, we can proceed further in the computation. We compute the over-approximation $(\varphi^{\vec{\kappa}}, \theta^{\vec{\kappa}})$ from the functions $\Psi'$ and $\Theta'$. First we compute the iterated symbolic state $\theta^{\vec{\kappa}}$. This is done at lines~\ref{l:introArtif}--\ref{l:restrictThetaKappa}. A step at line~\ref{l:introArtif} is technical. The computation of $\theta^{\vec{\kappa}}$ involves presence of some artificial program variables and basic symbols in functions $\Psi'$ and $\Theta'$. To save the original functions, we build copies $\bar{\Psi}$ and $\bar{\Theta}$ of functions $\Psi'$ and $\Theta'$, where we introduce those artificial variables and symbols. The computation of $\theta^{\vec{\kappa}}$ itself is done at line~\ref{l:iterState}. We postpone the detailed description of both the introduction of artificial variables and the computation of $\theta^{\vec{\kappa}}$ into Section~\ref{sec:IterateTheta}. The returned iterated symbolic state $\theta^{\vec{\kappa}}$ is defined also for the artificial variables. Therefore, we restrict $\theta^{\vec{\kappa}}$ into regular program variables at line~\ref{l:restrictThetaKappa}. And finally, having function $\Psi'$ and iterated symbolic state $\theta^{\vec{\kappa}}$ we can compute the looping condition $\varphi^{\vec{\kappa}}$ at line~\ref{l:comLoopCond}. We describe its computation in details in Section~\ref{sec:loopingCondition}.

\begin{algorithm}[!htb]
\newcommand{\overapproximateLoop}{\texttt{overapproximateLoop}}
\caption{\overapproximateLoop\texttt{(}$\mathcal{C},v$\texttt{)}\label{alg:overapproximateLoop}}
\KwIn{
\aargm{$\mathcal{C}$}{a loop at a loop entry vertex $v$ of $\bbt$}
\aargm{$v$}{the loop entry vertex}
}
\KwOut{
\aargm{$\varphi^{\vec{\kappa}}$}{a looping condition of $\mathcal{C}$}
\aargm{$\theta^{\vec{\kappa}}$}{an iterated symbolic state of $\mathcal{C}$}
}
\BlankLine
$P' \aset P(\mathcal{C}, v)$\;
$\bbt'$ \aset \texttt{buildBackboneTree($P'$)}\;
$(\Psi', \Theta')$ \aset \texttt{executeBackboneTree}($\bbt', P'$)\;
\lIf{$V_{\bbt'} = \emptyset$}{\Return{$(\true,\theta_G)$}}\; \label{l:loopInfeasible}
$(\bar{\Psi}, \bar{\Theta}) \aset \texttt{introduceArtificials}(\Psi', \Theta')$\; \label{l:introArtif}
$\theta^{\vec{\kappa}}$ \aset \texttt{computeIteratedState}$(\bar{\Psi}, \bar{\Theta})$\; \label{l:iterState}
$\theta^{\vec{\kappa}} \aset \theta^{\vec{\kappa}}|_{\mathcal{V}}$\; \label{l:restrictThetaKappa} 
$\varphi^{\vec{\kappa}}$ \aset {\it compute looping condition from $\Psi'$ and $\theta^{\vec{\kappa}}$}\; \label{l:comLoopCond}
\Return{$(\varphi^{\vec{\kappa}}, \theta^{\vec{\kappa}})$}\;
\end{algorithm}

\subsection{Computation of iterated symbolic state $\theta^{\vec{\kappa}}$}
\label{sec:IterateTheta}

Let $P$ be a program, $\bbt$ be a backbone tree of $P$, and let $\mathcal{C}$ be a loop at a loop entry vertex $\alpha$ of $\bbt$. We assume in this section, that we have already build a backbone tree $\bbt'$ of an induced program $P'$ of the loop $\mathcal{C}$, and that we have also executed $\bbt'$ symbolically. So we have also computed functions $\Psi'$ and $\Theta'$. We further assume that $\pi_1', \ldots, \pi_n'$, where $n = \eta_\bbt(\alpha)$, are all backbone paths of $\bbt'$ and that $\kappa_{\alpha,1}, \ldots, \kappa_{\alpha,n}$ are all path counters introduced at $\alpha$ for the backbone paths of $\bbt'$ respectively.

Our goal in the section is to describe algorithm computing iterated symbolic state $\theta^{\vec{\kappa}}$. $\theta^{\vec{\kappa}}$ is a symbolic state, where values of program variables are expressed as functions of how many times the backbone paths of $\bbt'$ are iterated. Those numbers of iterations are captured in introduced path counters. Therefore, the resulting iterated symbolic state $\theta^{\vec{\kappa}}$ will be parametrized by the path counters. It means that for any concrete values substituted into the path counters in $\theta^{\vec{\kappa}}$, we obtain a symbolic state over-approximating those received by symbolic execution of the backbone paths of $\bbt'$ as many times as defined by the values of counters.

When $\bbt'$ contains loop entry vertices, then values of some variables may depend on concrete number of iterations along loops at those loop entry vertices. Since these numbers of iterations may be arbitrary in different iterations of backbone paths of $\bbt'$, it is difficult to infer functions of path counters for values of such variables. Of course, we can always express the values as unknown value $\star$. But we would loose a lot of precision. On the other hand, very precise analysis might be computationally expensive. Therefore, we provide an analysis still remaining simple, but precise enough for majority of programs our technique is designed for. We want to be precise in cases, when there is a linear relationship between number of iterations of a loop of $\bbt'$ and values of path counters introduced at $\alpha$. In all other cases we use that unknown value $\star$. Let $\gamma$ be a loop entry vertex of $\bbt'$. Then path counters $\kappa_{\gamma,1}, \ldots, \kappa_{\gamma,m_\gamma}$, where $m_\gamma = \eta_{\bbt'}(\gamma)$, introduced at $\gamma$ identify a number of iterations of backbone paths of an induced program at $\gamma$. An expression $\kappa_{\gamma,1} + \cdots + \kappa_{\gamma,m_\gamma}$ defines a number of iterations backbone paths of the induced program at $\gamma$. Therefore, in our analysis we intend to express values of the expression $\kappa_{\gamma,1} + \cdots + \kappa_{\gamma,m_\gamma}$ as a linear function of path counters $\kappa_{\alpha,1}, \ldots ,\kappa_{\alpha,n}$. Note that we do not try to compute values of individual path counters $\kappa_{\gamma,j}$.

Because all of this, we do not work directly with functions $\Psi'$ and $\Theta'$, but we first compute their updated versions $\bar{\Psi}$ and $\bar{\Theta}$. The update lies basically in replacement of all occurrences of expression $\kappa_{\gamma,1} + \cdots + \kappa_{\gamma,m_\gamma}$ in $\bar{\Psi}$ and $\bar{\Theta}$ by newly introduced basic symbol $\sym{s}_\gamma$. This replacement is done for each loop vertex $\gamma$ of $\bbt'$. We do not have to forget to eliminate all remaining occurrences of path counters $\kappa_{\gamma,1}, \ldots, \kappa_{\gamma,m_\gamma}$ form both $\bar{\Psi}$ and $\bar{\Theta}$. Since we cannot express their values, we replace them by unknown symbol $\star$. We depict the computation of functions $\bar{\Psi}$ and $\bar{\Theta}$ in more details in Algorithm~\ref{alg:introduceArtificials}. There we first set $\bar{\Psi}$ and $\bar{\Theta}$ to be copies of functions $\Psi'$ and $\Theta'$. Then we apply the substitutions for each loop entry vertex $\gamma$ of $\bbt'$. At line~\ref{l:lpCnd} we declare general structure of a looping condition stored in $\bar{\Psi}$ at the loop entry vertex $\gamma$. Its structure is not important now. We discuss a structure of a looping condition later in Section~\ref{sec:loopingCondition}. We replace this formula by one stored at line~\ref{l:newLpCnd}. Disregarding of meaning of these formulae, we can check that validity of the formula at line~\ref{l:lpCnd} implies validity of the formula at line~\ref{l:newLpCnd}. We replace the original formula in $\bar{\Psi}$ by the weaker one at line~\ref{l:newLpCnd} to save some precision: If we applied the substitutions on the original formula, we would receive a formula where antecedents of all implications in it would be of a form $0 \leq \tau_{\gamma,i} < \star$. We can see, that weaker formula at line~\ref{l:newLpCnd} prevent such a substitution and brings therefore more precision after the substitution. At line~\ref{l:loopPerVertices} we enumerate all remaining vertices of $\bbt'$ such that vertex $\gamma$ is their prefix. For each such vertex $\gamma\beta$ we apply the substitutions in function $\bar{\Psi}$ at line~\ref{l:updatePsi} and if $\gamma\beta$ is a leaf vertex of $\bbt$, then we apply the substitutions in function $\bar{\Theta}$ at line~\ref{l:updateMem}. Note that each artificial symbol $s_\gamma$ represents an expressions $\kappa_{\gamma,1} + \cdots + \kappa_{\gamma,m_\gamma}$. Therefore, we later compute those linear relationships between artificial symbols $s_\gamma$ and the path counters $\kappa_{\alpha,1}, \ldots ,\kappa_{\alpha,n}$.

We denote by $\mathcal{V}_s$ a set of all fresh artificial program variables $\var{s}_\gamma$ introduced into $\bar{\Theta}$, and we denote by $\Sigma_s$ a set of all fresh artificial basic symbols $\sym{s}_\gamma$ substituted into functions $\bar{\Psi}$ and $\bar{\Theta}$. Note that $\mathcal{V}_s \cap \mathcal{V}(\mathcal{S}_P) = \emptyset$ and $\Sigma_s \cap \Sigma(\mathcal{S}_P) = \emptyset$.

\begin{algorithm}[!htb]
\newcommand{\introduceArtificials}{$\texttt{introduceArtificials}$}
\caption{\introduceArtificials\texttt{(}$\Psi', \Theta'$\texttt{)} \label{alg:introduceArtificials}}
$\bar{\Psi} \aset \Psi'$\;
$\bar{\Theta} \aset \Theta'$\;
\ForEach{loop entry vertex $\gamma$ of $\bbt'$}{ \label{l:loopPerEntries}
  Let $\bar{\Psi}(\gamma) \equiv \bigwedge_{i=1}^{m_\gamma} (\forall \tau_{\gamma,i}~(0 \leq \tau_{\gamma,i} < \kappa_{\gamma,i} \rightarrow \exists \vec{\tau}_{\gamma,i}~(\vec{0} \leq \vec{\tau}_{\gamma,i} \leq \vec{\kappa}_{\gamma,i}) \wedge \psi_{\gamma,i})$\; \label{l:lpCnd}
  $\bar{\Psi}(\gamma) \aset \forall s~(0 \le s < \sym{s}_{\gamma,i} \rightarrow \bigvee_{i=1}^{m_\gamma} (\psi_{\gamma,i}[(\sum_{k=1}^{m_\gamma}\tau_{\gamma,k})/s] [\tau_{\gamma,1}/\star, \ldots \tau_{\gamma,m_\gamma}/\star]))$\; \label{l:newLpCnd}
  \ForEach{vertex $\gamma \beta$ of $\bbt$}{ \label{l:loopPerVertices}
    $\bar{\Psi}(\gamma\beta) \aset \bar{\Psi}(\gamma\beta) [(\sum_{k=1}^{m_\gamma}\kappa_{\gamma,k})/\sym{s}_\gamma] [\kappa_{\gamma,1}/\star, \ldots \kappa_{\gamma,m_\gamma}/\star]$\; \label{l:updatePsi}
    \If{$\gamma\beta$ is a leaf vertex of $\bbt$}{ \label{l:loopPerEntries2}
	    $\bar{\Theta}(\gamma\beta) \aset \bar{\Theta}(\gamma\beta) [(\sum_{k=1}^{m_\gamma}\kappa_{\gamma,k}) / \sym{s}_\gamma] [\kappa_{\gamma,1}/\star, \ldots \kappa_{\gamma,m_\gamma}/\star]$\; \label{l:updateMem}
    }
  }
}
\Return{$(\bar{\Psi}, \bar{\Theta})$}
\end{algorithm}

We can now move on to computation of the iterated state $\theta^{\vec{\kappa}}$ itself. We define a semi-lattice of all symbolic states, where we compute $\theta^{\vec{\kappa}}$ as a least fix-point of a monotone function defined later. Let us first describe the semi-lattice. Having $\mathcal{S}(\mathcal{V} \cup \mathcal{V}_s)$ we can define an order $\leq = \{ (\star,s)~|~s \in \mathcal{S}(\mathcal{V} \cup \mathcal{V}_s) \}$ on it. Then $\mathcal{L}_0 =(\mathcal{S}(\mathcal{V} \cup \mathcal{V}_s),\leq)$ is a semi-lattice, where symbol $\star$ is the least element. Note that $\mathcal{L}_0$ has finite height $2$. We can define an order $\leq$ on $\mathcal{M}(\mathcal{S}(\mathcal{V} \cup \mathcal{V}_s))$ such that $\leq = \{ (r,s)~|~r,s \in \mathcal{M}(\mathcal{S}(\mathcal{V} \cup \mathcal{V}_s)) \wedge \forall \var{a} \in \mathcal{V} \cup \mathcal{V}_s~r(\var{a}) \leq s(\var{a}) \}$, then $\mathcal{L} = (\mathcal{M}(\mathcal{S}(\mathcal{V} \cup \mathcal{V}_s)), \leq)$ is a map semi-lattice. The least element of $\mathcal{L}$ is a symbolic state $\theta_\star$ and also note that $\mathcal{L}$ is of finite height $|\mathcal{V} \cup \mathcal{V}_s|$, since $\mathcal{V} \cup \mathcal{V}_s$ is finite.

\begin{algorithm}[!htb]
\newcommand{\computeIteratedState}{\texttt{computeIteratedState}}
\caption{\computeIteratedState\texttt{(}$\bar{\Psi}, \bar{\Theta}$\texttt{)} \label{alg:compFixpoint}}
$\theta^{\vec{\kappa}} \aset \theta_\star$\; \label{l:setitBottom}
\Repeat{$\var{change} = \false$}{ \label{l:fixpointLoop}
  $\var{change} \aset \false$\; \label{l:monotoneFnBegin}
  \ForEach{$\var{a} \in \mathcal{V}$}{ \label{l:monotFirstLoop}
    $e \aset  \texttt{iterateVariable}(\var{a}, \bar{\Psi}, \bar{\Theta}, \theta^{\vec{\kappa}}[\var{a} \rightarrow \theta_G(\var{a})])$\; \label{l:iterExpr}
    \If{$\theta^{\vec{\kappa}}(\var{a}) < e$}{
      $\theta^{\vec{\kappa}}(\var{a}) \aset e$\;
      $\var{change} \aset \true$\;
    }
  }
  \ForEach{$\var{s}_\gamma \in \mathcal{V}_s$}{ \label{l:monotSecondLoop}
    $e \aset $ \texttt{iterationsOfLoop}$(\gamma, \bar{\Psi},\theta^{\vec{\kappa}}[\var{s}_\gamma \rightarrow \theta_G(\var{s}_\gamma)])$\;
    \If{$\theta^{\vec{\kappa}}(\var{s}_\gamma) < e$}{
      $\theta^{\vec{\kappa}}(\var{s}_\gamma) \aset e$\;
      $\var{change} \aset \true$\; \label{l:monotoneFnEnd}
    }
  }
}
\Return{$\theta^\kappa$}\;
\end{algorithm}

The symbolic state $\theta^{\vec{\kappa}}$ is an element of the semi-lattice $\mathcal{L}$ and it is computed by Algorithm~\ref{alg:compFixpoint} as a least fix-point of a monotone function depicted at lines \ref{l:monotoneFnBegin}--\ref{l:monotoneFnEnd} in the algorithm. The algorithm computes Kleene's sequence leading to $\theta^{\vec{\kappa}}$ as follows. At line~\ref{l:setitBottom} we set $\theta^{\vec{\kappa}}$ to be the least element $\theta_\star$ of $\mathcal{L}$. Then the loop at line~\ref{l:fixpointLoop} computes the following elements of the Kleene's sequence. Note that this sequence is always finite, since $\mathcal{L}$ is of finite height. The monotone function is computed in two loops. The first loop at line~\ref{l:monotFirstLoop} computes for each program variable \var{a} an iterated value of its values stored in $\Theta$. The iterated value for a variable is a function of path counters $\kappa_{\alpha,1}, \ldots, \kappa_{\alpha,n}$ expressing values of the variable for any number of iterations of backbone paths of $\bbt'$. We discuss the details of this computation in Section~\ref{sec:iterateVar}. If the iterated value $e$ is more precise, then the current value $\theta^{\vec{\kappa}}(\var{a})$, we overwrite it with the iterated one. The second loop at line~\ref{l:monotSecondLoop} computes for each loop entry vertex $\gamma$ of $\bbt'$ a linear function between an artificial basic symbol $\sym{s}_\gamma$, representing an expression $\kappa_{\gamma,1} + \cdots + \kappa_{\gamma,m_\gamma}$, and path counters $\kappa_{\alpha,1}, \ldots, \kappa_{\alpha,n}$. We discuss the details of that computation in Sections~\ref{sec:iterNestedLoop}. Whenever the result $e$ is more precise then the value already stored in $\theta^{\vec{\kappa}}$, then the content of $\theta^{\vec{\kappa}}$ is updated.

\subsubsection{Computing iterated value of a program variable}
\label{sec:iterateVar}

Algorithm~\ref{alg:iterateVar} computes an iterated value $e$ for a given program variable \var{a}. We start with expression $e$ set to its related a basic symbol at line~\ref{l:justSymbol}. Then in loop at line~\ref{l:loopPerLeaves} we enumerate backbone paths $\pi_1', \ldots, \pi_n'$ of $\bbt'$ in order as they are marked. Remember that $n = \eta_{\bbt}(\alpha)$. Let a backbone path $\pi_i'$ be just enumerated. Then we update $e$ according to a content of \var{a} in $\bar{\Theta}$ for the current path $\pi_i'$. We read the content of \var{a} from $\bar{\Theta}(\pi_i')$ at line~\ref{l:readContentTheta} and store it into $e'$. Note that the result of the read is immediately followed by substituting $\theta^{\vec{\kappa}}$ into it. By the substitution we incorporate already iterated values of other variables into $e'$. Note that the value of \var{a} may depend on other variables. Then at line~\ref{l:branchByType} we proceed differently for variables of a scalar and array types. Nevertheless, both branches are supposed to look up related table to get an iterated value for \var{a} (computed by combining $e$ and $e'$ in the table). This is done at lines~\ref{l:useScalarTab} and~\ref{l:useArrayTab}. It remains to discuss the use of Tables~\ref{tab:iterateVarValue_scalar}~and~\ref{tab:iterateVarValue_array}. We do that in separate paragraphs. 

\begin{algorithm}[!htb]
\newcommand{\iterateVariable}{\texttt{iterateVariable}}
\caption{\iterateVariable\texttt{(}$\var{a}, \bar{\Psi}, \bar{\Theta}, \theta^{\vec{\kappa}}$\texttt{)} } \label{alg:iterateVar}
$e \aset \theta_G(\var{a})$\; \label{l:justSymbol}
\ForEach{$i = 1,2, \ldots, n$}{ \label{l:loopPerLeaves}
  $e' \aset \bar{\Theta}(\pi_i')(\var{a})\theta^{\vec{\kappa}}$\; \label{l:readContentTheta}
  \If{\var{a} is of a scalar type}{ \label{l:branchByType}
    $e \aset$ apply Table~\ref{tab:iterateVarValue_scalar} for values $(e,e')$\; \label{l:useScalarTab}
  } \Else {
    $e \aset$ apply Table~\ref{tab:iterateVarValue_array} for values $(e,e')$\; \label{l:useArrayTab}
  }
}
\Return{$e$}
\end{algorithm}

\paragraph{Iterating values of scalar type} We combine the expressions $e$ and $e'$ of Algorithm~\ref{alg:iterateVar} for a variable \var{a} of a scalar type according to Table~\ref{tab:iterateVarValue_scalar} into a single iterated value. The expression $e$ represents an iterated value of \var{a} of all already enumerated backbone paths $\pi_1', \ldots, \pi_{i-1}'$. And the expression $e'$ represents symbolic value of \var{a} after symbolic execution of the backbone path $\pi_i'$ as the last one. We use Table~\ref{tab:iterateVarValue_scalar} to compute a resulting iterated value as follows. We try to match expressions $e$ and $e'$ to an expression in the first column and first row respectively. In case either $e$ or $e'$ fails to match any of the expressions, then the resulting iterated value is $\star$. Otherwise we pick an expression from the table common to matched column and row.

The expressions in the first row have a structure of all symbolic expressions we are interested in. We want to compute precise iterated values for them. The first expression identifies the case, when \var{a} is not written to along $\pi_i'$ at all. The second expression matches syntactic structure of expressions, whose values follow some arithmetic progression. The arithmetic progression are the most common for variables of programs we are focusing on. For example majority of sequential traversals of arrays typically involve at least one variable whose values follow some arithmetic progression. And the third expression in the first row identify symbolic expressions whose values do not depend on iterations of other backbone paths. Typical examples are variables storing intermediate results, and more importantly flag variables. For example programs typically set or remove flags when scanning an array to check whether the array matches some property or not.

To fully understand the content of the table we need to discuss meaning of symbols appearing in it. First of all we must say, that all the occurrences of the basic symbol $\sym{a}$, all the path counters $\kappa_{\alpha,1}, \ldots, \kappa_{\alpha,n}$, and the expression $\star$ are explicit in the table. $k$ is a natural number such that $k < i$, indices $i_1, \ldots, i_k$ are all natural numbers, they are all distinct, and also less the $i$. They represent indices of some of already enumerated backbone paths $\pi_1', \ldots, \pi_{i-1}'$. Symbols $d_i, d_{i_1}, \ldots, d_{i_k}$ are symbolic expressions of $P$. Any $\rho_j$ is a symbolic expression which may contain at most a path counter $\kappa_{\alpha,j}$ from the path counters $\kappa_{\alpha,1}, \ldots, \kappa_{\alpha,n}$. Expressions $\psi_{j,k}$ are defined as follows.
\begin{equation*}
\psi_{j,k} \equiv
\begin{cases}
\kappa_{\alpha,j} > 0 & k = 1 \\
\kappa_{\alpha,j} > 0 \wedge \exists \vec{\tau}_j (\vec{0} \leq \vec{\tau}_j \leq \vec{\kappa}_j \wedge pc_j \wedge \forall \vec{\tau}_j' (\vec{\tau}_j < \vec{\tau}_j' \leq \vec{\kappa}_j \rightarrow \bigwedge_{\substack{r = 1 \\ r \ne j}}^k \neg pc_{i_r}')) & \text{Otherwise},
\end{cases}
\end{equation*}
where
\begin{align*}
\vec{\tau}_j & = (\tau_1, \ldots, \tau_{j-1}, \tau_{j+1}, \ldots, \tau_k)^T, \\ \vec{\tau}_j' & = (\tau_1', \ldots, \tau_{j-1}', \tau_{j+1}', \ldots, \tau_k')^T, \\
\vec{\kappa}_j & = (\kappa_{\alpha,1}, \ldots, \kappa_{\alpha,j-1}, \kappa_{\alpha,j+1}, \ldots, \kappa_{\alpha,k})^T, \\
pc_j & = pc_{\bbt'}(\pi_j',\bar{\Psi}) \theta^{\vec{\kappa}} [\vec{\kappa}_j / \vec{\tau}_j], \\
pc_j' & = pc_{\bbt'}(\pi_j',\bar{\Psi}) \theta^{\vec{\kappa}} [\vec{\kappa}_j / \vec{\tau}_j'].
\end{align*}
A condition $\psi_{j,k}$ determines whether a backbone path $\pi_j'$ was symbolically executed at least once, and if so, then whether it was executed as the last one of already examined backbone paths where \var{a} is modified. Also note that we substitute $\theta^{\vec{\kappa}}$ into the formula $pc_{\bbt'}(\pi_j',\bar{\Psi})$. The substitution incorporates values of already iterated program variables into the formula.

We also need to clarify a notation used in the expression in the last row and column, where we assume that index $i_{k+1}$ represents the value $i$.

\begin{table}[!htb]
\begin{center}
\begin{tabular}{r|ccc}
&
$\sym{a}$ &
$\sym{a} + d_i$ &
$\rho_i$
\\ \hline

$\sym{a}$ &
$\sym{a}$ &
$\sym{a} + d_i \kappa_{\alpha,i}$ &
$\ite( \psi_{i,1}, \rho_i, \sym{a})$
\\
\\

$\sym{a} + \sum_{j=1}^k d_{i_j} \kappa_{\alpha,i_j}$ &
$\sym{a} + \sum_{j=1}^k d_{i_j} \kappa_{\alpha,i_j}$ &
$\sym{a} + d_i \kappa_{\alpha,i} + \sum_{j=1}^k d_{i_j} \kappa_{\alpha,i_j}$ &
$\star$
\\
\\

\begin{tabular}{l}
$\ite(\psi_{i_1,k}, \rho_{i_1},$ \\
\hspace{0.5cm} \ldots \\
\hspace{0.5cm} $\ite(\psi_{i_k,k}, \rho_{i_k},$ \\
\hspace{1.2cm} $\sym{a}) \ldots )$
\end{tabular} &
\begin{tabular}{l}
$\ite(\psi_{i_1,k}, \rho_{i_1},$ \\
\hspace{0.5cm} \ldots \\
\hspace{0.5cm} $\ite(\psi_{i_k,k}, \rho_{i_k},$ \\
\hspace{1.2cm} $\sym{a}) \ldots )$
\end{tabular} &
$\star$ &
\begin{tabular}{l}
$\ite(\psi_{i_1,k+1}, \rho_{i_1},$ \\
\hspace{0.5cm} \ldots \\
\hspace{0.5cm} $\ite(\psi_{i_{k+1},{k+1}}, \rho_{i_{k+1}},$ \\
\hspace{1.2cm} $\sym{a}) \ldots )$
\end{tabular}
\end{tabular}
\end{center}
\caption{Combining values $e$ and $e'$ of a variable \var{a} of a scalar type.}
\label{tab:iterateVarValue_scalar}
\end{table}

\paragraph{Iterating values of array type} In Table~\ref{tab:iterateVarValue_array} we assume that variable \var{a} is of an array type. We combine the expressions $e$ and $e'$ (computed in Algorithm~\ref{alg:iterateVar}) according to their syntactical structure. We try to match $e$ and $e'$ to an expression in the first column and first row respectively. In case either $e$ or $e'$ fails to match any of the expressions, then the resulting iterated value is $\lambda\vec{\chi}.\star$. Otherwise we pick an expression from the table common to matched column and row. We again use the vector notation. In particular, vector $\vec{\chi}$ represent formal parameters of a value of \var{a} and its dimension therefore matches dimension of the array. The syntax structure of the first expressions in the first row identify the case, when \var{a} is not written to along $l$ at all. The second expression in the first row captures sequence of $n$ writes along backbone path $l$. The outer-most $\ite$ expression represents the last write along $l$, while the most nested one represents the first write. The expressions of the first column have very similar meaning as those in the first row. The only difference is, that expressions in the first column capture arbitrary iteration of all already processed backbone paths (and not only a single the current one). The most complicated expressions in the table lies in the last column. The expression in the second row computes iteration of all writes along the path $l$. The iterated expression has similar structure as the one at the first row. The only difference is that expressions $d_i, s_i$ are transformed into iterated versions $h_i, t_i$. The expression in the last row and column combines iterations of all writes along all already iterated path including the current path $l$. Since it does not mater on the order of writes from different backbone paths, we append iterated versions of writes along $l$ as the most nested $\ite$ expressions in the result, i.e.~$\ite(h_{m+1}, t_{m+1} \ldots \ite(h_{m+n}, t_{m+n}, \arrsym{a}) \ldots )$.

\begin{table}[!htb]
\begin{center}
\begin{tabular}{r|cc}
&
$\arrsym{a}$ &
\begin{tabular}{l}
$\lambda \vec{\chi}.\ite(d_1, s_1,$ \\ \hspace{1.2cm} $\ldots$ \\ \hspace{1.2cm} $\ite(d_n, s_n,$ \\ \hspace{1.9cm} $\arrsym{a}) \ldots )$
\end{tabular}
\\ \hline

$\arrsym{a}$ &
$\arrsym{a}$ &
\begin{tabular}{l}
$\lambda \vec{\chi}.\ite(h_1, t_1,$ \\ \hspace{1.2cm} $\ldots$ \\ \hspace{1.2cm} $\ite(h_{n}, t_{n},$ \\ \hspace{1.9cm} $\arrsym{a}) \ldots )$
\end{tabular}
\\

\begin{tabular}{l}
$\lambda \vec{\chi}.\ite(c_1, r_1,$ \\ \hspace{1.2cm} $\ldots$ \\ \hspace{1.2cm} $\ite(c_m, r_m,$ \\ \hspace{1.9cm} $\arrsym{a}) \ldots )$
\end{tabular} &

\begin{tabular}{l}
$\lambda \vec{\chi}.\ite(c_1, r_1,$ \\ \hspace{1.2cm} $\ldots$ \\ \hspace{1.2cm} $\ite(c_m, r_m,$ \\ \hspace{1.9cm} $\arrsym{a}) \ldots )$
\end{tabular} &

\begin{tabular}{l}
$\lambda \vec{\chi}.\ite(h_1, t_1,$ \\ \hspace{1.2cm} $\ldots$ \\ \hspace{1.2cm} $\ite(h_{m+n}, t_{m+n},$ \\ \hspace{1.9cm} $\arrsym{a}) \ldots )$
\end{tabular}

\end{tabular}
\end{center}
\caption{Combining values $e$ and $e'$ using $\varphi$ of a variable \var{a} of an array type.}
\label{tab:iterateVarValue_array}
\end{table}

To fully understand meaning of the table, we also need to discuss structure of expressions $c_i, d_i, h_i$ and $r_i, s_i, t_i$ appearing inside $\ite$ expressions in the table. None of these expressions has an implicit occurrence of the basic symbol $\sym{a}$, a path counter $\kappa \in \mathcal{K}$ or parameter $\tau \in \mathcal{T}$. And none of them is equal to $\star$. All occurrences of mentioned symbols in the expressions are always stated explicitly in their description.

We start with the description of expressions $c_i$. Each $c_i$ declaratively identifies all those indices into the array, where $i$-th nested $\ite$ expression in $e$ writes during all iteration of all already examined backbone paths.	The indices can be expressed as follows
\begin{equation*}
c_i = \exists \vec{\tau}~.~\vec{\chi} = \vec{u}_i \wedge \vec{0} \leq \vec{\tau} < \vec{p}_i \wedge \phi_i(U,P) \wedge \gamma_i.
\end{equation*}
Vector $\vec{\chi}$ serves only as a placeholder, where actual parameters are substituted, when we read from the array. Vector $\vec{u}_i$ may contain parameters from $\vec{\tau}$ and identifies possible indices, where the $i$-th $\ite$ expression may write its value $r_i$ during all the iteration of already examined backbone paths. Therefore, if for some concrete vector $\vec{I}$ of indices into the array there exists a $\vec{\tau}$ such that the formula $(\vec{\chi} = \vec{u}_i)[\vec{\chi}/\vec{I}]$ is $\true$, then we know, that $\vec{I}$ identifies element of the array overwritten by the $i$-th $\ite$ expression during the iteration. But values of parameters $\vec{\tau}$ must be real -- they capture only iterations of examined backbone paths where number of their iterations do not exceed values of counters. Therefore, parameters $\vec{\tau}$ are restricted from top by a vector of expressions $\vec{p}_i$. Expressions in $\vec{p}_i$ possibly (and typically) contain some of the counters $\kappa_1, \ldots, \kappa_p$. Formula $\phi_i(U,P)$ checks whether $r_i$ is the value, which was written to the array last at an element identified by $\vec{\chi}$. We discuss structure of $\phi_i(U,P)$ later. Only note that $U$ is a sequence of all vectors $\vec{u}_i$ and $P$ is a sequence of all vectors $\vec{p}_i$. Each formula $\gamma_i$ uniquely identifies a single path $l$ in $\bbt$ from $l_s$ down to location of $i$-th write. We use formula $pc(l)$ to express the condition for $l$.

Expressions $d_i$ have similar structure as expressions $c_i$. But they capture writes only along current path $l$. They have the following structure.
\begin{equation*}
d_i =
\begin{cases}
\exists \vec{\tau}~.~\vec{\chi} = \vec{v}_i \wedge \vec{0} \leq \vec{\tau} < \vec{q}_i \wedge \phi_i(V,Q) \wedge \gamma_i & \text{if $\vec{v}_i$ contains at least one parameter} \\
\vec{\chi} = \vec{v}_i & \text{otherwise}
\end{cases}
\end{equation*}
The first case matches the situation, when path $l$ contains at least one component vertex. Analysing related SCCs recursively we receive imported counters. Therefore, value of the array was iterated by just discussed procedure and it implies the more complicated structure. The second case identifies common symbolic write into the array along $l$. Only note that $V$ is a sequence of all vectors $\vec{v}_i$ and $Q$ is a sequence of all vectors $\vec{q}_i$. We discuss structure of $\phi_i(V,Q)$ later.

Expressions $h_i$ express iterated versions of expressions $d_i$. Since we iteratively combine backbone paths of $\bbt$ into resulting $\theta^{\vec{\kappa}}$ the structure of $e$ actually represent iterated version of value of the array (but only for already examined backbone paths). Therefore, expressions $c_i$ already represent iterated versions of writes along examined path. Since we want to extend iterated value of \var{a}  in $\theta^{\vec{\kappa}}$ by writes along current $l$ it is obvious that structure of expressions $h_i$ is the same as for the expressions $c_i$:
\begin{align*}
h_i = \exists \vec{\tau}~.~\vec{\chi} = \vec{w}_i & \wedge \vec{0} \leq \vec{\tau} < \vec{g}_i \wedge \phi_i(W,G) \wedge \ite(i \le m, \gamma_i, \psi[\vec{\kappa}/\vec{\tau}]), \\
\vec{w}_i & = \ite(i \le m, \vec{u}_i, \vec{v}_{i-m}[\vec{\kappa}/\vec{\tau}])\\
\vec{g}_i & = \ite( i \le m, \vec{p}_i, \ite(\mathcal{T}(\vec{v}_{i-m}) \ne \emptyset, (\vec{g}^T_{i - m}, \mathcal{K}(\vec{v}_{i-m}) \smallsetminus \mathcal{K}(\vec{g}_{i-m}))^T, (\mathcal{K}(\vec{g}_{i-m})^T)).
\end{align*}
Note that vectors $\vec{w}_i, \vec{g}_i$ are defined to choose right expression either from $e$ or $e'$. Also note, that formula $\gamma$ is extended by formula $\ite(i \le m, \neg\psi, \psi)$, where $\psi$ was computed in Algorithm~\ref{alg:iterateVar}. It distinguishes writes along the current path $l$ from writes along other already examined backbone paths. We discuss structure of $\phi_i(W,G)$ bellow. Only note that $W$ is a sequence of all vectors $\vec{w}_i$ and $G$ is a sequence of all vectors $\vec{G}_i$.

Now we can discuss structure of formulae $\phi_i(Z,B)$. The sequence $Z = \{\vec{z}_1, \ldots, \vec{z}_k\}$ contains all those indices to the array, where the array is written to during the iteration of all already examined backbone paths. Note that each such index is a vector of symbolic expressions of dimension $m$, if $m$ is a dimension of the array \var{a}. The sequence $B = \{\vec{b}_1, \ldots, \vec{b}_k\}$ containts vectors restricting values of parameters $\vec{\tau}$ appearing in related indices in $Z$. The formula $\phi_i(Z,B)$ has the following structure:
\begin{align*}
\phi_i(\{\vec{z}_1, \ldots, \vec{z}_k\}, \{\vec{b}_1, \ldots, \vec{b}_k\}) & = \forall \vec{\tau}_1', \ldots, \vec{\tau}_k'~.~(\bigwedge_{j = 1}^k \vec{\tau}_j < \vec{\tau}_j' < \vec{b}_j) \rightarrow (\bigwedge_{j = 1}^k \zeta(\vec{z}_j,i,j)[\vec{\tau}_j/\vec{\tau}_j']),\\
\zeta(\vec{z},i,j) & = 
\begin{cases}
\vec{\chi} \ne \vec{z} & \text{if $i \ne j$ or some $\tau$ appears in $\vec{z}$} \\
\true & \text{otherwise}
\end{cases}
\end{align*}
We see, that the formula is not sentence. It contains free variables -- parameters $\tau_i \in \mathcal{T}$ -- which are exposed in the formula through vectors $\vec{\tau}_j$. Note that two different $\vec{\tau}_j, \vec{\tau}_k, j \ne k$ may share some parameters. But all these free variables (parameters) can be stored in a single vector $\vec{\tau}$, which is exactly the one existentially quantified in expressions $c_i, d_i$ and $h_i$. The formula $\phi_i(Z,B)$ states for given parameters $\vec{\tau}$ that each write to the array in any future iterations of already examined backbone paths will store its value to the different element of the array then to the one indexed by $\vec{z}_i$. In other words, the formula says that a value lastly overwritten in the array at index $\vec{z}_i$ was done by $i$-th $\ite$ expression in iteration identified by parameters $\vec{\tau}$.

Each $s_i$ denotes a symbolic expression written to the array along current backbone path $l$. We do not restrict their syntactic structure in any way.

Expressions $r_i$ and $t_i$ have similar structure, since each $r_i$ represent iterated version of an expression written to the array, and each $t_i$ has the same meaning, but it also includes iterated versions of expressions written along the current backbone path $l$, i.e.~iterated versions of expressions $c_i$. We therefore discuss only structure of $t_i$:
\begin{equation*}
t_i = \ite(i \le m, r_i, \delta_i).
\end{equation*}
We see, that for all $i \le m$ we have $t_i = r_i$. Since expressions $r_i$ already are iterated, we do not need to do any action for them. For all the remaining expressions $s_{i-m}$ (i.e.~$m + 1 \le i \le m + n$) we need to compute their iterated versions, before we put them into $t_i$. We express the iterated version of $s_{i-m}$ by the expression $\delta_i$:
\begin{equation*}
\delta_i =
\begin{cases}
(f \tau_l + \sym{a}(\vec{z}))\{\vec{w}_i/\vec{\chi}\} & \text{if $s_{i-m}$ is of a form $\sym{a}(\vec{z}) + f$} \\
\ite(h_i \wedge \phi_i'(W), \rho_i(s_{i-m},\vec{v}_{i-m})[\kappa_l/\tau_l]\{\vec{w}_i/\vec{\chi}\}, \star) & \text{if $s_{i-m}$ is of a form $\sym{a}(\vec{z}\kappa_l+\vec{y}) + f$} \\
\star & \text{if $s_{i-m}$ is any other expression containing $\sym{a}$} \\
s_{i-m}[\kappa_l/\tau_l]\{\vec{w}_i/\vec{\chi}\} & \text{otherwise}
\end{cases}
\end{equation*}
The first case identifies a situation, when a single element of the array indexed by $\vec{w}_i$ is updated several times during iteration of $\bbt$ such that the values in the element follow an arithmetic progression. The second case identifies a situation, when sequences of elements of the array follow some arithmetic progressions. We discuss details of this case bellow. Whenever $s_{i-m}$ contains $\sym{a}$, but it is not of the syntactic form of neither the first nor the second case, then we return $\star$. The forth case matches any symbolic expressions without $\sym{a}$ inside them.

In the second case of $\delta_i$ each written element of the array is a part of a linear function. The single write can produce several lines, and each written element of the array belongs to exactly one of the lines. The iterated version of $s_i$ is thus expression declaratively describing all the lines. But it is not only about describing the lines. We must also ensure, that other writes to the array (along any backbone paths) do not corrupt them during the iteration. That is the reason for the $\ite$ expression for this case. The condition of the $\ite$ expression checks, whether lines are not corrupted during the iteration of backbone paths. We have already discussed structure of $h_i$. Therefore, it remains to describe structure of boolean expression $\phi_i'(W)$. Remember, that $W$ is a sequence of all vectors $\vec{w}_i$. Structure of $\phi_i'(W)$ is very similar to $\phi_i(W,G)$, since they have the same purpose -- to detect accidental writes to selected array elements.
\begin{equation*}
\phi_i'(\{\vec{z}_1, \ldots, \vec{z}_k\}) = \forall \vec{\tau}_1', \ldots, \vec{\tau}_k'~.~(\bigwedge_{j = 1}^k 0 \leq \vec{\tau}_j' < \vec{\tau}_j) \rightarrow (\bigwedge_{j = 1}^k \zeta(\vec{z}_j,i,j)[\vec{\tau}_j/\vec{\tau}_j'])
\end{equation*}
Since formula $\phi_i(W,G)$ detects accidental writes in future iterations, the formula $\phi_i'(W)$ can only check overwrites in the previous iterations (see the antecedent of the implication). Note that there are free variables in $\phi_i'(W)$ (exactly those which are free in $\phi_i(W,G)$), which are existentially bind through $h_i$ (i.e.~scope of $\exists \vec{\tau}$ in $h_i$ covers also $\phi_i'(W)$).

Expression $\rho_i(s_{i-m},\vec{v}_{i-m})$ identifies the lines. Note that $s_{i-m}$ is of a form $\sym{a}(\vec{z}\kappa_l+\vec{y}) + f$. When $\vec{p} = \vec{v}_{i-m}[\kappa_l/(\kappa_l + 1)] - \vec{v}_{i-m}$ is a vector identifying differences in indices between subsequent iterations, and $\vec{q} = \vec{v}_{i-m} - \vec{z}\kappa_l - \vec{y}$ is a vector identifying differences in indices between l-value and r-value, then we require that $\mathbf{Diag}(p) q  \ge \vec{0}$ and at least one element of the vector $\mathbf{Diag}(p) q$ must be strictly greater then $0$. If one of these requirements is not met, then we evaluate $\rho(s_{i-m},\vec{v}_{i-m})$ to $\star$. Otherwise we define $\rho_i(s_{i-m},\vec{v}_{i-m})$ as an expression
\begin{center}
\begin{tabular}{l}
$\ite((z_1\kappa_l + y_1) \mod |q_1| \equiv \hat{q}_{1,0} \wedge \cdots \wedge (z_k\kappa_l + y_k) \mod |q_k| \equiv \hat{q}_{k,0},$ \\ \qquad\qquad $f\kappa + \sym{a}(z_1\hat{q}_{1,0}+y_1, \ldots, z_k\hat{q}_{k,0}+y_k),$ \\ \qquad $\ldots $ \\ \qquad  $\ite((z_1\kappa_l + y_1) \mod |q_1| \equiv \hat{q}_{1,N} \wedge \cdots \wedge (z_k\kappa_l + y_k) \mod |q_k| \equiv \hat{q}_{k,N},$ \\ \qquad\qquad\qquad  $f\kappa + \sym{a}(z_1\hat{q}_{1,N}+y_1, \ldots, z_k\hat{q}_{k,N}+y_k),$ \\ \qquad\qquad\qquad  $\star) \ldots )$,
\end{tabular}
\end{center}
where $k$ is dimension of the array, $\hat{q}_{i,j} = \min\{ |q_i|, j \}$ and $N = \max\{ |q_1|, \ldots, |q_k| \} - 1$. Note that presence of $\star$ in the expression is only technical -- to simplify listing of the formula. The formula can never be evaluated to that $\star$.

As an example, consider a program expression $\texttt{a[i]} \aset \texttt{a[i-5] + 1}$. Then $\rho_i(\sym{a}(\kappa + \sym{i} - 5) + 1,\kappa + \sym{i})$, where iterated value of \var{i} is $\kappa + \sym{i}$, has five composed $\ite$ expressions. There are generated five independent lines in array \var{a} during the iteration. Array elements of these lines are interleaved modulo 5 in \var{a}. And formula $\rho_i$ captures this property.

\subsubsection{Computing number of iterations of nested Loop}
\label{sec:iterNestedLoop}

Here we compute a linear function between a symbol $\sym{s}_\gamma$, representing the expression $\kappa_{\gamma,1} + \cdots + \kappa_{\gamma,m_\gamma}$, and the path counters $\kappa_{\alpha,1}, \ldots ,\kappa_{\alpha,n}$. Of course, if there is no such linear function, we fail to infer the function. It may also be the case, that there is a linear relationship, but coefficients of the function do not form liner functions over input symbols. It that case, we also fails to compute the function. The main idea behind our algorithm computing the linear function can be explained as follows.

We start with precise formulation of a condition identifying, whether value of $\sym{s}_\gamma$ is linearly dependent on path counters $\kappa_{\alpha,1}, \ldots ,\kappa_{\alpha,n}$ or not. Formula $\bar{\Psi}(\gamma)$ is weakened looping condition of $\mathcal{C}$. It ensures, that each iteration along the loop $\mathcal{C}$ gets back to the entry vertex $\lst(\gamma)$, until it is a time to leave it. Leaving the loop means to follow some path in $\mathcal{C}$ from the loop entry vertex to one of its exit vertices. Formulae in $\bar{\Psi}$ along all these paths identify the the leaving condition after that successful iteration in $\mathcal{C}$. So we need all these formulae to describe the iterations of $\mathcal{C}$. But these formulae describe the iterations only for single (you can imagine the last) iteration of $\bbt'$. To capture arbitrary (previous) iteration of $\bbt'$ we need to substitute $\theta^{\vec{\kappa}}$ into these formulae. Therefore, the discussed condition identifying iterations of $\mathcal{C}$ can be formally expressed as
\begin{equation*}
\Gamma(\gamma,\zeta,\theta) \equiv \bigvee_{\beta \in \{ \beta~\mid~\gamma\beta \in \bbt~\wedge~\lst(\beta) \in \exits(\mathcal{C}) \}} \left( \bigwedge_{\alpha \in \pref(\beta)} \zeta(\gamma\alpha)\theta\right).
\end{equation*}
It only remains to state, that whenever we have a proper iteration of $\mathcal{C}$, identified by $\Gamma(\gamma,\bar{\Psi},\theta^{\vec{\kappa}})$, then number of its iterations $\sym{s}_\gamma$ is linearly dependent on path counters $\kappa_{\alpha,1}, \ldots ,\kappa_{\alpha,n}$. Let us first discuss a case, when there is no occurrence of a basic symbol of an array type in $\Gamma(\gamma,\bar{\Psi},\theta^{\vec{\kappa}})$. We describe how to deal with arrays at the end of the section.

Let $\vec{\sym{a}}$ be a vector of all basic symbols of scalar types appearing in $\Gamma(\gamma,\bar{\Psi},\theta^{\vec{\kappa}})$. We want to state, that for each concrete input (i.e.~for each assignment of concrete values to symbols in $\vec{\sym{a}}$), there is a vector $\vec{p}$ of integers and some integer $q$, such that $\sym{s}_\gamma = \vec{p}^T \vec{\kappa} + q$, for each possible choice of concrete values for $\sym{s}_\gamma$ and path counters $\vec{\kappa}$ appearing in $\Gamma(\gamma,\bar{\Psi},\theta^{\vec{\kappa}})$. We can formally write the linear relationship as
\begin{equation*}
\forall \vec{\sym{a}} \exists \vec{p},q \forall \vec{\kappa}, \sym{s}_\gamma~\left( (\vec{\kappa} \ge \vec{0} \wedge \sym{s}_\gamma \ge 0 \wedge \Gamma(\gamma,\bar{\Psi},\theta^{\vec{\kappa}})) \rightarrow \sym{s}_\gamma = \max\{ 0, \vec{p}^T \vec{\kappa} + q \}\right).
\end{equation*}
Presence of function $\max$ in the formula solves cases, when linear relation would imply negative value for $\sym{s}_\gamma$. But $\sym{s}_\gamma$ is a natural number and negative value for $\sym{s}_\gamma$ only implies that $\mathcal{C}$ is not iterated at all. Therefore, in such situations we provide the alternative choice for $\sym{s}_\gamma$, to be equal to $0$.

In the presented formula the values $\vec{p}, q$ may vary for each choice of concrete input values $\vec{\sym{a}}$. Although an SMT solver may give us an answer that given formula is valid, we can only conclude that there indeed is a linear relationship between number of iterations of $\mathcal{C}$ and values of the path counters. But we do not know the relationship itself. To force an SMT solver to compute the linear relationship for us we do the following. We restrict ourselves only to those linear functions, where its coefficients are some fixed linear combinations of input values. In other words, we only focus on those relationships, where all variations of $\vec{p}, q$ for different inputs $\vec{\sym{a}}$ can be captured by a single (fixed) linear combination of input values $\vec{\sym{a}}$. This restriction allow us to move existential quantification to the front of the formula. And we get
\begin{equation*}
\exists M, \vec{w}, \forall \vec{\sym{a}},\vec{\kappa}, \sym{s}_\gamma~\left( (\vec{\kappa} \ge \vec{0} \wedge \sym{s}_\gamma \ge 0 \wedge \Gamma(\gamma,\bar{\Psi},\theta^{\vec{\kappa}})) \rightarrow \sym{s}_\gamma = \max\{ 0, (M \vec{\kappa} + \vec{w})^T \matr{\vec{\sym{a}} \\ 1} \} \right),
\end{equation*}
where $M$ and $\vec{w}$ are matrix and vector of unknown integers to be computed by an SMT solver respectively. If the formula is satisfiable, then we can get the integers as a part of model of the formula from an SMT solver. These integers define the linear combinations of input we wanted. Although a type of $M$ and a dimension of $\vec{w}$ might be clear from the formula, we rather discuss it. If number of basic symbols in the formula (i.e.~dimension of $\vec{a}$) is $m$ and number of path counters in the formula (i.e.~dimension of $\vec{\kappa}$) is $n$, type of matrix $M$ is $(m + 1) \times n$, and dimension of $\vec{w}$ is $m + 1$.

The last formula would be the result, if modern SMT solvers had performed well on it. We have experimented with powerful SMT solver Z3. But the performance was poor. We found very simple instances of the formula, where it took several minutes \todo{navic vysledek z SMT solveru byl spatny. Mozna bychom to meli poslat vyvojarum Z3, aby se na to podivali} for the SMT solver to check satisfiability for each of them. We discovered, that performance issue lies in nested general quantifiers brought to the formula through looping conditions. \todo{Mozna by bylo dobre se vyvojaru Z3 zeptat, proc ten vnoreny kvantifikator dela takove peklo. Ty formule nejsou nijak slozite} Fortunately, we do not need to express all iterations of $\mathcal{C}$ in each iteration of $\bbt$. It is sufficient for the relationship to ensure, that we stay in $\mathcal{C}$ in $(\sym{s}_\gamma - 1)$-st iteration of $\mathcal{C}$ in each iteration of $\bbt$. (Leaving of $\mathcal{C}$ in $\sym{s}_\gamma$-th iteration is then ensured by formulae collected from $\bar{\Psi}$ along paths to exit vertices). Therefore, if $\bar{\Psi}(\gamma)$ is a looping condition of a form $\forall s~(0 \le s < \sym{s}_\gamma \rightarrow \psi)$, then we can replace it by a condition $0 \le \sym{s}_\gamma - 1 \rightarrow \psi[s/(\sym{s}_\gamma - 1)]$. \todo{mozna by to chtelo zduvodnit trochu lepe...} Z3 SMT solver is able to decide satisfiable such updated formulae in tens of miliseconds. Which is significant performance improvement. To integrate the modification into our last formlula, we formally introduce a formula $\Omega$ defined on vertices of $\bbt$ as follows
\begin{equation*}
\Omega(\gamma) \equiv 
\begin{cases}
0 \le \sym{s}_\gamma - 1 \rightarrow \psi[s/(\sym{s}_\gamma - 1)] & \text{if $\gamma$ is a component vertex, where} \\ & \qquad \text{$\bar{\Psi}(\gamma) \equiv \forall s~(0 \le s < \sym{s}_\gamma \rightarrow \psi)$)} \\
\bar{\Psi}(\gamma) & \text{Otherwise}
\end{cases}
\end{equation*}
Using $\Omega$ we can finally define a formula $S_\gamma$
\begin{equation*}
S_\gamma \equiv \exists M, \vec{w}, \forall \vec{\sym{a}},\vec{\kappa}, \sym{s}_\gamma~\left( (\vec{\kappa} \ge \vec{0} \wedge \sym{s}_\gamma \ge 0 \wedge \Gamma(\gamma,\Omega,\theta^{\vec{\kappa}})) \rightarrow \sym{s}_\gamma = \max\{ 0, (M \vec{\kappa} + \vec{w})^T \matr{\vec{\sym{a}} \\ 1} \} \right),
\end{equation*}
whose satisfiability we check to compute the relationship.

The last thing to be discussed an occurrence of arrays in $\Gamma(\gamma,\bar{\Psi},\theta^{\vec{\kappa}})$. Although we try to express $\sym{s}_\gamma$ as a linear function, whose coefficients are some fixed linear combinations of input values of only scalar types, presence of array symbols in $\Gamma(\gamma,\bar{\Psi},\theta^{\vec{\kappa}})$ may strongly affect existence of such a relationship. We must ensure, that a relation exists not only for all possible values to symbols of scalar types, but also for all possible contents of arrays. Unfortunately, we cannot quantify a function symbol in first order language. Therefore, we solve the problem in two steps. First, we introduce a fresh function symbol $\rho$. This function accepts as arguments $\vec{a}$, i.e.~whole input to symbols of scalar types. The function returns for each input $\vec{a}$ a unique integer number. It means that $\rho$ is injective. Formally speaking we add the following axiom into extended theory $T_P$.
\begin{equation*}
\forall \vec{a}_1, \vec{a}_2~\rho(\vec{a}_1) = \rho(\vec{a}_2) \leftrightarrow \vec{a}_1 = \vec{a}_2.
\end{equation*}
The second step we need to do is to replace each function symbol application $\sym{A}(e_1, \ldots, e_k)$ occurring in $\Gamma(\gamma,\bar{\Psi},\theta^{\vec{\kappa}})$ by an application $\sym{A}(e_1, \ldots, e_k, \rho(\vec{a}))$. It means that basic symbols of array types have changed their type such that their dimension have been increased by one. This way we ensure, that for each assignment to $\vec{a}$ we have a fresh contents of all arrays for checking satisfiability of our formula $S_\gamma$.

We are ready to describe in Algorithm~\ref{alg:iterationsOfComponent} the computation of the expression identifying number of iterations of $\mathcal{C}$ as a linear function of path counters of $\bbt$. We assume, that the axiom for function $\rho$ is automatically inserted into extended theory $T_P$ of $P$. \todo{doplnit slovni popis pro Algorithm~\ref{alg:iterationsOfComponent}}

\begin{algorithm}[!htb]
\newcommand{\iterationsOfComponent}{$\texttt{iterationsOfComponent}$}
\caption{\iterationsOfComponent\texttt{(}$\gamma, \theta^{\vec{\kappa}}$\texttt{)} \label{alg:iterationsOfComponent}}
$S_\gamma \aset \exists M, \vec{w}, \forall \vec{\sym{a}},\vec{\kappa}, \sym{s}_\gamma~\left( (\vec{\kappa} \ge \vec{0} \wedge \sym{s}_\gamma \ge 0 \wedge \Gamma(\gamma,\Omega,\theta^{\vec{\kappa}})) \rightarrow \sym{s}_\gamma = \max\{ 0, (M \vec{\kappa} + \vec{w})^T \matr{\vec{\sym{a}} \\ 1} \} \right)$\;
Extend all function symbols applications in $S_\gamma$ by an extra parameter $\rho(\vec{a})$\;
\If{$S_\gamma$ is satisfiable (ask an SMT solver)}{
   retrieve $M, \vec{w}$ from a model of $S_\gamma$ computed by the SMT solver\;
   \Return{$\max\{ 0, (M \vec{\kappa} + \vec{w})^T \matr{\vec{\sym{a}} \\ 1} \}$}
}
\Return{$\star$}
\end{algorithm}

\subsection{Computation of looping condition $\varphi^{\vec{\kappa}}$}
\label{sec:loopingCondition}

Let $\alpha$ be a loop entry vertex of $\bbt$ of a program $P$. Then we can build a backbone tree $\bbt'$ of an induced program of the loop at the loop entry vertex. Let $\pi_1', \ldots, \pi_n'$, where $n = \eta_\bbt(\alpha)$, be all the backbone paths of $\bbt'$. After symbolic execution of $\bbt'$ we receive filled in function $\Psi'$ and according to Section~\ref{sec:IterateTheta}, we can then also compute an iterated symbolic state $\theta^{\vec{\kappa}}$ of $\bbt'$. Now we are ready to express a looping condition $\varphi^{\vec{\kappa}}$ of the loop over-approximating all path conditions representing feasible paths iterating tree $\bbt'$. The formula $\varphi^{\vec{\kappa}}$ is defined as follows
\begin{align*}
\varphi^{\vec{\kappa}} \equiv \bigwedge_{i = 1}^n (\forall \tau_i~&(0 \leq \tau_i < \kappa_{\alpha,i} \rightarrow \exists \vec{\tau}_i~(\vec{0} \leq \vec{\tau}_i \leq \vec{\kappa}_i \wedge pc_{\bbt'}(\pi_i',\Psi') \theta^{\vec{\kappa}} [\kappa_{\alpha,1}/\tau_1, \ldots, \kappa_{\alpha,n}/\tau_n] )), \quad \text{where} \\ &
\vec{\tau}_i = (\tau_1, \ldots, \tau_{i-1}, \tau_{i+1}, \ldots, \tau_n)^T, \\ &
\vec{\kappa}_i = (\kappa_{\alpha,1}, \ldots, \kappa_{\alpha,i-1}, \kappa_{\alpha,i+1}, \ldots, \kappa_{\alpha,n})^T.
\end{align*}
The formula can be explained as follows. Feasible path iterating in the induced program of the loop give us concrete values of the paths counters $\kappa_{\alpha,1}, \ldots, \kappa_{\alpha,n}$. For each path counter $\kappa_{\alpha,i}$ (i.e.~all its concrete values) the looping condition must ensure, that the backbone path $\pi_i'$ is executed at least $\kappa_{\alpha,i}$ times. Therefore, for each execution number $\tau_i$ between $0$ and $\kappa_{\alpha,i} - 1$ there must exist actual execution numbers $\tau_1, \ldots, \tau_{i-1}, \tau_{i+1}, \ldots, \tau_n$ of remaining backbone paths lying in their limits (i.e.~$0 \leq \tau_j \leq \kappa_{\alpha,j}$) such that execution of $\pi_i'$ is possible, i.e.~path condition of $\pi_i'$ is satisfiable. This must be ensured for execution numbers of all backbone paths. Note that we substitute $\theta^{\vec{\kappa}}$ into path condition $pc_{\bbt'}(\pi_i',\Psi')$. This is necessary, because values in the path condition capture only single execution along the path. Substitution converts those values into functions of path counters, so they represent any possible number of iterations of backbone paths. Also note that we do not have to ensure in the looping condition that path $\pi_i'$ is also executed at most $\kappa_{\alpha,i}$ times. This property is handled by backbone paths in $\bbt$, since they contains paths from loop entry vertices to all possible loop exits. Assertions along these paths do the job.

\begin{lem}
Only free variables in $\varphi^{\vec{\kappa}}$ are the path counters $\kappa_{\alpha,1}, \ldots, \kappa_{\alpha,n}$.
\begin{proof}
Obvious.
\end{proof}
\end{lem}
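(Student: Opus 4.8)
The plan is to prove the slightly more informative statement by induction on the maximal loop-nesting depth inside $\mathcal{C}$: the only variables of $T_P$ (i.e.\ members of $\mathcal{K}\cup\mathcal{T}\cup\mathcal{X}$) occurring free in $\varphi^{\vec{\kappa}}$ are the path counters $\kappa_{\alpha,1},\ldots,\kappa_{\alpha,n}$ introduced for the backbone paths $\pi_1',\ldots,\pi_n'$ of $\bbt'$ (where $n=\eta_\bbt(\alpha)$), and every other symbol occurring in it is a basic symbol from $\Sigma(\mathcal{S}_P)$. First I would inspect the top-level shape of $\varphi^{\vec{\kappa}}$. It is a conjunction of $n$ conjuncts, and in the $i$-th conjunct the parameter $\tau_i$ is bound by the leading $\forall\tau_i$ while the remaining parameters $\tau_1,\ldots,\tau_{i-1},\tau_{i+1},\ldots,\tau_n$ are bound by $\exists\vec{\tau}_i$; hence every parameter introduced anywhere in $\varphi^{\vec{\kappa}}$ is bound, and the only free-variable occurrences visible at this level are those of $\kappa_{\alpha,i}$ and of the components of $\vec{\kappa}_i$, all of which belong to $\{\kappa_{\alpha,1},\ldots,\kappa_{\alpha,n}\}$. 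It then remains to check that the body $pc_{\bbt'}(\pi_i',\Psi')\,\theta^{\vec{\kappa}}\,[\kappa_{\alpha,1}/\tau_1,\ldots,\kappa_{\alpha,n}/\tau_n]$ of the $i$-th conjunct contributes no further free variables.

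I would decompose this into three steps. (i) For $pc_{\bbt'}(\pi_i',\Psi')$ I would use the definitions of $\hat{pc}$ and $pc$ from Section~\ref{sec:PathCond}: the contribution of every loop entry vertex $\gamma$ of $\bbt'$ on the path $\pi_i'$ is wrapped in the quantifier $\exists\vec{\kappa}_\gamma$ over exactly the path counters introduced at $\gamma$, and the only $\Psi'$-values that can mention such a $\kappa_{\gamma,j}$ are $\Psi'(\gamma)$ itself and the $\Psi'$-values at descendants of $\gamma$ (the latter only through the merged iterated state $\theta^{\vec{\kappa}_\gamma}$), all of which lie within the scope of that quantifier; so every nested path counter is bound. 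By the induction hypothesis applied to the inner loop at $\gamma$, the looping condition $\varphi^{\vec{\kappa}_\gamma}$ stored in $\Psi'(\gamma)$ has only the $\kappa_{\gamma,j}$ free and otherwise only basic symbols, and the $\Psi'$-values at non-loop-entry vertices, being produced by the instruction semantics of Section~\ref{sec:ExecBBT} from $\theta_G$ and merged iterated states, likewise mention only basic symbols plus transient nested counters that are again captured. Hence $pc_{\bbt'}(\pi_i',\Psi')$ has no free $T_P$-variables -- only basic symbols remain. (ii) Applying $\theta^{\vec{\kappa}}$ replaces each basic symbol $\sym{a}$ by $\theta^{\vec{\kappa}}(\var{a})$, with array applications beta-reduced so that no argument placeholder survives free; here I would check, going through Algorithm~\ref{alg:compFixpoint} together with Tables~\ref{tab:iterateVarValue_scalar} and~\ref{tab:iterateVarValue_array} and Algorithm~\ref{alg:iterationsOfComponent}, that every iterated value is assembled from basic symbols, $\ite$, linear integer arithmetic, $\star$, the path counters $\kappa_{\alpha,1},\ldots,\kappa_{\alpha,n}$, and parameters that always occur under an enclosing existential inside the value, and that each artificial symbol $\sym{s}_\gamma$ has been resolved to a fixed linear function of the $\kappa_{\alpha,j}$ and the scalar basic symbols (the auxiliary injective symbol $\rho$ appearing only inside the satisfiability query $S_\gamma$, not in the returned expression), while the restriction $\theta^{\vec{\kappa}}|_{\mathcal{V}}$ discards the artificial variables themselves. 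Thus after the $\theta^{\vec{\kappa}}$-substitution the only variable occurrences are those of $\kappa_{\alpha,1},\ldots,\kappa_{\alpha,n}$ and of still-bound parameters. (iii) The substitution $[\kappa_{\alpha,1}/\tau_1,\ldots,\kappa_{\alpha,n}/\tau_n]$ rewrites each $\kappa_{\alpha,j}$ to $\tau_j$, and $\tau_j$ is bound by $\forall\tau_j$ when $j=i$ and by $\exists\vec{\tau}_i$ when $j\ne i$, so the body contributes no free variable whatsoever.

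Combining the three steps with the top-level analysis yields that the only free variables of $\varphi^{\vec{\kappa}}$ are $\kappa_{\alpha,1},\ldots,\kappa_{\alpha,n}$, as claimed. The base case of the induction is when $\bbt'$ contains no loop entry vertex, where step (i) is immediate ($pc_{\bbt'}(\pi_i',\Psi')$ has no $\hat{pc}$-existentials and mentions only basic symbols), and steps (ii) and (iii) go through verbatim. I expect step (ii) to be the main obstacle: it is the only place where one must trace through the intricate constructions of Section~\ref{sec:IterateTheta} -- the fixpoint over the semi-lattice $\mathcal{L}$, the scalar and array combination tables, the $\tau$-substitution mechanism, and the nested-loop-count formula -- and verify that none of them leaks an unbound parameter, a stray argument placeholder, or a path counter of a different loop into $\theta^{\vec{\kappa}}(\var{a})$; the array branch is the delicate part, because of the existential blocks in $c_i,d_i,h_i$ and the $[\kappa_l/\tau_l]$ rewrites hidden inside $\delta_i$.
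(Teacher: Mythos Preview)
Your proposal is correct and far more detailed than the paper's own proof, which consists of the single word ``Obvious.'' The paper treats the claim as immediate from inspection of the definition of $\varphi^{\vec{\kappa}}$: at the top level, every $\tau_j$ is bound (by $\forall\tau_i$ or $\exists\vec{\tau}_i$), the substitution $[\kappa_{\alpha,1}/\tau_1,\ldots,\kappa_{\alpha,n}/\tau_n]$ removes all occurrences of the $\kappa_{\alpha,j}$ from the body, and basic symbols $\sym{a}$ are constant or function symbols of the extended theory $T_P$ rather than variables, so they do not count as free variables at all. That is the entire content the authors have in mind.

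Your inductive decomposition over nesting depth, the three-step analysis of the body, and especially the careful audit in step~(ii) of what $\theta^{\vec{\kappa}}$ can introduce are all sound and constitute a genuine proof rather than an appeal to evidentness. What this buys you is an actual verification that the intricate machinery of Section~\ref{sec:IterateTheta} (the fixpoint, the combination tables, the $\tau$-substitutions, the nested-loop resolution) does not leak stray parameters, placeholders, or foreign path counters---something the paper never checks. The paper's approach, by contrast, simply trusts the construction and moves on; your approach would be the one needed if the lemma were ever challenged.
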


\begin{lem}
Let $\Psi(\alpha)$ be updated to a formula $\varphi^{\vec{\kappa}}$ computed as described above. Then for any path condition $\varphi$ representing a feasible path iterating $\bbt'$ the sentence $\varphi \rightarrow \hat{pc}_\bbt(\alpha,\Psi,\true)$ is valid.
\begin{proof}
It directly follows from the construction of $\hat{pc}_\bbt(\alpha,\Psi,\true)$.
\end{proof}
\end{lem}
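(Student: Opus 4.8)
The plan is to unfold $\hat{pc}_\bbt(\alpha,\Psi,\true)$ and then, given a concrete feasible run, to read off witnesses for the quantified path counters from how many times that run loops around $\mathcal{C}$. Since $\alpha$ is a loop entry vertex we have $\eta(\alpha)=n>0$, so the third case of the definition of $\hat{pc}$ applies and, with $\Psi(\alpha)\equiv\varphi^{\vec{\kappa}}$,
\[
\hat{pc}_\bbt(\alpha,\Psi,\true)\;\equiv\;\exists\vec{\kappa}_\alpha\,\bigl(\vec{0}\le\vec{\kappa}_\alpha\wedge\varphi^{\vec{\kappa}}\bigr);
\]
by the previous lemma the only free variables of $\varphi^{\vec{\kappa}}$ are the $\kappa_{\alpha,i}$, so this is a sentence over the basic symbols. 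It therefore suffices to exhibit, for every interpretation of the basic symbols (i.e.\ every input) satisfying $\varphi$, nonnegative integers $c_1,\dots,c_n$ for $\kappa_{\alpha,1},\dots,\kappa_{\alpha,n}$ under which $\varphi^{\vec{\kappa}}$ becomes true.

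Fix a feasible path $\beta$ iterating $\bbt'$ whose path condition is $\varphi$, together with an input satisfying $\varphi$; by soundness of symbolic execution this input drives the run exactly along $\beta$. Decompose $\beta$ into its successive trips around $\mathcal{C}$ (each trip from $\lst(\alpha)$ back to $\lst(\alpha)$ reducing to a unique backbone path of $\bbt'$), and let $c_i$ be the number of trips reducing to $\pi_i'$; take these as the witnesses, so $\vec{0}\le\vec{c}$ is immediate and the degenerate case $c_i=0$ makes the $i$-th conjunct of $\varphi^{\vec{\kappa}}$ vacuous. To verify the remaining conjuncts, fix $i$ and $\tau_i\in\{0,\dots,c_i-1\}$ and consider the moment $\beta$ enters $\pi_i'$ for the $(\tau_i+1)$-st time. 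At that moment every other $\pi_j'$ has been executed some $t_j\le c_j$ times, so $\vec{\tau}_i:=(t_1,\dots,t_{i-1},t_{i+1},\dots,t_n)^T$ satisfies $\vec{0}\le\vec{\tau}_i\le\vec{\kappa}_i$, and it remains to check $pc_{\bbt'}(\pi_i',\Psi')\,\theta^{\vec{\kappa}}[\kappa_{\alpha,1}/\tau_1,\dots,\kappa_{\alpha,n}/\tau_n]$ under the chosen input. Feasibility of $\beta$ gives that the constraints accumulated along this trip hold for the input; the soundness of the iterated state $\theta^{\vec{\kappa}}$ produced in Section~\ref{sec:IterateTheta} gives that the symbolic state reached just before this trip is over-approximated by $\theta^{\vec{\kappa}}$ evaluated at the counts $(t_1,\dots,\tau_i,\dots,t_n)$, every variable being mapped either to its exact value there or to $\star$. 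Substituting that state into $pc_{\bbt'}(\pi_i',\Psi')$ yields a formula that agrees with the real trip constraints wherever the state is exact and, by item~(5) in the definition of symbolic expressions, leaves the remaining atoms unconstrained in a way that can only weaken it; hence it is implied by $\varphi$ for the input, as needed. The fact that $pc_{\bbt'}(\pi_i',\Psi')$ is itself only over-approximate --- because $\Psi'$ already carries the looping conditions of loops nested in $\mathcal{C}$ --- is handled by the same argument one nesting level down, so the clean write-up is an induction on loop-nesting depth.

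The main obstacle is the middle step: making precise that $\theta^{\vec{\kappa}}$, evaluated at concrete counts, over-approximates the actual symbolic state reached along $\beta$, and that substituting an over-approximated state into a path condition only weakens it relative to $\varphi$. This is exactly where the $\star$/fresh-propositional-variable semantics and the correctness of Tables~\ref{tab:iterateVarValue_scalar} and~\ref{tab:iterateVarValue_array} are really invoked; since the paper folds all of it into ``the construction of $\hat{pc}$'', in a full proof I would either cite the corresponding soundness statement for $\theta^{\vec{\kappa}}$ or prove it inline as the crux, the rest being the bookkeeping of witnesses above.
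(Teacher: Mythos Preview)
Your proposal is correct and is, in effect, a faithful expansion of what the paper leaves implicit: the paper's own proof is the single sentence ``It directly follows from the construction of $\hat{pc}_\bbt(\alpha,\Psi,\true)$'', so there is no detailed argument to compare against. Your unfolding of $\hat{pc}$ via the third case, your choice of witnesses $c_i$ as the actual trip counts along the feasible run, and your instantiation of $\vec{\tau}_i$ by the counts accumulated just before the $(\tau_i{+}1)$-st traversal of $\pi_i'$ are exactly the kind of bookkeeping the phrase ``from the construction'' is meant to summarise.

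Where you go beyond the paper is in isolating the real obligation: that $\theta^{\vec{\kappa}}$ evaluated at concrete counts over-approximates the true symbolic state, and that substituting such a state into $pc_{\bbt'}(\pi_i',\Psi')$ can only weaken it (via the $\star$ semantics of item~(5)). The paper never states this as a separate lemma; it is tacitly assumed to be guaranteed by the design of Tables~\ref{tab:iterateVarValue_scalar} and~\ref{tab:iterateVarValue_array} and of the fix-point in Algorithm~\ref{alg:compFixpoint}. Your remark that a clean write-up would proceed by induction on loop-nesting depth, and would either cite or prove inline the soundness of $\theta^{\vec{\kappa}}$, is the right diagnosis of what a rigorous version of the paper's one-liner actually requires.
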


\subsection{Discussing Relaxations}

We finish computation of loop over-approximation by a brief discussion of the relaxations we use in the computation of $\hat{\varphi}$.
There are many loops in real-world programs where interleaving of paths through the loops is not important for reasoning about conditions below them. For example many C++ programs manipulate sequential containers by calling Standard Template Library functions like \texttt{copy}, \texttt{find}, \texttt{find\_if}, \texttt{transform}, \texttt{for\_each}, \texttt{count}, \texttt{count\_if}. Loops in these functions commonly have the property. And it is also very common that iterations of loops are controlled by values following monotone progressions. Consider for example concept of iterators in C++ Standard Template Library. Branchings below such loops are mostly dependent on a final state of these progressions. Therefore, using the relaxations we can compute $\hat{\varphi}$ such that it is well balanced between complexity and precision.

\section{Soundness and Incompleteness}\label{sec:soundness}

In this section we formulate and prove soundness and incompleteness theorems for our algorithm.

\begin{thm}[Soundness]
Let $\hat{\varphi}$ be the necessary condition computed by our algorithm for a given target program location. If $\hat{\varphi}$ is not satisfiable, then the target location is not reachable in that program.
\end{thm}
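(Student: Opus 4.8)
The plan is to prove the contrapositive by showing that the computed $\hat{\varphi}$ is an \emph{over-approximation} of $\Phi_P$ in the sense used earlier: for every feasible path from $l_s$ to $l_t$ with path condition $\varphi$, the implication $\varphi \rightarrow \hat{\varphi}$ is valid. Granting this, the theorem is immediate: if $\hat{\varphi}$ is unsatisfiable, then no $\varphi \in \Phi_P$ can be satisfiable, so by the bijection between feasible paths and satisfiable path conditions there is no feasible path to $l_t$, i.e.\ $l_t$ is unreachable. By Lemma~1, $\Phi_P$ (when non-empty) partitions into finitely many classes $A_1,\dots,A_n$, each carrying a unique backbone path $\pi_{A_i} \in B_\bbt$, and $\hat{\varphi} = \bigvee_{\pi \in B_\bbt} pc_\bbt(\pi,\Psi)$; hence it suffices to show that for each class $A$ with backbone $\pi$ the disjunct $pc_\bbt(\pi,\Psi)$ over-approximates $A$, i.e.\ $\varphi_\alpha \rightarrow pc_\bbt(\pi,\Psi)$ is valid for every feasible $\alpha \in A$. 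The ``$\false$'' branch and the pruning of unsatisfiable prefixes in Algorithm~\ref{alg:executeBBT} are sound for this direction, since an unsatisfiable over-approximated $pc$ certifies that no feasible path is reducible to any completion of that prefix, so dropping the corresponding disjuncts cannot omit a needed one.

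I would establish this claim by structural induction on the loop-nesting depth of $P$, and, at fixed depth, by induction along the vertices of $\pi = v_1 \cdots v_k$, maintaining the invariant that for every prefix $\alpha$ of $\pi$ the formula $pc_\bbt(\alpha,\Psi)$ over-approximates the partial path conditions, and the symbolic state $\theta$ carried by Algorithm~\ref{alg:executeBBT} over-approximates the partial symbolic states, of all feasible $\alpha' \in A$ whose reduction shares the prefix $\alpha$. The base case (no loops) is trivial: $A = \{\pi\}$, no loop-entry vertex occurs, $\Psi$ records exactly the formulae produced by the instruction semantics of Section~\ref{sec:ExecBBT}, $pc_\bbt(\pi,\Psi)$ is literally the path condition $\varphi_\pi$, and $\varphi_\pi \rightarrow \varphi_\pi$. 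The inductive step has the routine instruction cases of Section~\ref{sec:ExecBBT}, for which one checks directly that $I(\theta,\varphi)$ preserves over-approximation, and one hard case: $v_{j+1}$ is a loop entry vertex, where the algorithm splices in $(\varphi^{\vec{\kappa}}\theta,\, \theta^{\vec{\kappa}}\theta)$ obtained from \texttt{overapproximateLoop}.

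For the hard case I would first invoke the induction hypothesis on the induced program $P(\mathcal{C},v)$, which has strictly smaller loop-nesting depth: it gives that for each backbone path $\pi_i'$ of $\bbt'$ the formula $pc_{\bbt'}(\pi_i',\Psi')$ over-approximates a single pass through the loop and $\Theta'(\pi_i')$ over-approximates the resulting symbolic state. I then need (i) that the iterated state $\theta^{\vec{\kappa}}$ — together with the artificial symbols $\sym{s}_\gamma$ and the linear relations inferred in Section~\ref{sec:iterNestedLoop} — over-approximates the composition of \emph{arbitrarily many} one-pass executions under \emph{arbitrary} interleaving of the $\pi_i'$; this requires an inner induction on the number of iterations plus a case analysis verifying that \emph{each} entry of Tables~\ref{tab:iterateVarValue_scalar} and~\ref{tab:iterateVarValue_array} is sound, in the sense that under the counter assignment it induces it either reproduces the genuinely composed value or falls back to $\star$ (resp.\ $\lambda\vec{\chi}.\star$), which over-approximates anything since $\star$ is the bottom of $\mathcal{L}_0$; monotonicity of the step function of Algorithm~\ref{alg:compFixpoint} and the finite height of $\mathcal{L}$ then give existence of the fix-point, and since it is reached from $\theta_\star$ by sound steps it is itself a sound over-approximation. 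I also need (ii) that the looping condition $\varphi^{\vec{\kappa}}$, after the $\exists \vec{\kappa}_\alpha(\vec{0} \le \vec{\kappa}_\alpha \wedge \cdot)$ wrapper supplied by $\hat{pc}$, is implied by the path condition of every concrete looping segment $\beta_i$: this is the content of the last lemma of Section~\ref{sec:loopingCondition} (which follows from the definition of $\hat{pc}_\bbt(\alpha,\Psi,\true)$), instantiating $\kappa_{\alpha,i}$ to the number of times $\pi_i'$ is taken in $\beta_i$ and $\tau_i$ to the concrete execution indices. Finally, soundness of the weakening of $\bar{\Psi}(\gamma)$ at line~\ref{l:newLpCnd} of Algorithm~\ref{alg:introduceArtificials} follows because validity of the stronger formula entails validity of the weaker one, and soundness of the $\rho(\vec{a})$-extension of array symbols in Algorithm~\ref{alg:iterationsOfComponent} follows because it only makes array arguments more distinct, so any linear relation valid after the extension was already forced before it.

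The main obstacle is clearly part (i): establishing that the loop summarisation of Sections~\ref{sec:iterateVar} and~\ref{sec:iterNestedLoop} — the two merge tables and the SMT-discovered linear function for nested loops — is a sound over-approximation under all interleavings of loop paths. Everything else (the outer structural induction, the single-instruction cases, the quantifier bookkeeping of $\hat{pc}$ and $pc_\bbt$, and the pruning argument) is bookkeeping that tracks the informal description in the Overview of Section~\ref{sec:Algorithm}.
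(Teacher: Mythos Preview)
Your proposal is correct and follows essentially the same approach as the paper: argue the contrapositive by showing that each disjunct $pc_\bbt(\pi,\Psi)$ is implied by the path condition of every feasible path reducible to $\pi$, relying on the looping conditions $\varphi^{\vec{\kappa}}$ being implied by all path conditions of the corresponding loop. The paper's own proof is in fact only a three-line informal sketch of exactly this idea, so your structural induction on loop-nesting depth, your explicit invariant along backbone prefixes, and your identification of the table-soundness in part~(i) as the crux are all detail that the paper asserts but does not spell out.
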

\begin{proof}[Informal proof]
We build any looping condition $\varphi^{\vec{\kappa}}$ such that it is implied by all path conditions of an analysed loop. And each formula $pc_\bbt(\pi_{i}, \Psi)$ collects all the predicated along backbone $\pi_i$ and it also collects all looping conditions at loop entries along the path. Therefore, $pc_\bbt(\pi_{i}, \Psi)$ must be implied by any path condition of any symbolic execution along $\pi_i$. We compute $\hat{\varphi}$ as a disjunction of formulae $\varphi_i$ for all backbones. Since any program path leading to the target location must follow some backbone (with possible temporary escapes into loops along the backbone), its path condition exists (i.e.~it is satisfiable formula) only if $\hat{\varphi}$ is satisfiable.
\end{proof}

\begin{thm}[Incompleteness]
There is a program and an unreachable target location in it for which the formula $\hat{\varphi}$ computed by our algorithm is satisfiable.
\end{thm}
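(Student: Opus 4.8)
The plan is to exhibit a concrete program $P$ with an unreachable target location $l_t$ such that the algorithm, because of the relaxations discussed in Section~\ref{sec:loopingCondition} and Section~\ref{sec:IterateTheta}, produces a satisfiable $\hat{\varphi}$. The source of incompleteness is deliberate and documented: when a loop body contains branching, the looping condition $\varphi^{\vec{\kappa}}$ is built from the path counters $\kappa_{\alpha,1},\ldots,\kappa_{\alpha,n}$ of the induced backbone tree $\bbt'$ by \emph{merging} per-variable symbolic states independently, thereby forgetting the actual interleaving of which backbone path ran in which iteration. Also, the iterated state $\theta^{\vec{\kappa}}$ only tracks values that follow arithmetic progressions and replaces everything else by $\star$ (which satisfies $t=\star$ for every $t$, and turns predicates into fresh propositional variables by item~(5) of Section~\ref{sec:SymbolExpr}). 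So the high-level strategy is: pick a loop whose exit condition genuinely depends on the interleaving of two body paths (or on a non-linear relationship), arrange that no real execution can satisfy the assertion guarding $l_t$, and then check that the abstracted formula loses exactly the information needed to detect this.

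Concretely I would take a program of the shape: a single loop at entry vertex $w$ with a two-way branch inside, where one branch does \texttt{x} $\aset$ \texttt{x+1} and the other does \texttt{x} $\aset$ \texttt{x-1}, iterating some fixed number of times, followed after the loop by \texttt{assert}$(\gamma)$ on the edge into $l_t$ with $\gamma$ a predicate that is false for every reachable value of \texttt{x} (for instance, $\gamma$ forces \texttt{x} to a value of the wrong parity, or to a value exceeding what any interleaving can produce). The induced program $P(\mathcal{C},w)$ has two backbone paths $\pi_1',\pi_2'$ with counters $\kappa_{\alpha,1},\kappa_{\alpha,2}$; symbolic execution of $\bbt'$ gives $\bar\Theta(\pi_1')(\texttt{x}) = \sym{x}+1$ and $\bar\Theta(\pi_2')(\texttt{x}) = \sym{x}-1$, so Table~\ref{tab:iterateVarValue_scalar} (the ``arithmetic progression'' column applied twice) yields the iterated value $\theta^{\vec{\kappa}}(\texttt{x}) = \sym{x} + \kappa_{\alpha,1} - \kappa_{\alpha,2}$. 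Then along the backbone of $P$ that exits the loop and reaches $l_t$, the formula $pc_\bbt(\pi,\Psi)$ becomes (after the existential closure over $\vec\kappa_\alpha$) something like $\exists \kappa_{\alpha,1},\kappa_{\alpha,2}\ge 0\ (\kappa_{\alpha,1}+\kappa_{\alpha,2} = c \wedge \varphi^{\vec{\kappa}} \wedge \gamma[\sym{x}\mapsto \sym{x}+\kappa_{\alpha,1}-\kappa_{\alpha,2}])$, and one checks the looping condition $\varphi^{\vec{\kappa}}$ imposes no further obstruction. I would then verify by hand that this existential formula is satisfiable (choosing $\kappa_{\alpha,1},\kappa_{\alpha,2}$ with the right difference and right sum) while no genuine execution reaches $l_t$; the cleanest way to make the unreachability airtight is to have $\gamma$ contradict an \emph{invariant} that the real program maintains but that the merged abstraction destroys --- e.g.\ a bound $|$\texttt{x}$-\sym{x}| \le$ (iteration count) combined with a parity constraint, so that no concrete interleaving hits the asserted value even though the abstracted difference $\kappa_{\alpha,1}-\kappa_{\alpha,2}$ ranges freely.

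Alternatively, and perhaps even more simply, I would use the $\star$-relaxation directly: put inside the loop body a statement whose effect on a variable is not an arithmetic progression (say a multiplication, \texttt{x} $\aset$ \texttt{2*x}, or a nested inner loop with no linear iteration count), so Table~\ref{tab:iterateVarValue_scalar} falls through to $\star$ and $\theta^{\vec{\kappa}}(\texttt{x}) = \star$; by item~(5) the assertion predicate $\gamma$ on the edge to $l_t$ collapses to a fresh propositional variable $A$, making $pc_\bbt(\pi,\Psi)$ trivially satisfiable regardless of whether $\gamma$ can ever hold. This is the shortest route to a witness, so I would present that program as the main example and only mention the interleaving example as a remark.

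The main obstacle is not the construction itself but the bookkeeping needed to prove both halves rigorously for the chosen $P$: (i) carefully tracing Algorithms~\ref{alg:executeBBT}, \ref{alg:overapproximateLoop}, \ref{alg:compFixpoint} and \ref{alg:iterateVar} on $P$ to pin down the exact syntactic form of $\Psi$, $\theta^{\vec{\kappa}}$ and hence $\hat\varphi$, and then showing $\hat\varphi$ is satisfiable --- this must use the $\star$/fresh-propositional-variable semantics of item~(5) correctly, since that is where satisfiability is ``injected''; and (ii) giving a short but genuine argument that $l_t$ is unreachable in $P$ under the actual operational semantics of Section~2, i.e.\ exhibiting the invariant that the real program respects. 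I expect (i) to be the more delicate of the two, because it requires care about when the fixed-point computation stabilises at $\star$ versus at a progression expression, and about the $\hat{pc}$ existential-quantifier wrapping at the loop entry vertex.
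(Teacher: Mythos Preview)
Your second approach (forcing $\theta^{\vec{\kappa}}(\var{x})=\star$ via a non-arithmetic update and then invoking item~(5) of Section~\ref{sec:SymbolExpr}) is correct in principle and shares the paper's mechanism: the iterated state loses the variable, predicates mentioning it collapse to fresh propositional letters, and $\hat\varphi$ becomes trivially satisfiable. The paper, however, chooses a much sharper witness: the two-line program
\begin{center}
\texttt{int i = 1; while (i < 3) \{ if (i == 2) i = 1; else i = 2; \}}
\end{center}
with the target placed \emph{after} the loop. Here unreachability is immediate because the loop never terminates (\var{i} oscillates between $1$ and $2$), so part~(ii) of your plan costs nothing. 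And since \var{i} does not follow any of the progression patterns in Table~\ref{tab:iterateVarValue_scalar}, the algorithm's over-approximation degenerates and $\hat\varphi$ comes out as $\true$. All the bookkeeping you flag as ``the more delicate'' half simply evaporates. What the paper's route buys is minimality; what yours buys is a clearer isolation of \emph{which} relaxation (the $\star$/fresh-variable rule) is responsible.

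Your first approach, by contrast, has a real gap. With a two-way branch doing $\var{x}\aset\var{x}+1$ versus $\var{x}\aset\var{x}-1$ and a \emph{nondeterministic} (or $\var{x}$-independent) guard, the abstraction $\theta^{\vec\kappa}(\var{x})=\sym{x}+\kappa_{\alpha,1}-\kappa_{\alpha,2}$ with $\kappa_{\alpha,1}+\kappa_{\alpha,2}=c$ is \emph{exact}: the reachable values of \var{x} after $c$ iterations are precisely $\sym{x}+k$ for $k\in\{-c,-c+2,\dots,c\}$, which is exactly the range of $\kappa_{\alpha,1}-\kappa_{\alpha,2}$. No invariant is destroyed, so any $\gamma$ unreachable concretely is also unsatisfiable abstractly. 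To rescue this example you would need the branch guard to depend on \var{x} itself (as it does in the paper's program), and then you are essentially back to the paper's construction.
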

\begin{proof}
Let us consider the following C code:
\begin{verbatim}
   int i = 1; while (i < 3) { if (i == 2) i = 1; else i = 2; }
\end{verbatim}
The loop never terminates. Therefore, a program location below it is not reachable. But $\hat{\varphi}$ computed for that location is equal to $\true$, since variable $\var{i}$ does not follow a monotone progression.
\end{proof}
 
\section{Dealing with Quantifiers}\label{sec:quantifiers}

We can ask an SMT solver whether a computed necessary condition $\hat{\varphi}$ is satisfiable or not. And if it is, we may further ask for some its model. As we will see in Section~\ref{sec:apps} such queries to a solver should be fast. Unfortunately, our experience with solvers shows that presence of quantifiers in $\hat{\varphi}$ usually causes performance issues. Although SMT technology evolves quickly, we show in this section how to overcome this issue now by unfolding universally quantified formulae the looping conditions $\varphi^{\vec{\kappa}}$ are made of.

Universally quantified variables $\tau_i$ in $\varphi^{\vec{\kappa}}$ are always restricted from above by path counters $\kappa_i$ counting iterations of backbones $\pi_i$ of analysed loop. Let us choose some upper limits $K_i > 0$ for the path counters $\kappa_i$. Since each $\tau_i$ ranges over a finite set of integers $\{ 0, \ldots, K_i - 1 \}$ now, we can unfold each universally quantified formula in $\varphi^{\vec{\kappa}}$ for each possible value of $\tau_i$. Having eliminated the universal quantification, we can also eliminate existential quantification of all $\kappa_i$ and all $\vec{\tau_i}$ by redefining them as uninterpreted integer constants. For given upper limits $\vec{K}$ for the path counters $\vec{\kappa}$ we denote an unfolded necessary condition $\hat{\varphi}$ by $\hat{\varphi}^{\vec{K}}$.

For any $\vec{K}$ the formula $\hat{\varphi}^{\vec{K}}$ represents wakened $\hat{\varphi}$. Higher values we choose in $\vec{K}$, then we get closer to the precision of $\hat{\varphi}$. In practice we must choose moderate values $\vec{K}$, since the unfolding process makes $\hat{\varphi}^{\vec{K}}$ much longer then $\hat{\varphi}$.

In some cases an SMT solver is able to quickly decide satisfiability of $\hat{\varphi}$. Therefore, we ask the solver for satisfiability of $\hat{\varphi}$ in parallel with the unfolding procedure described above. And there is a common timeout for both queries. We take the fastest answer. In case both queries exceeds the timeout, the condition $\hat{\varphi}$ cannot help a tool to cover given target location.

\section{Integration into Tools}\label{sec:apps}

Tools typically explore program paths iteratively. At each iteration there is a set of program locations $\{ v_1, \ldots, v_k \}$, from which the symbolic execution may continue further. At the beginning the set contains only program entry location. In each iteration of the symbolic execution the set is updated such that actions of program edges going out from \emph{some} locations $v_i$ are symbolically executed. Different tools use different systematic and heuristic strategies for selecting locations $v_i$ to be processed in the current iteration. It is also important to note that for each $v_i$ there is available an actual path condition $\varphi_i$ capturing already taken symbolic execution from the entry location up to $v_i$.

When a tool detects difficulties in some iteration to cover a particular program location, then using $\hat{\varphi}$ it can restrict selection from the whole set $\{ v_1, \ldots, v_k \}$ to only those locations $v_i$, for which a formula $\varphi_i \wedge \hat{\varphi}$ is satisfiable. In other words, if for some $v_i$ the formula $\varphi_i \wedge \hat{\varphi}$ is \emph{not} satisfiable, then we are guaranteed there is no real path from $v_i$ to the target location. And therefore, $v_i$ can safely be removed from the consideration.

Tools like \Sage, \Pex or \Cute combine symbolic execution with concrete one. Let us assume that a location $v_i$, for which the formula $\varphi_i \wedge \hat{\varphi}$ is satisfiable, was selected in a current iteration. These tools require a concrete input to the program to proceed further from $v_i$. Such an input can directly be extracted from any model of the formula $\varphi_i \wedge \hat{\varphi}$.

\section{Experimental Results}\label{sec:experiments}

We implemented the algorithm in an experimental program, which we call \APC. We also prepared a small set of benchmark programs mostly taken from other papers. In each benchmark we marked a single location as the target one. All the benchmarks have a huge number of paths, so it is difficult to reach the target. We run \Pex and \APC on the benchmarks and we measured times till the target locations were reached. This measurement is obviously unfair from \Pex perspective, since its task is to cover an analysed benchmark by tests and not to reach a single particular location in it. Therefore, we clarify the right meaning of the measurement now.

Our only goal here is to show, that \Pex could benefit from our algorithm. Typical scenario when running \Pex on a benchmark is that all the code except the target location is covered in few seconds (typically up to three). Then \Pex keeps searching space of program paths for a longer time without covering the target location. This is exactly the situation when our heuristic should be activated. We of course do not know the exact moment, when \Pex would activate it. Therefore, we can only provide running times of our heuristic as it was activated at the beginning of the analysis.

Before we present the results, we discuss the benchmarks. Benchmark HWM checks whether an input string contains four substrings \texttt{Hello}, \texttt{world}, \texttt{at} and \texttt{Microsoft!}. It does not matter at which position and in which order the words occur in the string. The target location can be reached only when all the words are presented in the string. This benchmark was introduced in~\cite{AGT08}. The benchmark consists of four loops in a sequence, where each loop searches for a single of the four words mentioned above. Each loop checks for an occurrence of a related word at each position in the input string starting from the beginning. Benchmark HWM is the most complicated one from our set of benchmarks. We also took its two lightened versions presented in~\cite{OT11}: Benchmark HW consists of two loops searching the input string for the first two words above. And benchmark Hello searches only for the first one.

Benchmark MatrIR scans upper triangle of an input matrix. The matrix can be of any rank bigger then $20 \times 20$. In each row we count a number of elements inside a fixed range $(10,100)$. When sum of counts from all the rows exceeds a fixed limit $15$, then the target location is reached.

Benchmarks OneLoop and TwoLoops originate from~\cite{OT11}. They are designed such that their target locations are not reachable. Both benchmarks contain a loop in which the variable \var{i} (initially set to 0) is increased by 4 in each iteration. The target location is then guarded by an assertion \texttt{i==15} in OneLoop benchmark and by a loop\texttt{~while (i != j + 7) j += 2~}in the second one. We note that \var{j} is initialized to $0$ before the loop.

The last benchmark WinDriver comes from a practice and we took it from~\cite{GLE09}. It is a part of a Windows driver processing a stream of network packets. It reads an input stream and decomposes it into a two dimensional array of packets. A position in the array where the data from the stream are copied into are encoded in the input stream itself. We marked the target location as a failure branch of a consistency check of the filled in array. It was discussed in the paper~\cite{GLE09} the consistency check can indeed be broken.

\begin{table}[!htb]
	\begin{center}
    \begin{tabular}{c||c|c|c|c|c}
      & \Pex & \multicolumn{4}{c}{\APC} \\
      \cline{2-6}
      \textbf{Benchmark} & \textbf{Total} & \textbf{Total} & Bld $\hat{\varphi}$ & Unf/SMT $\hat{\varphi}^{\vec{K}}$ & SMT $\hat{\varphi}$ \\
      \hline
      Hello       & 5.257 & 0.181 & 0.021 & 0.290 / S 0.060 & S 0.160 \\
      HW          & 25.05 & 0.941 & 0.073 & 0.698 / S 0.170 & S 13.84 \\
      HWM         & T/O   & 4.660 & 1.715 & 2.135 / S 0.810 & X M/O \\
      MatrIR      & 95.00 & 0.035 & 0.015 & 0.491 / S 70.80 & S 0.020 \\
      WinDriver   & 28.39 & 0.627 & 0.178 & 0.369 / S 0.080 & X 4.860 \\
      \hline
      OneLoop     & 134.0 & 0.003 & 0.001 & 0.001 / U 0.001 & U 0.010 \\
      TwoLoops    & 64.00 & 0.003 & 0.002 & 0.004 / U 0.010 & U 0.001
    \end{tabular}
	\end{center}
  \caption{Running times of \Pex and \APC on benchmarks.}
  \label{tab:experiments}
\end{table}

The experimental results are depicted in Table~\ref{tab:experiments}. They show running times in seconds of \Pex and \APC on the benchmarks. We did all the measurements on a single common desktop computer\footnote{Intel\textsuperscript{\textregistered} Core$^{\mathtt{TM}}$ i7 CPU 920 @ 2.67GHz 2.67GHz, 6GB RAM, Windows 7 Professional 64-bit, MS \Pex 0.92.50603.1, MS Moles 1.0.0.0, MS Visual Studio 2008, MS .NET Framework v3.5 SP1, MS Z3 SMT solver v3.2, and boost v1.42.0.}. The mark T/O in \Pex column indicates that it failed to reach the target location within an hour. For \APC we provide the total running times and also time profiles of different paths of the computation. In sub-column 'Bld $\hat{\varphi}$' there are times required to build the necessary condition $\hat{\varphi}$. In sub-column 'Unf/SMT $\hat{\varphi}^{\vec{K}}$' there are two times for each benchmark. The first number identifies a time spent by unfolding the formula $\hat{\varphi}$ into $\hat{\varphi}^{\vec{K}}$. We use a fixed number 25 for all the counters and benchmarks. The second number represent a time spent by Z3 SMT solver~\cite{Z3} to decide satisfiability of the unfolded formula $\hat{\varphi}^{\vec{K}}$. Characters in front of these times identify results of the queries: S for satisfiable, U for unsatisfiable and X for unknown. And the last sub-column 'SMT $\hat{\varphi}$' contains running times of Z3 SMT solver directly on formulae $\hat{\varphi}$. The mark M/O means that Z3 went out of memory. As we explained in Section~\ref{sec:quantifiers} the construction and satisfiability checking of $\hat{\varphi}^{\vec{K}}$ runs in parallel with satisfiability checking of $\hat{\varphi}$. Therefore, we take the minimum of the times to compute the total runing time of \APC.

\section{Related Work}\label{sec:related}

Early work on symbolic execution~\cite{Kin76,BEL75,How77} showed its effectiveness in test generation.
King further showed that symbolic execution can bring more automation into Floyd's inductive proving method~\cite{Kin76,Floyd67}. Nevertheless, loops as the source of the path explosion problem were not in the center of interest.

More recent approaches dealt mostly with limitations of SMT solvers and the environment problem by combining the symbolic execution with the concrete one~\cite{PKS05,AGT08,SMA05,GLM08:active_props,G07,GLM08:fuzzing,GKL08,TdH08,GLM08:fuzzing,PRV11}. Although practical usability of the symbolic execution improved, these approaches still suffer from the path explosion problem.
An interesting idea
is to combine the symbolic execution with a complementary technique~\cite{GNRT10,GMR09,Beckmanetal08,NRTT09,Gulavanietal06}. Complementary techniques typically perform differently on different parts of the analysed program. Therefore, an information exchange between the techniques leads to a mutual improvement of their performance.
There are also techniques based on saving of already observed program behaviour and early terminating those executions, whose further progress will not explore a new one~\cite{BCE08,Cadar08,CDE08}.
Compositional approaches are typically based on computation of function summaries~\cite{G07,AGT08}. A function summary often consists of pre and post condition. Preconditions identify paths through the function and postconditions capture effects of the function along those paths. Reusing these summaries at call sites typically leads to an interesting performance improvement. In addition the summaries may insert additional symbolic values into the path condition which causes another improvement.
And there are also techniques partitioning program paths into separate classes according to similarities in program states~\cite{QNR11,SH10}. Values of output variables of a program or function are typically considered as a partitioning criteria.
A search strategy Fitnex~\cite{XTHS09} implemented in \Pex~\cite{TdH08} uses state-dependent fitness values computed through a fitness function to guide a path exploration. The function measures how close an already discovered feasible path is to a particular target location (to be covered by a test). The fitness function computes the fitness value for each occurrence of a predicate related to a chosen program branching along the path. The minimum value is the resulting one. There are also orthogonal approaches dealing with the path explosion problem by introducing some assumptions about program input. There are, for example, specialized techniques for programs manipulating strings~\cite{BTV09,XGM08}, and techniques reducing input space by a given grammar~\cite{GKL08,SPmCS09}.

Although the techniques above showed performance improvements when dealing with the path explosion problem, they do not focus directly on loops.
The LESE~\cite{SPmCS09} approach introduces symbolic variables for the number of times each loop was executed and links these with features of a known input grammar such as variable-length or repeating fields. This allows the symbolic constraints to cover a class of paths that includes different number of loop iterations, expressing loop-dependent program values in terms of the input.
A technique presented in~\cite{GL11} analyses loops on-the-fly, i.e.~during simultaneous concrete and symbolic execution of a program for a concrete input. The loop analysis infers inductive variables. A variable is inductive if it is modified by a constant value in each loop iteration. These variables are used to build loop summaries expressed in a form of pre a post conditions. The summaries are derived from the partial loop invariants synthesized dynamically using pattern matching rules on the loop guards and induction variables.
In our previous work~\cite{OT11} we introduced an algorithm sharing the same goal as one presented here. Nevertheless, in~\cite{OT11} we transform an analysed program into chains and we do the remaining analysis there. For each chain with sub-chains we build a constraint system serving as an oracle for steering the symbolic execution in the path space towards the target location.

\section{Conclusion}\label{sec:conclusion}

We presented algorithm computing for a given target program location the necessary condition $\hat{\varphi}$ representing an over-approximated set of real program paths leading to the target. We proposed the use of $\hat{\varphi}$ in tests generation tools based on symbolic execution. Having $\hat{\varphi}$ such a tool can cover the target location faster by exploring only program paths in the over-approximated set. We also showed that $\hat{\varphi}$ can be used in the tools very easily and naturally. And we finally showed by the experimental results that \Pex could benefit from our algorithm.

\bibliographystyle{plain}
\bibliography{apc}
\clearpage

\appendix

\section{Examples}

\subsection{Iterating a variable $\var{A}$ of an array type}
\label{sec:iterArrayExamples}

\begin{exm}
We want to compute an iterated value of an array variable \var{A} for the following loop
\begin{verbatim}
for (int i = 0; i < n; ++i)
  A[i] = i;
\end{verbatim}
After symbolic execution of loop's body variable \var{A} has a value $\lambda \chi~.~\ite(\chi = \sym{i}, \sym{i}, \sym{A}(\chi))$. And suppose that variable \var{i} was already iterated, i.e.~$\theta^{\vec{\kappa}}(\var{i}) = \kappa + \sym{i}$. Then expressions $e$ and $e'$ from Algorithm~\ref{alg:iterateVar} are assigned as follows
\begin{align*}
e & \equiv \lambda \chi~.~\sym{A}(\chi) \\
e' & \equiv \lambda \chi~.~\ite(\chi = \kappa + \sym{i}, \kappa + \sym{i}, \sym{A}(\chi)).
\end{align*}
Since there is only single backbone path $l$ in the backbone tree of the program, $pc(l) \equiv \true$. According to Table~\ref{tab:iterateVarValue_array} we receive the following values:
\begin{align*}
w_1 & \equiv (\kappa + \sym{i})[\kappa/\tau] = \tau + \sym{i} \\
g_1 & \equiv \kappa \\
\zeta(\{ \tau + \sym{i} \},1,1) & \equiv \chi \ne \tau + \sym{i} \\
\phi_1(\{ \tau + \sym{i} \},\{ \kappa \}) & \equiv \forall \tau' (\tau < \tau' < \kappa \rightarrow \chi \ne \tau' + \sym{i}) \\
\gamma_1 & \equiv \true \\
\psi & \equiv \true \\
h_1 & \equiv \exists \tau~.~\chi = \tau + \sym{i} \wedge 0 \leq \tau < \kappa \wedge \forall \tau' (\tau < \tau' < \kappa \rightarrow \chi \ne \tau' + \sym{i}) \\
t_1 & \equiv (\kappa + \sym{i})[\kappa/\tau]\{w_1/\chi\} = (\tau + \sym{i})\{(\tau + \sym{i})/\chi\} = \chi \\
\end{align*}
Therefore the resulting iterated value for array \var{A} is
\begin{equation*}
\lambda \chi~.~\ite(\exists \tau~.~\chi = \tau + \sym{i} \wedge 0 \leq \tau < \kappa \wedge \forall \tau' (\tau < \tau' < \kappa \rightarrow \chi \ne \tau' + \sym{i}), \chi, \sym{A}(\chi))
\end{equation*}
Note that in special cases like this one, we can simply detect, that $\tau + \sym{i}$ is a monotone function. Since there are no other writes to the array, the condition $\phi_1$ is redundant and the expression can be simplified into
\begin{equation*}
\lambda \chi~.~\ite(\exists \tau~.~\chi = \tau + \sym{i} \wedge 0 \leq \tau < \kappa, \chi, \sym{A}(\chi))
\end{equation*}
\end{exm}

\begin{exm}
Let us consider the following C++ program:
\begin{verbatim}
for (int i = 1; i < n; ++i) {
  A[i-1] = i;
  A[i] = i;
}
\end{verbatim}
After symbolic execution of loop's body variable \var{A} has a value $\lambda \chi~.~\ite(\chi = \sym{i}, \sym{i}, \ite(\chi = \sym{i} - 1, \sym{i},\sym{A}(\chi)))$. And suppose that variable \var{i} was already iterated, i.e.~$\theta^{\vec{\kappa}}(\var{i}) = \kappa + \sym{i}$. Then expressions $e$ and $e'$ from Algorithm~\ref{alg:iterateVar} are assigned as follows
\begin{align*}
e & \equiv \lambda \chi~.~\sym{A}(\chi) \\
e' & \equiv \lambda \chi~.~\ite(\chi = \kappa + \sym{i}, \kappa + \sym{i}, \ite(\chi = \kappa + \sym{i} - 1, \kappa + \sym{i},\sym{A}(\chi))).
\end{align*}
Since there is only single backbone path $l$ in the backbone tree of the program, $pc(l) \equiv \true$. According to Table~\ref{tab:iterateVarValue_array} we receive the following values:
\begin{align*}
w_1 & \equiv (\kappa + \sym{i})[\kappa/\tau] = \tau + \sym{i} \\
w_2 & \equiv (\kappa + \sym{i} - 1)[\kappa/\tau] = \tau + \sym{i} - 1 \\
g_1 & \equiv \kappa \\
g_2 & \equiv \kappa \\
\zeta(\{ \tau + \sym{i} \},1,1) & \equiv \chi \ne \tau + \sym{i} \\
\zeta(\{ \tau + \sym{i} - 1 \},1,2) & \equiv \chi \ne \tau + \sym{i} - 1 \\
\phi_1(\{ \tau + \sym{i}, \tau + \sym{i} - 1 \},\{ \kappa, \kappa \}) & \equiv \forall \tau' (\tau < \tau' < \kappa \rightarrow (\chi \ne \tau' + \sym{i} \wedge \chi \ne \tau' + \sym{i} - 1)) \\
\zeta(\{ \tau + \sym{i} \},2,1) & \equiv \chi \ne \tau + \sym{i} \\
\zeta(\{ \tau + \sym{i} - 1 \},2,2) & \equiv \chi \ne \tau + \sym{i} - 1 \\
\phi_2(\{ \tau + \sym{i}, \tau + \sym{i} - 1 \},\{ \kappa, \kappa \}) & \equiv \forall \tau' (\tau < \tau' < \kappa \rightarrow (\chi \ne \tau' + \sym{i} \wedge \chi \ne \tau' + \sym{i} - 1)) \\
\gamma_1 & \equiv \true \\
\psi & \equiv \true \\
h_1 & \equiv \exists \tau~.~\chi = \tau + \sym{i} \wedge 0 \leq \tau < \kappa~\wedge \\ & \hspace{1.3cm} \forall \tau' (\tau < \tau' < \kappa \rightarrow (\chi \ne \tau' + \sym{i} \wedge \chi \ne \tau' + \sym{i} - 1)) \\
h_2 & \equiv \exists \tau~.~\chi = \tau + \sym{i} - 1 \wedge 0 \leq \tau < \kappa~\wedge \\ & \hspace{1.3cm} \forall \tau' (\tau < \tau' < \kappa \rightarrow (\chi \ne \tau' + \sym{i} \wedge \chi \ne \tau' + \sym{i} - 1)) \\
t_1 & \equiv (\kappa + \sym{i})[\kappa/\tau]\{w_1/\chi\} = (\tau + \sym{i})\{(\tau + \sym{i})/\chi\} = \chi \\
t_2 & \equiv (\kappa + \sym{i})[\kappa/\tau]\{w_2/\chi\} = (\tau + \sym{i})\{(\tau + \sym{i} - 1)/\chi\} = \chi + 1 \\
\end{align*}
Therefore the resulting iterated value for array \var{A} is
\begin{align*}
\lambda \chi~.~\ite(& \exists \tau~.~\chi = \tau + \sym{i} \wedge 0 \leq \tau < \kappa \wedge \forall \tau' (\tau < \tau' < \kappa \rightarrow (\chi \ne \tau' + \sym{i} \wedge \chi \ne \tau' + \sym{i} - 1)), \chi, \\
& \ite(\exists \tau~.~\chi = \tau + \sym{i} - 1 \wedge 0 \leq \tau < \kappa \wedge \forall \tau' (\tau < \tau' < \kappa \rightarrow (\chi \ne \tau' + \sym{i} \wedge \chi \ne \tau' + \sym{i} - 1)), \chi + 1, \\ & \hspace{0.7cm} \sym{A}(\chi)))
\end{align*}
The condition of the outer $\ite$ expression can be satisfied only for $\tau = \kappa - 1$. On the other hand, the condition of the nested $\ite$ expression is satisfiable for all values of $\tau$. Also note that $\tau + \sym{i} - 1$ is a monotone function. Therefore, we can simplify the iterated value into
\begin{equation*}
\lambda \chi~.~\ite(\chi = \kappa - 1 + \sym{i}, \chi, \ite(\exists \tau~.~\chi = \tau + \sym{i} - 1 \wedge 0 \leq \tau < \kappa, \chi + 1, \sym{A}(\chi)))
\end{equation*}
\end{exm}

\begin{exm}
Let us consider the following C++ program:
\begin{verbatim}
for (int i = 0; i < n; ++i)
  if (i % 2 == 0)  // i.e. is 'i' even? 
    A[i] = 2*i + 1;
  else
    A[i] = 5;
\end{verbatim}
There are two backbone paths along the loop. The first one $l_1$ goes through positive branch and the second path $l_2$ through the negative one. After symbolic execution of the loop's body variable \var{A} has the following values:
\begin{align*}
\Theta(l_1)(\var{A}) & = \lambda \chi~.~\ite(\chi = \sym{i}, 2\sym{i} + 1, \sym{A}(\chi)) \\
\Theta(l_2)(\var{A}) & = \lambda \chi~.~\ite(\chi = \sym{i}, 5, \sym{A}(\chi))
\end{align*}
And suppose that variable \var{i} was already iterated, i.e.~$\theta^{\vec{\kappa}}(\var{i}) = \kappa_1 + \kappa_2 + \sym{i}$. The path counters $\kappa_1$ and $\kappa_2$ are newly introduced path counters for backbone paths $l_1$ and $l_2$ respectively. The loop at Algorithm~\ref{alg:iterateVar} is executed twice. First for backbone path $l_1$ and then for $l_2$. At the first execution, for the backbone path $l_1$, the expressions $e$ and $e'$ are as follows
\begin{align*}
e & \equiv \lambda \chi~.~\sym{A}(\chi) \\
e' & \equiv \lambda \chi~.~\ite(\chi = \kappa_1 + \kappa_2 + \sym{i}, 2(\kappa_1 + \kappa_2 + \sym{i}) + 1, \sym{A}(\chi)),
\end{align*}
For the backbone path $l_1$ we have $pc(l_1)\theta^{\vec{\kappa}} \equiv (\kappa_1 + \kappa_2 + \sym{i}) \mod 2 = 0$. According to Table~\ref{tab:iterateVarValue_array} we receive the following values:
\begin{align*}
w_1 & \equiv (\kappa_1 + \kappa_2 + \sym{i})[\vec{\kappa}/\vec{\tau}] = \tau_1 + \tau_2 + \sym{i} \\
g_1 & \equiv (\kappa_1, \kappa_2)^T \\
\zeta(\{ \tau_1 + \tau_2 + \sym{i} \},1,1) & \equiv \chi \ne \tau_1 + \tau_2 + \sym{i} \\
\phi_1(\{ \tau_1 + \tau_2 + \sym{i} \},\{ \matr{\kappa_1 \\ \kappa_2} \}) & \equiv \forall \matr{\tau'_1 \\ \tau'_2} \left( \matr{\tau_1 \\ \tau_2} < \matr{\tau'_1 \\ \tau'_2} < \matr{\kappa_1 \\ \kappa_2} \rightarrow \chi \ne \tau'_1 + \tau'_2 + \sym{i} \right) \\
\gamma_1 & \equiv \true \\
\psi[\vec{\kappa}/\vec{\tau}] & \equiv (pc(l_1)\theta^{\vec{\kappa}})[\vec{\kappa}/\vec{\tau}] \equiv (\tau_1 + \tau_2 + \sym{i}) \mod 2 = 0 \\
h_1 & \equiv \exists \matr{\tau_1 \\ \tau_2}~.~\chi = \tau_1 + \tau_2 + \sym{i} \wedge \vec{0} \leq \matr{\tau_1 \\ \tau_2} < \matr{\kappa_1 \\ \kappa_2} \wedge \phi_1(\{ \tau_1 + \tau_2 + \sym{i} \},\{ \matr{\kappa_1 \\ \kappa_2} \})) \\
t_1 & \equiv (2(\kappa_1 + \kappa_2 + \sym{i}) + 1)[\vec{\kappa}/\vec{\tau}]\{(\tau_1 + \tau_2 + \sym{i})/\chi\} = 2\chi + 1 \\
\end{align*}
Therefore new value of $e$ is
\begin{align*}
\lambda \chi~.~\ite(& \exists \matr{\tau_1 \\ \tau_2}~.~\chi = \tau_1 + \tau_2 + \sym{i} \wedge \vec{0} \leq \matr{\tau_1 \\ \tau_2} < \matr{\kappa_1 \\ \kappa_2} \wedge \\ & \hspace{1.5cm} \forall \matr{\tau'_1 \\ \tau'_2} \left( \matr{\tau_1 \\ \tau_2} < \matr{\tau'_1 \\ \tau'_2} < \matr{\kappa_1 \\ \kappa_2} \rightarrow \chi \ne \tau'_1 + \tau'_2 + \sym{i} \right) \wedge \\ & \hspace{1.5cm} (\tau_1 + \tau_2 + \sym{i}) \mod 2 = 0, \\ & 2\chi + 1, \sym{A}(\chi))
\end{align*}
Since $\tau_1 + \tau_2 + \sym{i}$ is a monotone function, the condition $\phi_1$ is redundant and the expression $e$ can be simplified into
\begin{align*}
\lambda \chi~.~\ite(\exists \matr{\tau_1 \\ \tau_2}~.~\chi = \tau_1 + \tau_2 + \sym{i} \wedge \vec{0} \leq \matr{\tau_1 \\ \tau_2} < \matr{\kappa_1 \\ \kappa_2} \wedge (\tau_1 + \tau_2 + \sym{i}) \mod 2 = 0, 2\chi + 1, \sym{A}(\chi))
\end{align*}
At the second execution of the loop at Algorithm~\ref{alg:iterateVar}, for the backbone path $l_2$, the expressions $e$ and $e'$ are as follows
\begin{align*}
e & \equiv \lambda \chi~.~\ite(\exists \matr{\tau_1 \\ \tau_2}~.~\chi = \tau_1 + \tau_2 + \sym{i} \wedge \vec{0} \leq \matr{\tau_1 \\ \tau_2} < \matr{\kappa_1 \\ \kappa_2} \wedge (\tau_1 + \tau_2 + \sym{i}) \mod 2 = 0, 2\chi + 1, \sym{A}(\chi)) \\
e' & \equiv \lambda \chi~.~\ite(\chi = \kappa_1 + \kappa_2 + \sym{i}, 5, \sym{A}(\chi)).
\end{align*}
For the backbone path $l_2$ we have $pc(l_2)\theta^{\vec{\kappa}} \equiv (\kappa_1 + \kappa_2 + \sym{i}) \mod 2 \ne 0$. According to Table~\ref{tab:iterateVarValue_array} we receive the following values:
\begin{align*}
w_1 & \equiv \tau_1 + \tau_2 + \sym{i} \\
w_2 & \equiv (\kappa_1 + \kappa_2 + \sym{i})[\vec{\kappa}/\vec{\tau}] = \tau_1 + \tau_2 + \sym{i} \\
g_1 & \equiv (\kappa_1, \kappa_2)^T \\
g_2 & \equiv (\kappa_1, \kappa_2)^T \\
\zeta(\{ \tau_1 + \tau_2 + \sym{i} \},1,1) & \equiv \chi \ne \tau_1 + \tau_2 + \sym{i} \\
\zeta(\{ \tau_1 + \tau_2 + \sym{i} \},1,2) & \equiv \chi \ne \tau_1 + \tau_2 + \sym{i} \\
\phi_1(\{ w_1, w_2 \},\{ g_1, g_2 \}) & \equiv \phi_1(\{ w_1\},\{ g_1\})~~~ \textit{(since $w_1 = w_2$ and $g_1 = g_2$)} \\ & \equiv \forall \matr{\tau'_1 \\ \tau'_2} \left( \matr{\tau_1 \\ \tau_2} < \matr{\tau'_1 \\ \tau'_2} < \matr{\kappa_1 \\ \kappa_2} \rightarrow \chi \ne \tau'_1 + \tau'_2 + \sym{i} \right) \\
\zeta(\{ \tau_1 + \tau_2 + \sym{i} \},2,1) & \equiv \chi \ne \tau_1 + \tau_2 + \sym{i} \\
\zeta(\{ \tau_1 + \tau_2 + \sym{i} \},2,2) & \equiv \chi \ne \tau_1 + \tau_2 + \sym{i} \\
\phi_2(\{ w_1, w_2 \},\{ g_1, g_2 \}) & \equiv \phi_2(\{ w_1\},\{ g_1\})~~~ \textit{(since $w_1 = w_2$ and $g_1 = g_2$)} \\ & \equiv \forall \matr{\tau'_1 \\ \tau'_2} \left( \matr{\tau_1 \\ \tau_2} < \matr{\tau'_1 \\ \tau'_2} < \matr{\kappa_1 \\ \kappa_2} \rightarrow \chi \ne \tau'_1 + \tau'_2 + \sym{i} \right) \\
\gamma_1 & \equiv (\tau_1 + \tau_2 + \sym{i}) \mod 2 = 0 \\
\psi[\vec{\kappa}/\vec{\tau}] & \equiv (pc(l_1)\theta^{\vec{\kappa}})[\vec{\kappa}/\vec{\tau}] \equiv (\tau_1 + \tau_2 + \sym{i}) \mod 2 \ne 0 \\
h_1 & \equiv \exists \matr{\tau_1 \\ \tau_2}~.~\chi = \tau_1 + \tau_2 + \sym{i} \wedge \vec{0} \leq \matr{\tau_1 \\ \tau_2} < \matr{\kappa_1 \\ \kappa_2} \wedge \\ & \hspace{2cm} \phi_1(\{ w_1, w_2 \},\{ g_1, g_2 \}) \wedge (\tau_1 + \tau_2 + \sym{i}) \mod 2 = 0 \\
h_2 & \equiv \exists \matr{\tau_1 \\ \tau_2}~.~\chi = \tau_1 + \tau_2 + \sym{i} \wedge \vec{0} \leq \matr{\tau_1 \\ \tau_2} < \matr{\kappa_1 \\ \kappa_2} \wedge \\ & \hspace{2cm} \phi_2(\{ w_1, w_2 \},\{ g_1, g_2 \}) \wedge (\tau_1 + \tau_2 + \sym{i}) \mod 2 \ne 0 \\
t_1 & \equiv 2\chi + 1 \\
t_2 & \equiv 5[\vec{\kappa}/\vec{\tau}]\{(\tau_1 + \tau_2 + \sym{i})/\chi\} = 5 \\
\end{align*}
Therefore the resulting iterated value for array \var{A} is
\begin{align*}
\lambda \chi~.~\ite(& \exists \matr{\tau_1 \\ \tau_2}~.~\chi = \tau_1 + \tau_2 + \sym{i} \wedge \vec{0} \leq \matr{\tau_1 \\ \tau_2} < \matr{\kappa_1 \\ \kappa_2} \wedge \\ & \hspace{1.5cm} \forall \matr{\tau'_1 \\ \tau'_2} \left( \matr{\tau_1 \\ \tau_2} < \matr{\tau'_1 \\ \tau'_2} < \matr{\kappa_1 \\ \kappa_2} \rightarrow \chi \ne \tau'_1 + \tau'_2 + \sym{i} \right) \wedge \\ & \hspace{1.5cm} (\tau_1 + \tau_2 + \sym{i}) \mod 2 = 0, 2\chi + 1, \\ & \ite( \exists \matr{\tau_1 \\ \tau_2}~.~\chi = \tau_1 + \tau_2 + \sym{i} \wedge \vec{0} \leq \matr{\tau_1 \\ \tau_2} < \matr{\kappa_1 \\ \kappa_2} \wedge \\ & \hspace{2.1cm} \forall \matr{\tau'_1 \\ \tau'_2} \left( \matr{\tau_1 \\ \tau_2} < \matr{\tau'_1 \\ \tau'_2} < \matr{\kappa_1 \\ \kappa_2} \rightarrow \chi \ne \tau'_1 + \tau'_2 + \sym{i} \right) \wedge \\ & \hspace{2.1cm} (\tau_1 + \tau_2 + \sym{i}) \mod 2 \ne 0, 5, \sym{A}(\chi))
\end{align*}
We can see, that conditions $\phi_1$ and $\phi_2$ are redundant in the expression (since $w_1 = w_2$ and $g_1 = g_2$). Therefore we can simplify the resulting value into
\begin{align*}
\lambda \chi~.~\ite(& \exists \matr{\tau_1 \\ \tau_2}~.~\chi = \tau_1 + \tau_2 + \sym{i} \wedge \vec{0} \leq \matr{\tau_1 \\ \tau_2} < \matr{\kappa_1 \\ \kappa_2} \wedge (\tau_1 + \tau_2 + \sym{i}) \mod 2 = 0, 2\chi + 1, \\ & \ite( \exists \matr{\tau_1 \\ \tau_2}~.~\chi = \tau_1 + \tau_2 + \sym{i} \wedge \vec{0} \leq \matr{\tau_1 \\ \tau_2} < \matr{\kappa_1 \\ \kappa_2} \wedge (\tau_1 + \tau_2 + \sym{i}) \mod 2 \ne 0, 5, \sym{A}(\chi))
\end{align*}
\end{exm}

\begin{exm}
Let us consider the following C++ program:
\begin{verbatim}
for (int i = 0; i < m; ++i) {
  id = B[i*(n+1)+1];
  for (int j = 0; j < n; ++j)
    A[id][j] = B[id*(n+1)+j+2];
}
\end{verbatim}
The program consists of two nested loops. We first express an iterated value of 2D array \var{A} after the inner loop. Then we apply the same procedure to express \var{A} after the outer one.

After symbolic execution of the inner loop's body variable \var{A} has a value $\lambda \vec{\chi}~.~\ite(\vec{\chi} = (\sym{id}, \sym{j}), \sym{B}(\sym{id} (\sym{n} + 1) + \sym{j} + 2), \sym{A}(\chi))$. And suppose that variable \var{j} was already iterated, i.e.~$\theta^{\vec{\kappa}}(\var{j}) = \kappa_1 + \sym{j}$, where $\kappa_1$ is a path counter introduced for the only backbone path of inner loop. Then expressions $e$ and $e'$ from Algorithm~\ref{alg:iterateVar} are assigned as follows
\begin{align*}
e & \equiv \lambda \vec{\chi}~.~\sym{A}(\vec{\chi}) \\
e' & \equiv \lambda \vec{\chi}~.~\ite(\vec{\chi} = (\sym{id}, \kappa_1 + \sym{j})^T, \sym{B}(\sym{id} (\sym{n} + 1) + \kappa_1 + \sym{j} + 2), \sym{A}(\chi)).
\end{align*}
Note that $\vec{\chi} = (\chi_1, \chi_2)$. Since there is only single backbone path $l$ in the inner loop, $pc(l) \equiv \true$. According to Table~\ref{tab:iterateVarValue_array} we receive the following values:
\begin{align*}
\vec{w}_1 \equiv &~ (\sym{id}, \kappa_1 + \sym{j})^T[\kappa_1/\tau_1] = (\sym{id}, \tau_1 + \sym{j})^T \\
g_1 \equiv &~ \kappa_1 \\
\zeta(\{ \matr{\sym{id} \\ \tau_1 + \sym{j}} \},1,1) \equiv &~ \vec{\chi} \ne \matr{\sym{id} \\ \tau_1 + \sym{j}} \\
\phi_1(\{ \matr{\sym{id} \\ \tau_1 + \sym{j}} \},\{ \kappa_1 \}) \equiv &~ \forall \tau'_1 \left( \tau_1 < \tau'_1 < \kappa_1 \rightarrow \vec{\chi} \ne \matr{\sym{id} \\ \tau_1 + \sym{j}} \right) \\
\gamma_1 \equiv &~ \true \\
\psi \equiv &~ \true \\
h_1 \equiv &~ \exists \tau_1~.~\vec{\chi} = \matr{\sym{id} \\ \tau_1 + \sym{j}} \wedge 0 \leq \tau_1 < \kappa_1 \wedge \forall \tau'_1 \left( \tau_1 < \tau'_1 < \kappa_1 \rightarrow \vec{\chi} \ne \matr{\sym{id} \\ \tau_1 + \sym{j}} \right) \\
t_1 \equiv &~ \sym{B}(\sym{id} (\sym{n} + 1) + \kappa_1 + \sym{j} + 2)[\kappa_1/\tau_1]\{\vec{w}_1/\vec{\chi}\} = \\ &~ \sym{B}(\sym{id} (\sym{n} + 1) + \tau_1 + \sym{j} + 2)\{(\sym{id}, \tau_1 + \sym{j})/\vec{\chi}\} = \\ &~ \sym{B}(\chi_1 (\sym{n} + 1) + \chi_2 + 2) \\
\end{align*}
Therefore the resulting iterated value for array \var{A} from the inner loop is
\begin{align*}
\lambda \vec{\chi}~.~\ite(& \exists \tau_1~.~\vec{\chi} = (\sym{id}, \tau_1 + \sym{j})^T \wedge 0 \leq \tau_1 < \kappa_1 \wedge \forall \tau'_1 (\tau_1 < \tau'_1 < \kappa_1 \rightarrow \vec{\chi} \ne (\sym{id}, \tau_1 + \sym{j})^T), \\ & \sym{B}(\chi_1 (\sym{n} + 1) + \chi_2 + 2), \sym{A}(\chi))
\end{align*}
Since both functions $\sym{id}$ and $\tau_1 + \sym{j}$ are monotone and there is not other write to \var{A}, we can simplify the expression into
\begin{equation*}
\lambda \vec{\chi}~.~\ite(\exists \tau_1~.~\vec{\chi} = (\sym{id}, \tau_1 + \sym{j})^T \wedge 0 \leq \tau_1 < \kappa_1, \sym{B}(\chi_1 (\sym{n} + 1) + \chi_2 + 2), \sym{A}(\chi))
\end{equation*}

We may proceed to the outer loop. There we first eliminate imported path counter $\kappa_1$ such that we substitute all its occurrences by an expression $\max\{ 0, \sym{n} \}$. We discuss the elimination of imported path counters in Section~\ref{sec:IterateTheta} and computation of an expression to be substituted in Section~\ref{sec:iterNestedLoop}. We also describe the computation of the expression $\max\{ 0, \sym{n} \}$ in Example~\ref{exm:linFnOne}. Also note that values of variables \var{j} and \var{id} are set to $0$ and $\sym{B}(\sym{i} (\sym{n} + 1) + 1)$ respectively, before entering the inner loop. Therefore, after symbolic execution of the outer loop's body the variable \var{A} has a value
\begin{equation*}
\lambda \vec{\chi}~.~\ite(\exists \tau_1~.~\vec{\chi} = (\sym{B}(\sym{i} (\sym{n} + 1) + 1), \tau_1)^T \wedge 0 \leq \tau_1 < \max\{ 0, \sym{n} \}, \sym{B}(\chi_1 (\sym{n} + 1) + \chi_2 + 2), \sym{A}(\chi)).
\end{equation*}
And suppose that variable \var{i} was already iterated, i.e.~$\theta^{\vec{\kappa}}(\var{i}) = \kappa + \sym{i}$, where $\kappa$ is a path counter introduced for the only backbone path of the outer loop. Then expressions $e$ and $e'$ from Algorithm~\ref{alg:iterateVar} are assigned as follows
\begin{align*}
e & \equiv \lambda \vec{\chi}~.~\sym{A}(\vec{\chi}) \\
e' & \equiv \lambda \vec{\chi}~.~\ite(\exists \tau_1~.~\vec{\chi} = (\sym{B}((\kappa + \sym{i}) (\sym{n} + 1) + 1), \tau_1)^T \wedge 0 \leq \tau_1 < \max\{ 0, \sym{n} \}, \\ & \hspace{2cm} \sym{B}(\chi_1 (\sym{n} + 1) + \chi_2 + 2), \sym{A}(\chi)).
\end{align*}
Since there is only a single backbone path $l$ in the inner loop, $pc(l) \equiv \true$. According to Table~\ref{tab:iterateVarValue_array} we receive the following values:
\begin{align*}
\vec{w}_1 \equiv &~ (\sym{B}((\kappa + \sym{i}) (\sym{n} + 1) + 1), \tau_1)^T[\kappa/\tau] = (\sym{B}((\tau + \sym{i}) (\sym{n} + 1) + 1), \tau_1)^T \\
\vec{g}_1 \equiv &~ (\max\{ 0, \sym{n} \}, \kappa)^T \\
\zeta(\{ \vec{w}_1 \},1,1) \equiv &~ \vec{\chi} \ne (\sym{B}((\tau + \sym{i}) (\sym{n} + 1) + 1), \tau_1)^T \\
\phi_1(\{ \vec{w}_1 \},\{ \vec{g}_1 \}) \equiv &~ \forall \matr{\tau'_1 \\ \tau'} \left( \matr{\tau_1 \\ \tau} < \matr{\tau'_1 \\ \tau'} < \matr{\max\{ 0, \sym{n} \} \\ \kappa} \rightarrow \vec{\chi} \ne \matr{\sym{B}((\tau' + \sym{i}) (\sym{n} + 1) + 1) \\ \tau'_1} \right) \\
\gamma_1 \equiv &~ \true \\
\psi \equiv &~ \true \\
h_1 \equiv &~ \exists \matr{\tau_1 \\ \tau}~.~\vec{\chi} = \matr{\sym{B}((\tau + \sym{i}) (\sym{n} + 1) + 1) \\ \tau_1} \wedge \vec{0} \leq \matr{\tau_1 \\ \tau} < \matr{\max\{ 0, \sym{n} \} \\ \kappa} \wedge \phi_1(\{ \vec{w}_1 \},\{ \vec{g}_1 \}) \\
t_1 \equiv &~ \sym{B}(\chi_1 (\sym{n} + 1) + \chi_2 + 2)[\kappa/\tau]\{\vec{w}_1/\vec{\chi}\} = \sym{B}(\chi_1 (\sym{n} + 1) + \chi_2 + 2)
\end{align*}
Therefore the resulting iterated value for array \var{A} is
\begin{align*}
\lambda \vec{\chi}~.~\ite(& \exists \matr{\tau_1 \\ \tau}~.~\vec{\chi} = \matr{\sym{B}((\tau + \sym{i}) (\sym{n} + 1) + 1) \\ \tau_1} \wedge \vec{0} \leq \matr{\tau_1 \\ \tau} < \matr{\max\{ 0, \sym{n} \} \\ \kappa} \wedge, \\ & \hspace{1.5cm} \forall \matr{\tau'_1 \\ \tau'} \left( \matr{\tau_1 \\ \tau} < \matr{\tau'_1 \\ \tau'} < \matr{\max\{ 0, \sym{n} \} \\ \kappa} \rightarrow \vec{\chi} \ne \matr{\sym{B}((\tau' + \sym{i}) (\sym{n} + 1) + 1) \\ \tau'_1} \right) \\ & \sym{B}(\chi_1 (\sym{n} + 1) + \chi_2 + 2), \sym{A}(\chi))
\end{align*}
\end{exm}

\subsection{Building formula $S_\gamma$ and using SMT solver on it}
\label{sec:linRelExamples}

\begin{exm} \label{exm:linFnOne}
Let us consider the following C++ program
\begin{verbatim}
for (int i = 0; i < m; ++i)
  for (int j = 0; j < n; ++j)
    A[i][j] = 0;
\end{verbatim}
There is only single backbone path (the body of the outer loop) in $\bbt$ of the program. Let $v_1, \ldots, v_6$ be all its vertices. Then $v_1$ is $l_s$, $v_7$ is $l_t$, $v_4$ is the only component vertex of $\bbt$, and $\lst(v_5)$ is the only exit vertex of SCC $\mathcal{C}_{v_4}$. Then after a symbolic execution of $\bbt$ the map $\Psi$ has the following content:
\begin{itemize}
\item $\Psi(v_1) = \true$
\item $\Psi(v_2) = \sym{i} < \sym{m}$
\item $\Psi(v_3) = \true$
\item $\Psi(v_4) = \forall \tau_{v_4}~(0 \leq \tau_{v_4} < \kappa_{v_4} \rightarrow \tau_{v_4} < \sym{n})$
\item $\Psi(v_5) = \kappa_{v_4} \geq \sym{n}$
\item $\Psi(v_6) = \true$
\item $\Psi(v_7) = \true$
\end{itemize}
After elimination of imported counter $\kappa_{v_4}$ we receive (only changes are shown)
\begin{itemize}
\item $\Psi(v_4) = \forall s~(0 \leq s < \sym{s}_{v_4} \rightarrow s < \sym{n})$
\item $\Psi(v_5) = \sym{s}_{v_4} \geq \sym{n}$
\end{itemize}
The symbolic state $\Theta(v_7)$ stores the following values of variables \var{i} and \var{j}
\begin{itemize}
\item $\Theta(v_7)(\var{i}) = \sym{i} + 1$
\item $\Theta(v_7)(\var{j}) = \kappa_{v_4}$
\end{itemize}
After elimination of imported counter $\kappa_{v_4}$ we receive (only changes are shown)
\begin{itemize}
\item $\Theta(v_7)(\var{j}) = \sym{s}_{v_4}$
\end{itemize}
Let us now suppose that $\theta^{\vec{\kappa}}$ stores the following value of \var{i}
\begin{itemize}
\item $\theta^{\vec{\kappa}}(\var{i}) = \kappa + \sym{i}$
\end{itemize}
Note that $\kappa$ is a fresh path counter introduced for our backbone path. We are ready to build formula $S_{v_4}$. We start with $\Gamma(v_4, \Omega, \theta^{\vec{\kappa}})$:
\begin{equation*}
\Gamma(v_4, \Omega, \theta^{\vec{\kappa}}) \equiv (0 \leq \sym{s}_{v_4} - 1 \rightarrow \sym{s}_{v_4} - 1 < \sym{n}) \wedge \sym{s}_{v_4} \geq \sym{n}
\end{equation*}
Note that $\Omega(v_4)$ is evaluated according to the first case, since $v_4$ is a component vertex. But $\Omega(v_5)$ is evaluated according to the first case. Also note that substitution of $\theta^{\vec{\kappa}}$ into formulae did not incorporate introduced counter $\kappa$ into the resulting formula. Therefore we do not need to introduce matrix $M$. Since $\Gamma(v_4, \Omega, \theta^{\vec{\kappa}})$ contains only single basic symbol $\sym{n}$, thus $\vec{\sym{a}} = (\sym{n})^T = \sym{n}$. And vector $\vec{w} = (w_1, w_2)^T$, because we have the only basic symbol in the formula. The formula $S_{v_4}$ looks as follows
\begin{equation*}
S_{v_4} \equiv \exists w_1, w_2 \forall n, \sym{s}_{v_4} \left( (\sym{s}_{v_4} \ge 0 \wedge \Gamma(v_4, \Omega, \theta^{\vec{\kappa}})) \rightarrow \sym{s}_{v_4} = \max\{ 0, w_1\sym{n} + w_2 \} \right)
\end{equation*}
And when we substitute formula $\Gamma$ into $S_{v_4}$ we obtain
\begin{equation*}
S_{v_4} \equiv \exists w_1, w_2 \forall n, \sym{s}_{v_4} \left( (\sym{s}_{v_4} \ge 0 \wedge (0 \leq \sym{s}_{v_4} - 1 \rightarrow \sym{s}_{v_4} - 1 < \sym{n}) \wedge \sym{s}_{v_4} \geq \sym{n}) \rightarrow \sym{s}_{v_4} = \max\{ 0, w_1\sym{n} + w_2 \} \right)
\end{equation*}
Then we ask an SMT solver, whether the formula is satisfiable or not. And if so we further ask for a model to get values of integers $w_1$ and $w_2$.
We see, that formula is satisfiable and $w_1 = 1$ and $w_2 = 0$. Therefore we return a symbolic expression:
\begin{equation*}
\max\{ 0, n \}
\end{equation*}
\end{exm}

\begin{exm}
Let us consider the following C++ program
\begin{verbatim}
for (int i = 0; i < m; ++i)
  for (int j = i; j < n; ++j)
    A[i][j] = 0;
\end{verbatim}
There is only single backbone path (the body of the outer loop) in $\bbt$ of the program. Let $v_1, \ldots, v_6$ be all its vertices. Then $v_1$ is $l_s$, $v_7$ is $l_t$, $v_4$ is the only component vertex of $\bbt$, and $\lst(v_5)$ is the only exit vertex of SCC $\mathcal{C}_{v_4}$. Then after a symbolic execution of $\bbt$ the map $\Psi$ has the following content:
\begin{itemize}
\item $\Psi(v_1) = \true$
\item $\Psi(v_2) = \sym{i} < \sym{m}$
\item $\Psi(v_3) = \true$
\item $\Psi(v_4) = \forall \tau_{v_4}~(0 \leq \tau_{v_4} < \kappa_{v_4} \rightarrow \tau_{v_4} + \sym{i} < \sym{n})$
\item $\Psi(v_5) = \kappa_{v_4} + \sym{i} \geq \sym{n}$
\item $\Psi(v_6) = \true$
\item $\Psi(v_7) = \true$
\end{itemize}
After elimination of imported counter $\kappa_{v_4}$ we receive (only changes are shown)
\begin{itemize}
\item $\Psi(v_4) = \forall s~(0 \leq s < \sym{s}_{v_4} \rightarrow s + \sym{i} < \sym{n})$
\item $\Psi(v_5) = \sym{s}_{v_4} + \sym{i} \geq \sym{n}$
\end{itemize}
The symbolic state $\Theta(v_7)$ stores the following values of variables \var{i} and \var{j}
\begin{itemize}
\item $\Theta(v_7)(\var{i}) = \sym{i} + 1$
\item $\Theta(v_7)(\var{j}) = \kappa_{v_4} + \sym{i}$
\end{itemize}
After elimination of imported counter $\kappa_{v_4}$ we receive (only changes are shown)
\begin{itemize}
\item $\Theta(v_7)(\var{j}) = \sym{s}_{v_4} + \sym{i}$
\end{itemize}
Let us now suppose that $\theta^{\vec{\kappa}}$ stores the following value of \var{i} and \var{j}
\begin{itemize}
\item $\theta^{\vec{\kappa}}(\var{i}) = \kappa + \sym{i}$
\item $\theta^{\vec{\kappa}}(\var{j}) = \sym{s}_{v_4} + \kappa + \sym{i}$
\end{itemize}
Note that $\kappa$ is a fresh path counter introduced for our backbone path. We are ready to build formula $S_{v_4}$. We start with $\Gamma(v_4, \Omega, \theta^{\vec{\kappa}})$:
\begin{equation*}
\Gamma(v_4, \Omega, \theta^{\vec{\kappa}}) \equiv (0 \leq \sym{s}_{v_4} - 1 \rightarrow \sym{s}_{v_4} - 1 + \kappa + \sym{i} < \sym{n}) \wedge \sym{s}_{v_4} + \kappa + \sym{i} \geq \sym{n}
\end{equation*}
Note that $\Omega(v_4)$ is evaluated according to the first case, since $v_4$ is a component vertex. But $\Omega(v_5)$ is evaluated according to the first case. Also note that substitution of $\theta^{\vec{\kappa}}$ into formulae incorporated the introduced counter $\kappa$ into the resulting formula. Therefore we have $\vec{\kappa} = (\kappa)^T = \kappa$ and matrix $M = (m_1, m_2, m_3)^T$ is of type $(2+1) \times 1$, since there are two basic symbols $\sym{n},\sym{i}$ and just one counter $\kappa$ in the formula $\Gamma$. Further we have $\vec{\sym{a}} = (\sym{n}, \sym{i})^T$, and vector $\vec{w} = (w_1, w_2, w_3)^T$. The formula $S_{v_4}$ looks as follows
\begin{equation*}
S_{v_4} \equiv \exists M,\vec{w}, \forall \vec{a},\kappa, \sym{s}_{v_4} \left( (\kappa \geq 0 \wedge \sym{s}_{v_4} \geq 0 \wedge \Gamma(v_4, \Omega, \theta^{\vec{\kappa}})) \rightarrow \sym{s}_{v_4} = \max\{ 0, (M \kappa + \vec{w})^T \matr{\vec{\sym{a}} \\ 1} \} \right)
\end{equation*}
And when we substitute formula $\Gamma$ into $S_{v_4}$ we obtain
\begin{align*}
S_{v_4} \equiv \exists M,\vec{w}, \forall \vec{a},\kappa, \sym{s}_{v_4} & \left( (\kappa \geq 0 \wedge \sym{s}_{v_4} \geq 0 \wedge (0 \leq \sym{s}_{v_4} - 1 \rightarrow \sym{s}_{v_4} - 1 + \kappa + \sym{i} < \sym{n}) \wedge \sym{s}_{v_4} + \kappa + \sym{i} \geq \sym{n}) \right. \\ & \left.
\rightarrow \sym{s}_{v_4} = \max\{ 0, (M \kappa + \vec{w})^T \matr{\vec{\sym{a}} \\ 1} \} \right)
\end{align*}
Then we ask Z3 SMT solver, whether the formula is satisfiable or not. And if so we further ask for a model to get values of integers $m_i$ and $w_j$.
We see, that formula is satisfiable and $m_1 = m_2 = 0, m_3 = -1$, and $w_1 = 1, w_2 = -1, w_3 = 0$. Therefore we return a symbolic expression:
\begin{equation*}
\max\{ 0, -\kappa + n - i \}
\end{equation*}
\end{exm}

\subsection{Hello}

\begin{verbatim}
char H[6] = "Hello";
int h = 0;
for (int i = 0; A[i] != 0; ++i) {
  int j = i, k = 0;
  while (H[k] != 0 && A[j] != 0 && A[j] == H[k]) {
    ++j;
    ++k;
  }
  if (H[k] == 0) { h = 1; break; }
  if (A[j] == 0) break;
}
if (h == 1)
  assert(false);
\end{verbatim}

\begin{figure}%
\begin{center}
\begin{tabular}{c}
\includegraphics[width=11cm, height=16cm]{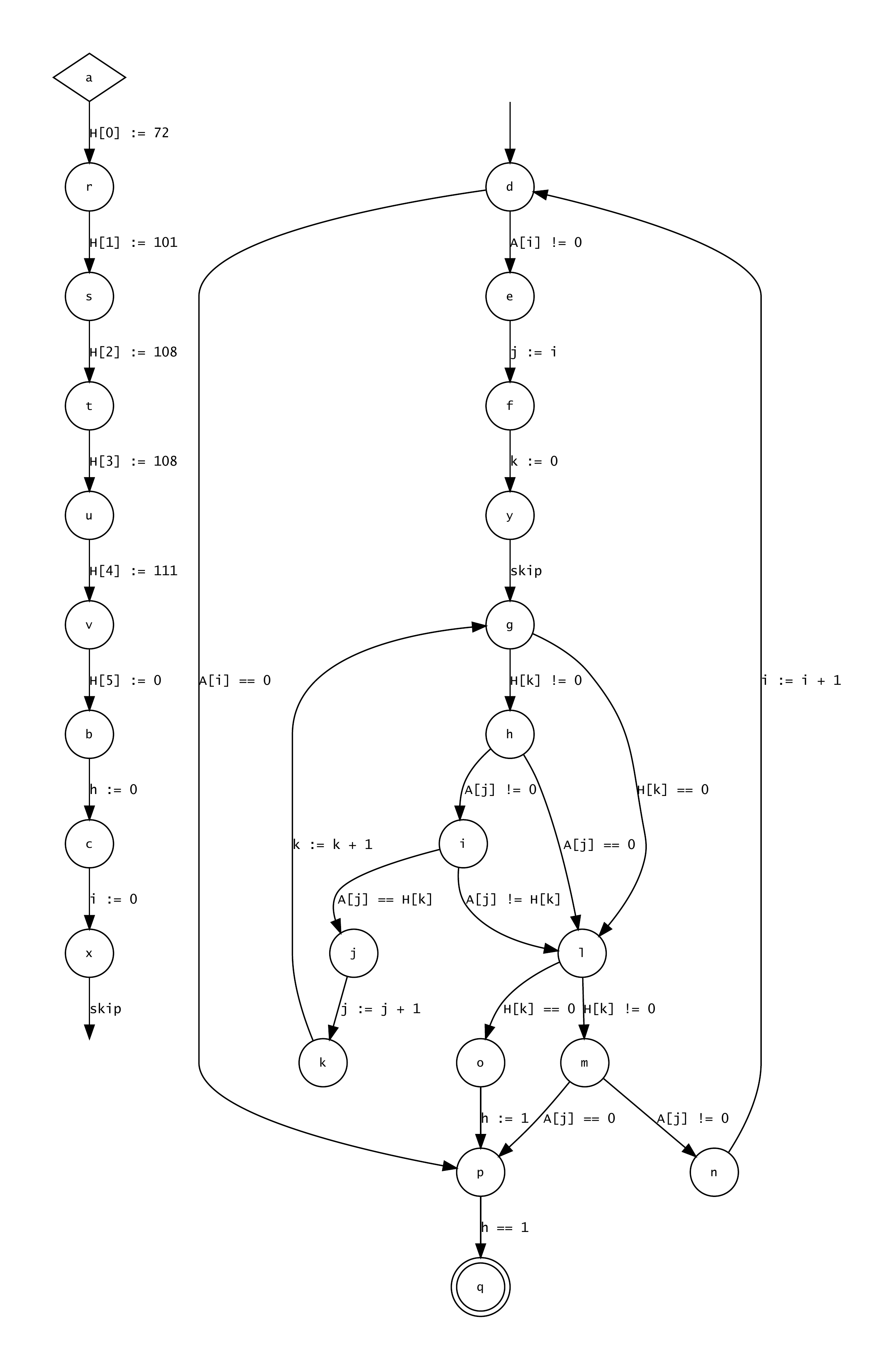}
\end{tabular}
\end{center}
\caption{Program \texttt{Hello}}%
\label{fig:Hello}%
\end{figure}

\begin{figure}%
\begin{center}
\begin{tabular}{cc}
\includegraphics[width=7.5cm, height=15cm]{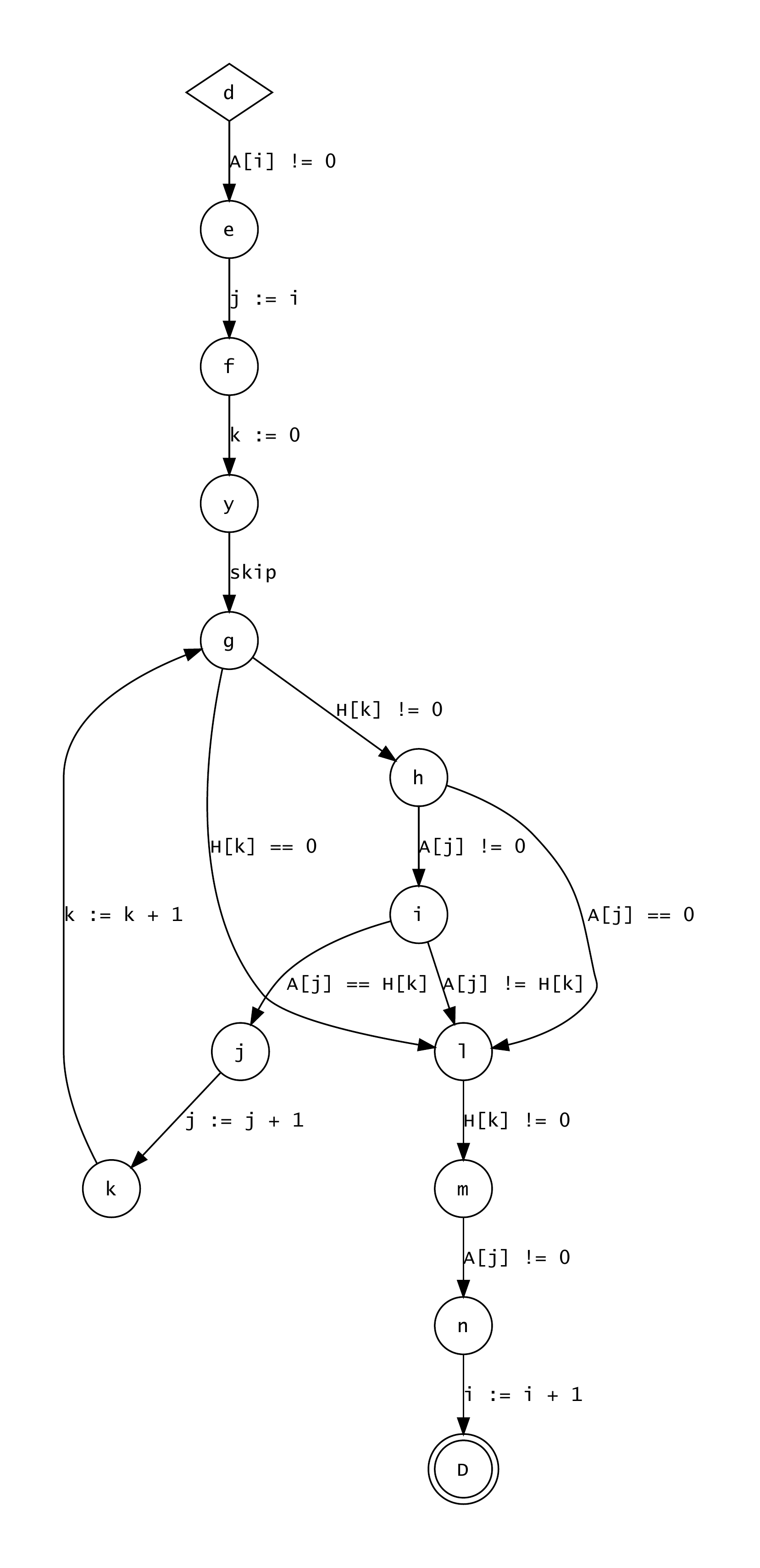} &
\includegraphics[width=3cm, height=8cm]{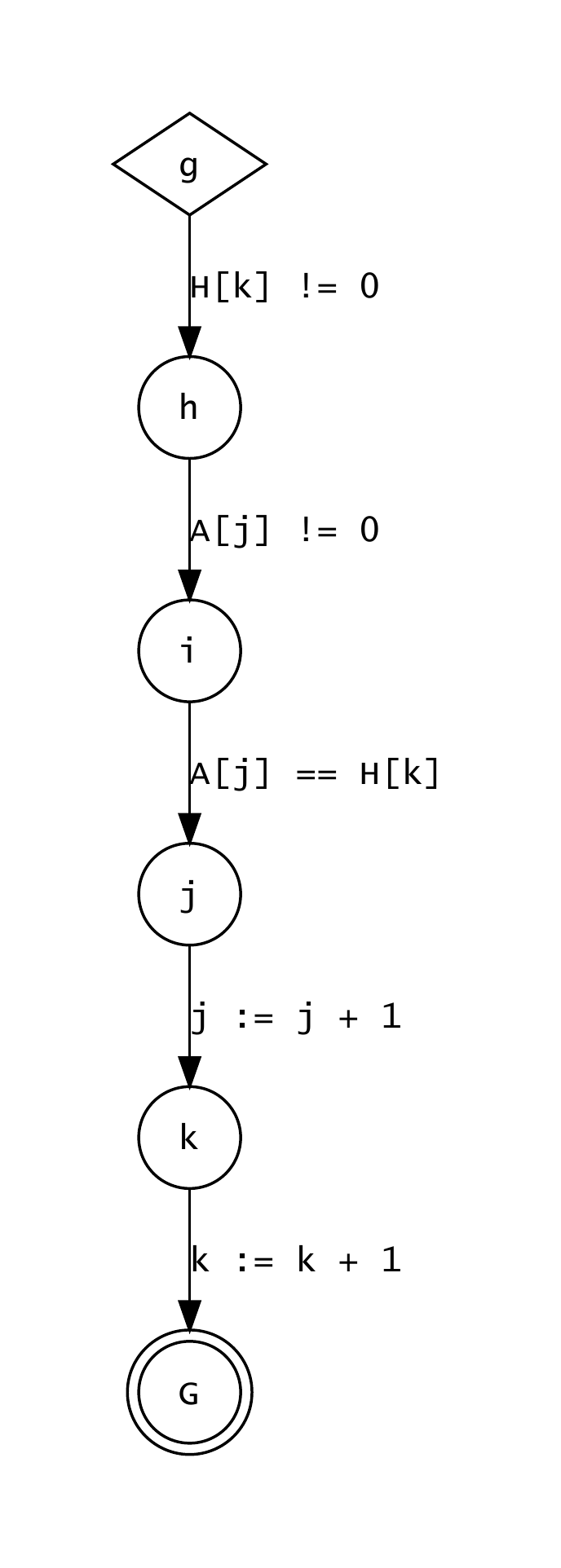}
\end{tabular}
\end{center}
\caption{Induced program of loops in \texttt{Hello} (a) Outer loop (b) Inner loop}%
\label{fig:HelloOuterInner}%
\end{figure}

\paragraph{Inner loop} After symbolic execution of backbone tree of the induced program we receive the following properties. Note that resulting backbone tree has only single backbone path $ghijkG$. \\
Function $\Psi$:
\begin{align*}
\Psi(g) & = \true \\
\Psi(gh) & = \sym{H}(\sym{k}) \ne 0 \\
\Psi(ghi) & = \sym{A}(\sym{j}) \ne 0 \\
\Psi(ghij) & = \sym{A}(\sym{j}) = \sym{H}(\sym{k}) \\
\Psi(ghijk) & = \true \\
\Psi(ghijkG) & = \true \\
\end{align*}
Function $\Theta$:
\begin{align*}
\Theta(ghijkG)(\var{j}) & = \sym{j} + 1 \\
\Theta(ghijkG)(\var{k}) & = \sym{k} + 1 \\
\Theta(ghijkG)(\var{A}) & = \lambda \chi~.~\sym{A}(\chi) \\
\Theta(ghijkG)(\var{H}) & = \lambda \chi~.~\sym{H}(\chi) \\
\end{align*}
Note that $\bar{\Psi} \equiv \Psi$ and $\bar{\Theta} \equiv \Theta$. Therefore, a symbolic state $\theta^{\vec{\kappa}}$ is:
\begin{align*}
\theta^{\vec{\kappa}}(\var{j}) & = \kappa_1 + \sym{j} \\
\theta^{\vec{\kappa}}(\var{k}) & = \kappa_1 + \sym{k} \\
\theta^{\vec{\kappa}}(\var{A}) & = \lambda \chi~.~\sym{A}(\chi) \\
\theta^{\vec{\kappa}}(\var{H}) & = \lambda \chi~.~\sym{H}(\chi) \\
\end{align*}
And a looping condition $\varphi^{\vec{\kappa}}$ is:
\begin{equation*}
\varphi^{\vec{\kappa}} \equiv \forall \tau_1 \left[ 0 \leq \tau_1 < \kappa_1 \rightarrow \left(\sym{H}(\tau_1 + \sym{k}) \ne 0 \wedge \sym{A}(\tau_1 + \sym{j}) \ne 0 \wedge \sym{A}(\tau_1 + \sym{j}) = \sym{H}(\tau_1 + \sym{k}) \right) \right]
\end{equation*}

\paragraph{Outer loop} After symbolic execution of backbone tree of the induced program we receive the following properties. Note that resulting backbone tree has only single backbone path $defyghilmnD$, since backbone paths going through program edges $(g,l)$ or $(h,l)$ are infeasible.\\
Function $\Psi$:
\begin{align*}
\Psi(d) & = \true \\
\Psi(de) & = \sym{A}(\sym{i}) \ne 0 \\
\Psi(def) & = \true \\
\Psi(defy) & = \true \\
\Psi(defyg) & = \forall \tau_1 \left[ 0 \leq \tau_1 < \kappa_1 \rightarrow 
\left(\sym{H}(\tau_1) \ne 0 \wedge \sym{A}(\tau_1 + \sym{i}) \ne 0 \wedge \sym{A}(\tau_1 + \sym{i}) = \sym{H}(\tau_1) \right) \right] \\
\Psi(defygh) & = \sym{H}(\kappa_1) \ne 0 \\
\Psi(defyghi) & = \sym{A}(\kappa_1 + \sym{i}) \ne 0 \\
\Psi(defyghil) & = \sym{A}(\kappa_1 + \sym{i}) \ne \sym{H}(\kappa_1) \\
\Psi(defyghilm) & = \sym{H}(\kappa_1) \ne 0 \\
\Psi(defyghilmn) & = \sym{A}(\kappa_1 + \sym{i}) \ne 0 \\
\Psi(defyghilmnD) & = \true \\
\end{align*}
Function $\Theta$:
\begin{align*}
\Theta(defyghilmnD)(\var{i}) & = \sym{i} + 1 \\
\Theta(defyghilmnD)(\var{j}) & = \kappa_1 + \sym{i} \\
\Theta(defyghilmnD)(\var{k}) & = \kappa_1 \\
\Theta(defyghilmnD)(\var{A}) & = \lambda \chi~.~\sym{A}(\chi) \\
\Theta(defyghilmnD)(\var{H}) & = \lambda \chi~.~\sym{H}(\chi)
\end{align*}
Function $\bar{\Psi}$:
\begin{align*}
\bar{\Psi}(d) & = \true \\
\bar{\Psi}(de) & = \sym{A}(\sym{i}) \ne 0 \\
\bar{\Psi}(def) & = \true \\
\bar{\Psi}(defy) & = \true \\
\bar{\Psi}(defyg) & = \forall s \left[ 0 \leq s < \sym{s}_1 \rightarrow 
\left(\sym{H}(\sym{s}_1) \ne 0 \wedge \sym{A}(\sym{s}_1 + \sym{i}) \ne 0 \wedge \sym{A}(\sym{s}_1 + \sym{i}) = \sym{H}(\sym{s}_1) \right) \right] \\
\bar{\Psi}(defygh) & = \sym{H}(\sym{s}_1) \ne 0 \\
\bar{\Psi}(defyghi) & = \sym{A}(\sym{s}_1 + \sym{i}) \ne 0 \\
\bar{\Psi}(defyghil) & = \sym{A}(\sym{s}_1 + \sym{i}) \ne \sym{H}(\sym{s}_1) \\
\bar{\Psi}(defyghilm) & = \sym{H}(\sym{s}_1) \ne 0 \\
\bar{\Psi}(defyghilmn) & = \sym{A}(\sym{s}_1 + \sym{i}) \ne 0 \\
\bar{\Psi}(defyghilmnD) & = \true \\
\end{align*}
Function $\bar{\Theta}$:
\begin{align*}
\bar{\Theta}(defyghilmnD)(\var{i}) & = \sym{i} + 1 \\
\bar{\Theta}(defyghilmnD)(\var{j}) & = \sym{s}_1 + \sym{i} \\
\bar{\Theta}(defyghilmnD)(\var{k}) & = \sym{s}_1 \\
\bar{\Theta}(defyghilmnD)(\var{A}) & = \lambda \chi~.~\sym{A}(\chi) \\
\bar{\Theta}(defyghilmnD)(\var{H}) & = \lambda \chi~.~\sym{H}(\chi) \\
\bar{\Theta}(defyghilmnD)(\var{s}_1) & = \sym{s}_1
\end{align*}
Symbolic state $\theta^{\vec{\kappa}}$ after iteration of regular variables:
\begin{align*}
\theta^{\vec{\kappa}}(\var{i}) & = \kappa_2 + \sym{i} \\
\theta^{\vec{\kappa}}(\var{j}) & = \star \\
\theta^{\vec{\kappa}}(\var{k}) & = \star \\
\theta^{\vec{\kappa}}(\var{A}) & = \lambda \chi~.~\sym{A}(\chi) \\
\theta^{\vec{\kappa}}(\var{H}) & = \lambda \chi~.~\sym{H}(\chi),
\end{align*}
where $\kappa_2$ is a fresh counter introduce that single backbone path.\\
Formula $\Gamma(defyg, \Omega, \theta^{\vec{\kappa}}, \sym{x})$ looks as follows:
\begin{align*}
\Gamma(defyg,\Omega,\theta^{\vec{\kappa}}, \sym{x}) \equiv & ( 0 \leq \sym{s}_1 - 1 \rightarrow (\sym{H}(\sym{s}_1 - 1, \sym{x}) \ne 0~\wedge \\ & \hspace{2.6cm} \sym{A}(\sym{s}_1 - 1 + \kappa_2 + \sym{i}, \sym{x}) \ne 0~\wedge \\ & \hspace{2.6cm} \sym{A}(\sym{s}_1 - 1 + \kappa_2 + \sym{i}, \sym{x}) = \sym{H}(\sym{s}_1 - 1, \sym{x})))~\wedge \\ & \sym{H}(\sym{s}_1, \sym{x}) \ne 0~\wedge \\ & \sym{A}(\sym{s}_1 + \kappa_2 + \sym{i}, \sym{x}) \ne 0~\wedge \\ & \sym{A}(\sym{s}_1 + \kappa_2 + \sym{i}, \sym{x}) \ne \sym{H}(\sym{s}_1, \sym{x})
\end{align*}
Note that we used $\bar{\Psi}$, $\bar{\Theta}$, and $\theta^{\vec{\kappa}}[\var{s}_1 \rightarrow \sym{s}_1]$ to compose $\Gamma(defyg,\Omega,\theta^{\vec{\kappa}}, \sym{x})$. Then formula $S_1$ is:
\begin{align*}
S_1 \equiv
\exists \matr{m_1 \\ m_2 \\ m_3 \\ m_4}, \matr{w_1 \\ w_2 \\ w_3 \\ w_4}
  \forall \matr{\sym{i} \\ \sym{k} \\ \sym{x}}, \kappa_2, \sym{s}_1~ & ( 0 \leq \kappa_2 \wedge 0 \leq \sym{s}_1 \wedge              \Gamma(defyg,\Omega,\theta^{\vec{\kappa}}|_\mathcal{V}, \sym{x}) ) \rightarrow  \\ & \sym{s}_1 = \max \left\{ 0, \left( \matr{m_1 \\ m_2 \\ m_3 \\ m_4} \kappa_2 + \matr{w_1 \\ w_2 \\ w_3 \\ w_4} \right)^T \matr{\sym{i} \\ \sym{k} \\ \sym{x} \\ 1} \right\}
\end{align*}
Since $S_1$ is not satisfiable we have $\sym{s}_1 = \star$. Therefore a fix-point $\theta^{\vec{\kappa}}$ is
\begin{align*}
\theta^{\vec{\kappa}}(\var{i}) & = \kappa_2 + \sym{i} \\
\theta^{\vec{\kappa}}(\var{j}) & = \star \\
\theta^{\vec{\kappa}}(\var{k}) & = \star \\
\theta^{\vec{\kappa}}(\var{A}) & = \lambda \chi~.~\sym{A}(\chi) \\
\theta^{\vec{\kappa}}(\var{H}) & = \lambda \chi~.~\sym{H}(\chi) \\
\theta^{\vec{\kappa}}(\var{s}_1) & = \star
\end{align*}
And looping condition $\varphi^{\vec{\kappa}}$ is:
\begin{align*}
\varphi^{\vec{\kappa}} \equiv \forall \tau_2 [ & 0 \leq \tau_2 < \kappa_2 \rightarrow ( \\ & \sym{A}(\tau_2 + \sym{i}) \ne 0~\wedge \\ & \exists \kappa_1 (0 \leq \kappa_1~\wedge \\ & \hspace{0.8cm} (\forall \tau_1 [ 0 \leq \tau_1 < \kappa_1 \rightarrow (\sym{H}(\tau_1) \ne 0 \wedge \sym{A}(\tau_1 + \tau_2 + \sym{i}) \ne 0 \wedge \sym{A}(\tau_1 + \tau_2 + \sym{i}) = \sym{H}(\tau_1) )]
)~\wedge \\ & \hspace{0.8cm} \sym{H}(\kappa_1) \ne 0~\wedge \\ & \hspace{0.8cm} \sym{A}(\kappa_1 + \tau_2 + \sym{i}) \ne 0~\wedge \\ & \hspace{0.8cm} \sym{A}(\kappa_1 + \tau_2 + \sym{i}) = \sym{H}(\kappa_1))) ]
\end{align*}

\paragraph{Whole program} After symbolic execution of a backbone tree of the program we receive following function $\Psi$:
\begin{align*}
\Psi(a) & = \true \\
\Psi(ar) & = \true \\
\Psi(ars) & = \true \\
\Psi(arst) & = \true \\
\Psi(\ldots u) & = \true \\
\Psi(\ldots v) & = \true \\
\Psi(\ldots b) & = \true \\
\Psi(\ldots c) & = \true \\
\Psi(\ldots x) & = \true \\
\Psi(\ldots d) & = \forall \tau_2 [ 0 \leq \tau_2 < \kappa_2 \rightarrow ( \\ & \hspace{1.25cm} \sym{A}(\tau_2) \ne 0~\wedge \\ & \hspace{1.25cm} \exists \kappa_1 (0 \leq \kappa_1~\wedge \\ & \hspace{2.05cm} (\forall \tau_1 [ 0 \leq \tau_1 < \kappa_1 \rightarrow ( \sym{H}'(\tau_1) \ne 0 \wedge \sym{A}(\tau_1 + \tau_2) \ne 0 \wedge \sym{A}(\tau_1 + \tau_2) = \sym{H}'(\tau_1) )]
)~\wedge \\ & \hspace{2.05cm} \sym{H}'(\kappa_1) \ne 0~\wedge \\ & \hspace{2.05cm} \sym{A}(\kappa_1 + \tau_2) \ne 0~\wedge \\ & \hspace{2.05cm} \sym{A}(\kappa_1 + \tau_2) = \sym{H}'(\kappa_1))) ] \\
\Psi(\ldots e) & = \sym{A}(\kappa_2) \ne 0 \\
\Psi(\ldots f) & = \true \\
\Psi(\ldots y) & = \true \\
\Psi(\ldots g) & = \forall \tau_3 \left[ 0 \leq \tau_3 < \kappa_3 \rightarrow \left(\sym{H}'(\tau_3) \ne 0 \wedge \sym{A}(\tau_3 + \kappa_2) \ne 0 \wedge \sym{A}(\tau_3 + \kappa_2) = \sym{H}'(\tau_3) \right) \right] \\
\Psi(\ldots l) & = \sym{H}'(\kappa_3) = 0 \\
\Psi(\ldots o) & = \sym{H}'(\kappa_3) = 0 \\
\Psi(\ldots p) & = \true \\
\Psi(\ldots q) & = \true \\
\end{align*}
Note that a backbone tree of the program has only a single backbone path $arstuvbcxdefyglopq$ after its symbolic execution, since each backbone path going through some of program edges $(d,p), (g,h), (l,m)$ is not feasible. Because of space limitations we have abbreviated vertices of the backbone tree. From the same reasons we have introduced new function symbol $\sym{H}': \texttt{int} \rightarrow \texttt{int}$ representing content of array $\sym{H}$ and it is defined as follows $\forall \tau~\sym{H}'(\tau) = \ite(\tau = 0, 72, \ite(\tau = 1, 101, \ite(\tau = 2, 108, \ite(\tau = 3, 108, \ite(\tau = 4, 111, \ite(\tau = 5, 0, \sym{H}(\tau)))))))$. Note that at vertex $\ldots g$ we have recycled the result of the analysis of the inner loop. We have introduced a fresh path counter $\kappa_3$.\\
Finally the abstraction $\hat{\varphi}$ looks as follows
\begin{align*}
\hat{\varphi} \equiv & \exists \kappa_2 (0 \leq \kappa_2~\wedge \\ & \hspace{0.8cm} \forall \tau_2 [ 0 \leq \tau_2 < \kappa_2 \rightarrow ( \\ & \hspace{1.5cm} \sym{A}(\tau_2) \ne 0~\wedge \\ & \hspace{1.5cm} \exists \kappa_1 (0 \leq \kappa_1~\wedge \\ & \hspace{2.3cm} (\forall \tau_1 [ 0 \leq \tau_1 < \kappa_1 \rightarrow ( \sym{H}'(\tau_1) \ne 0 \wedge \sym{A}(\tau_1 + \tau_2) \ne 0 \wedge \sym{A}(\tau_1 + \tau_2) = \sym{H}'(\tau_1) )]
)~\wedge \\ & \hspace{2.3cm} \sym{H}'(\kappa_1) \ne 0~\wedge \\ & \hspace{2.3cm} \sym{A}(\kappa_1 + \tau_2) \ne 0~\wedge \\ & \hspace{2.3cm} \sym{A}(\kappa_1 + \tau_2) = \sym{H}'(\kappa_1))) ]~\wedge \\ & \hspace{0.8cm} \sym{A}(\kappa_2) \ne 0~\wedge \\ & \hspace{0.8cm} \exists \kappa_3 (0 \leq \kappa_3~\wedge \\ & \hspace{1.6cm} \forall \tau_3 \left[ 0 \leq \tau_3 < \kappa_3 \rightarrow \left(\sym{H}'(\tau_3) \ne 0 \wedge \sym{A}(\tau_3 + \kappa_2) \ne 0 \wedge \sym{A}(\tau_3 + \kappa_2) = \sym{H}'(\tau_3) \right) \right]~\wedge \\ & \hspace{1.6cm} \sym{H}'(\kappa_3) = 0))
\end{align*}
We ask an SMT solver to get model. A model of the formula define symbolic input for array \var{A} to contain a string \texttt{"Hello"}, which would navigate symbolic execution directly to the target location.

\subsection{MatrIR}

It took $1$min $36$s for \textsc{Pex} to reach the target location in the following program:
\begin{verbatim}
int w = 0;
for (int i = 0; i < m; ++i) {
  int k = 0;
  for (int j = i; j < n; ++j)
    if (A[i][j] > 10 && A[i][j] < 100)
      ++k;
  if (k > 15) {
    w = 1;
    break;
  }
}
if (m > 15 && n > 20 && w == 1)
  assert(false);
\end{verbatim}

\begin{figure}
\begin{center}
\begin{tabular}{c}
\includegraphics[width=9cm, height=17cm]{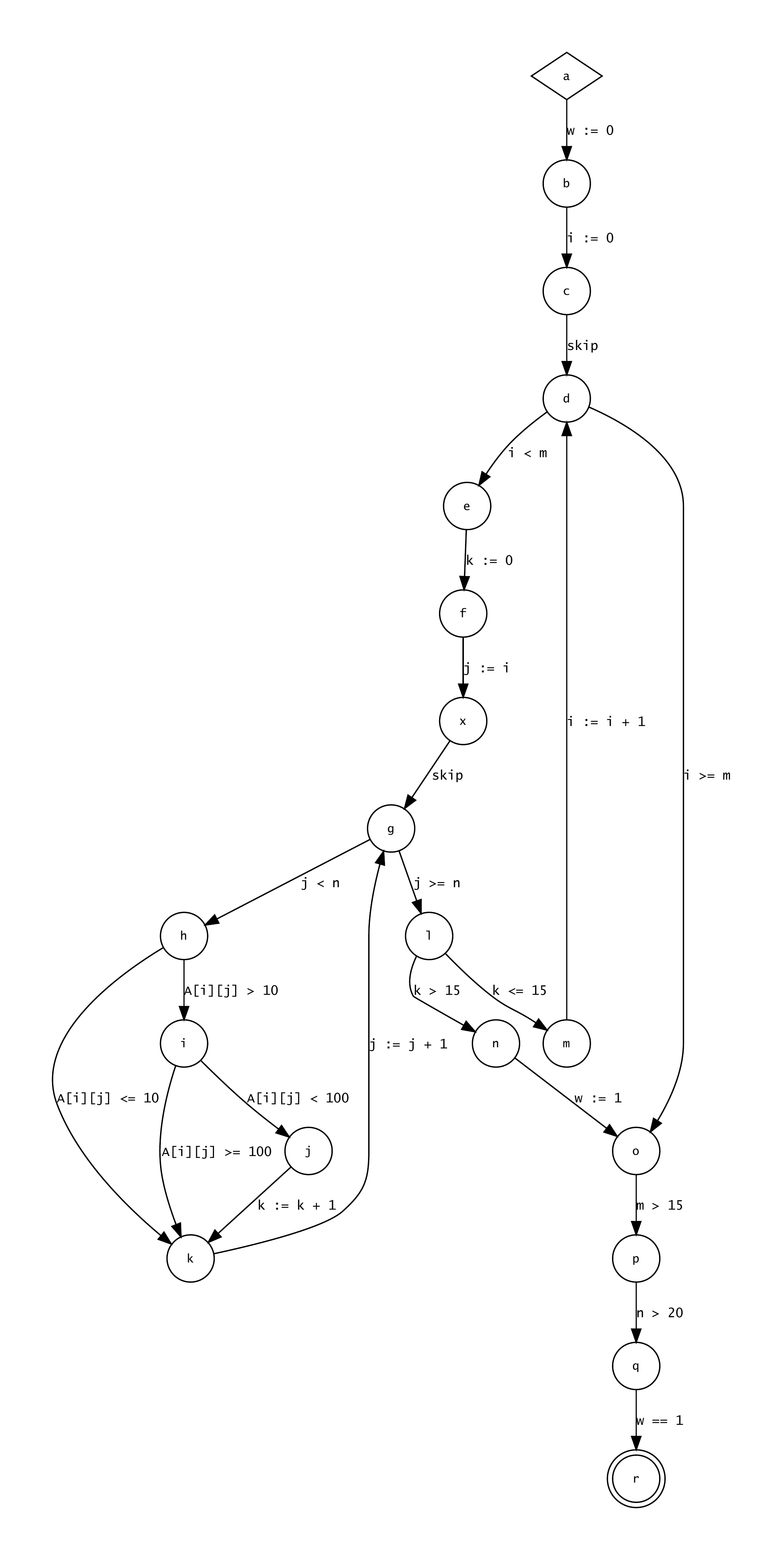}
\end{tabular}
\end{center}
\caption{Program \texttt{MatrIR}}%
\label{fig:MatrIR}%
\end{figure}

\begin{figure}
\begin{center}
\begin{tabular}{cc}
\includegraphics[width=8cm, height=12cm]{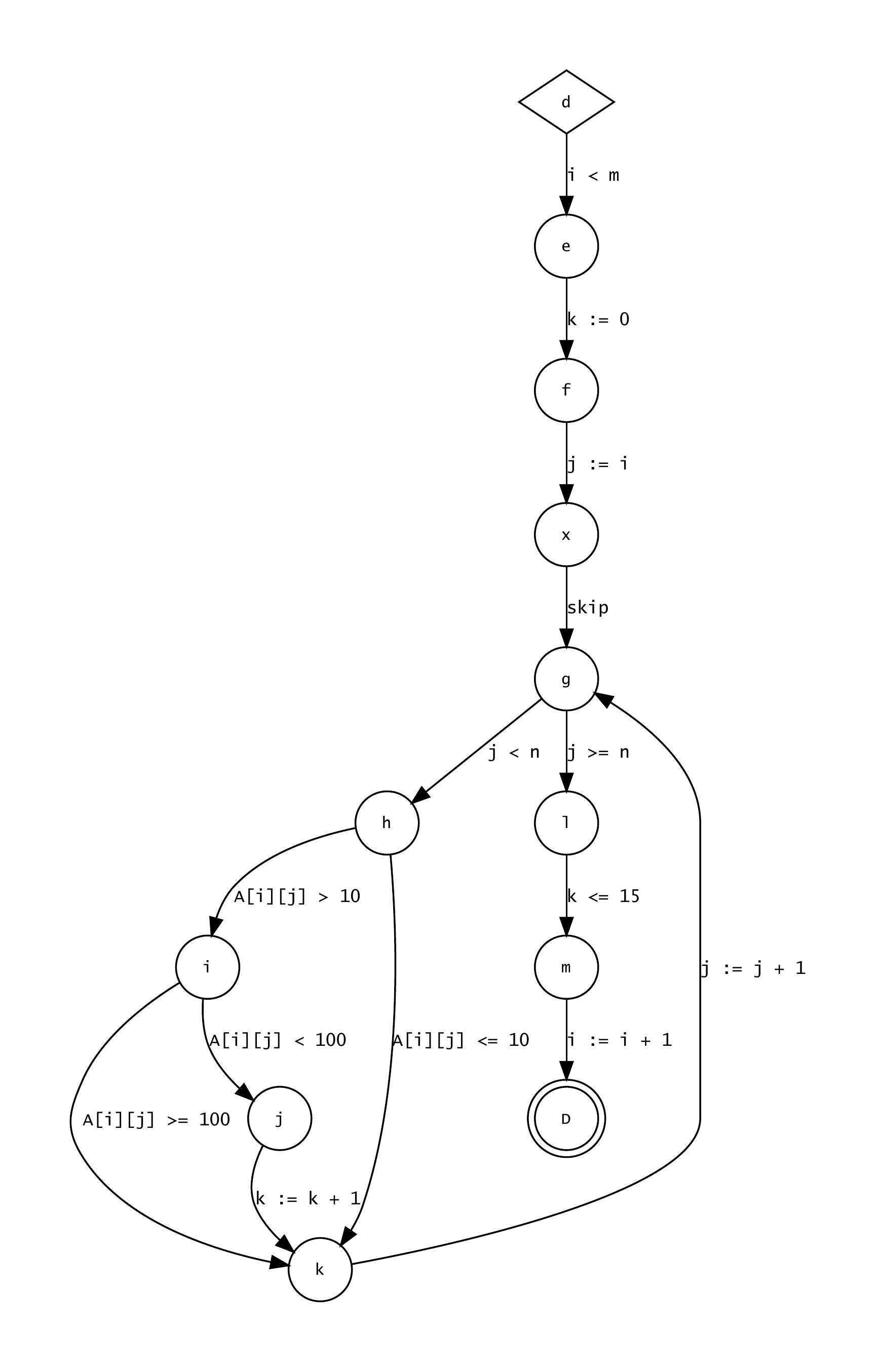} &
\includegraphics[width=5.5cm, height=8.1cm]{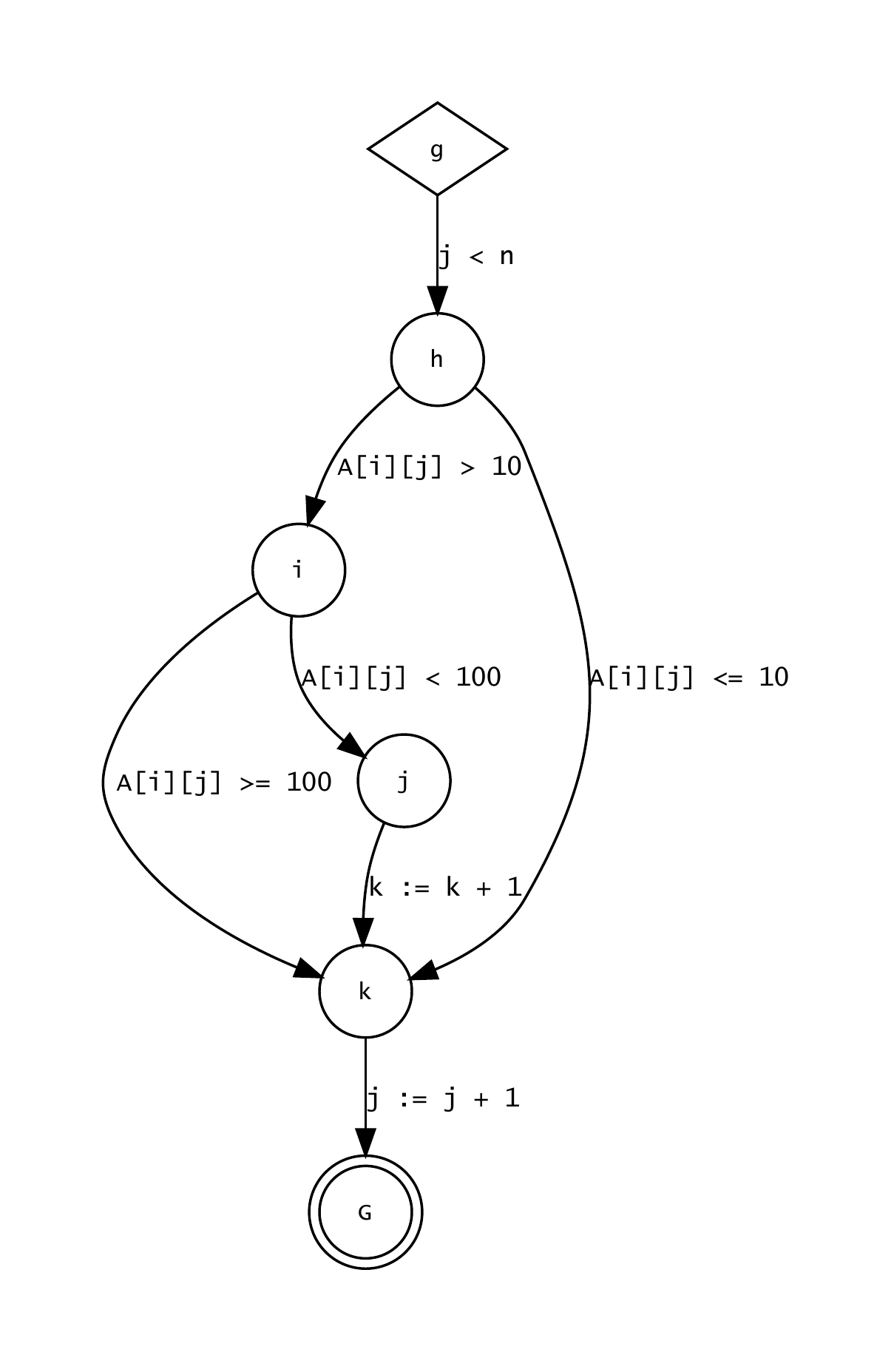} \\
(a) & (b)
\end{tabular}
\end{center}
\caption{Induced programs at loops of program \texttt{MatrIR} (a) Outer loop, (b) Inner loop}%
\label{fig:MatrIROuterInner}%
\end{figure}

\paragraph{Inner loop} There are three backbone paths $ghijkG$, $ghikG$, and $ghkG$ in a backbone tree of the inner loop. \\
Function $\Psi$ looks as follows
\begin{align*}
\Psi(g) & = \true \\
\Psi(gh) & = \sym{j} < \sym{n} \\
\Psi(ghi) & = \sym{A}(\sym{i},\sym{j}) > 10 \\
\Psi(ghij) & = \sym{A}(\sym{i},\sym{j}) < 100 \\
\Psi(ghijk) & = \true \\
\Psi(ghijkG) & = \true \\
\Psi(ghik) & = \sym{A}(\sym{i},\sym{j}) >= 100 \\
\Psi(ghikG) & = \true \\
\Psi(ghk) & = \sym{A}(\sym{i},\sym{j}) <= 10 \\
\Psi(ghkG) & = \true
\end{align*}
Function $\Theta$ is
\begin{center}
\begin{tabular}{lll}
$\Theta(ghijkG)(\var{i}) = \sym{i}$ &
$\Theta(ghikG)(\var{i}) = \sym{i}$ &
$\Theta(ghkG)(\var{i}) = \sym{i}$ \\

$\Theta(ghijkG)(\var{j}) = \sym{j} + 1$ &
$\Theta(ghikG)(\var{j}) = \sym{j} + 1$ &
$\Theta(ghkG)(\var{j}) = \sym{j} + 1$ \\

$\Theta(ghijkG)(\var{k}) = \sym{k} + 1$ &
$\Theta(ghikG)(\var{k}) = \sym{k}$ &
$\Theta(ghkG)(\var{k}) = \sym{k}$ \\

$\Theta(ghijkG)(\var{n}) = \sym{n}$ &
$\Theta(ghikG)(\var{n}) = \sym{n}$ &
$\Theta(ghkG)(\var{n}) = \sym{n}$ \\

$\Theta(ghijkG)(\var{A}) = \lambda \vec{\chi}~.~\sym{A}(\vec{\chi})$ &
$\Theta(ghikG)(\var{A}) = \lambda \vec{\chi}~.~\sym{A}(\vec{\chi})$ &
$\Theta(ghkG)(\var{A}) = \lambda \vec{\chi}~.~\sym{A}(\vec{\chi})$ \\
\end{tabular}
\end{center}
Since $\bar{\Psi} \equiv \Psi$ and $\bar{\Theta} \equiv \Theta$, we receive the following iterated symbolic state $\theta^{\vec{\kappa}}$
\begin{align*}
\theta^{\vec{\kappa}}(\var{i}) & = \sym{i} \\
\theta^{\vec{\kappa}}(\var{j}) & = \kappa_{1,1} + \kappa_{1,2} + \kappa_{1,3} + \sym{j} \\
\theta^{\vec{\kappa}}(\var{k}) & = \kappa_{1,1} + \sym{k} \\
\theta^{\vec{\kappa}}(\var{n}) & = \sym{n} \\
\theta^{\vec{\kappa}}(\var{A}) & = \lambda \vec{\chi}~.~\sym{A}(\vec{\chi})
\end{align*}
Note that we introduced path counters $\kappa_{1,1}, \kappa_{1,2}, \kappa_{1,3}$ for the backbone paths $ghijkG, ghikG, ghkG$ respectively. And looping condition $\varphi^{\vec{\kappa}}$ looks as follows
\begin{align*}
\varphi^{\vec{\kappa}} \equiv (\forall \tau_{1,1}~0 \leq \tau_{1,1} < \kappa_{1,1} \rightarrow \exists \matr{\tau_{1,2} \\ \tau_{1,3}}~& \matr{0 \\ 0} \leq \matr{\tau_{1,2} \\ \tau_{1,3}} \leq \matr{\kappa_{1,2} \\ \kappa_{1,3}}~\wedge \\ & \tau_{1,1} + \tau_{1,2} + \tau_{1,3} + \sym{j} < \sym{n}~\wedge \\ & \sym{A}(\sym{i}, \tau_{1,1} + \tau_{1,2} + \tau_{1,3} + \sym{j}) > 10~\wedge \\ & \sym{A}(\sym{i}, \tau_{1,1} + \tau_{1,2} + \tau_{1,3} + \sym{j}) < 100)~\wedge \\
(\forall \tau_{1,2}~0 \leq \tau_{1,2} < \kappa_{1,2} \rightarrow \exists \matr{\tau_{1,1} \\ \tau_{1,3}}~& \matr{0 \\ 0} \leq \matr{\tau_{1,1} \\ \tau_{1,3}} \leq \matr{\kappa_{1,1} \\ \kappa_{1,3}}~\wedge \\ & \tau_{1,1} + \tau_{1,2} + \tau_{1,3} + \sym{j} < \sym{n}~\wedge \\ & \sym{A}(\sym{i}, \tau_{1,1} + \tau_{1,2} + \tau_{1,3} + \sym{j}) \geq 100)~\wedge \\
(\forall \tau_{1,3}~0 \leq \tau_{1,3} < \kappa_{1,3} \rightarrow \exists \matr{\tau_{1,1} \\ \tau_{1,2}}~& \matr{0 \\ 0} \leq \matr{\tau_{1,1} \\ \tau_{1,2}} \leq \matr{\kappa_{1,1} \\ \kappa_{1,2}}~\wedge \\ & \tau_{1,1} + \tau_{1,2} + \tau_{1,3} + \sym{j} < \sym{n}~\wedge \\ & \sym{A}(\sym{i}, \tau_{1,1} + \tau_{1,2} + \tau_{1,3} + \sym{j}) \leq 10)
\end{align*}

\paragraph{Outer loop} There is a single backbone paths $defxglmo$ in a backbone tree of the inner loop. \\
Function $\Psi$ looks as follows
\begin{align*}
\Psi(d) = &~ \true \\
\Psi(de) = &~ \sym{i} < \sym{m} \\
\Psi(def) = &~ \true \\
\Psi(\ldots x) = &~ \true \\
\Psi(\ldots g) = &~ (\forall \tau_{1,1}~0 \leq \tau_{1,1} < \kappa_{1,1} \rightarrow \\ & \hspace{1.6cm} \exists \matr{\tau_{1,2} \\ \tau_{1,3}}~\matr{0 \\ 0} \leq \matr{\tau_{1,2} \\ \tau_{1,3}} \leq \matr{\kappa_{1,2} \\ \kappa_{1,3}}~\wedge \\ & \hspace{3.2cm} \tau_{1,1} + \tau_{1,2} + \tau_{1,3} + \sym{i} < \sym{n}~\wedge \\ & \hspace{3.2cm} \sym{A}(\sym{i}, \tau_{1,1} + \tau_{1,2} + \tau_{1,3} + \sym{i}) > 10~\wedge \\ & \hspace{3.2cm} \sym{A}(\sym{i}, \tau_{1,1} + \tau_{1,2} + \tau_{1,3} + \sym{i}) < 100)~\wedge \\ &
(\forall \tau_{1,2}~0 \leq \tau_{1,2} < \kappa_{1,2} \rightarrow \\ & \hspace{1.6cm} \exists \matr{\tau_{1,1} \\ \tau_{1,3}}~\matr{0 \\ 0} \leq \matr{\tau_{1,1} \\ \tau_{1,3}} \leq \matr{\kappa_{1,1} \\ \kappa_{1,3}}~\wedge \\ & \hspace{3.2cm} \tau_{1,1} + \tau_{1,2} + \tau_{1,3} + \sym{i} < \sym{n}~\wedge \\ & \hspace{3.2cm}  \sym{A}(\sym{i}, \tau_{1,1} + \tau_{1,2} + \tau_{1,3} + \sym{i}) \geq 100)~\wedge \\ &
(\forall \tau_{1,3}~0 \leq \tau_{1,3} < \kappa_{1,3} \rightarrow \\ & \hspace{1.6cm} \exists \matr{\tau_{1,1} \\ \tau_{1,2}}~\matr{0 \\ 0} \leq \matr{\tau_{1,1} \\ \tau_{1,2}} \leq \matr{\kappa_{1,1} \\ \kappa_{1,2}}~\wedge \\ & \hspace{3.2cm} \tau_{1,1} + \tau_{1,2} + \tau_{1,3} + \sym{i} < \sym{n}~\wedge \\ & \hspace{3.2cm} \sym{A}(\sym{i}, \tau_{1,1} + \tau_{1,2} + \tau_{1,3} + \sym{i}) \leq 10) \\
\Psi(\ldots l) = &~ \kappa_{1,1} + \kappa_{1,2} + \kappa_{1,3} + \sym{i} \geq \sym{n} \\
\Psi(\ldots m) = &~ \kappa_{1,1} \leq 15 \\
\Psi(\ldots o) = &~ \true
\end{align*}
Function $\Theta$ is
\begin{align*}
\Theta(defxglmo)(\var{i}) & = \sym{i} + 1 \\
\Theta(defxglmo)(\var{j}) & = \kappa_{1,1} + \kappa_{1,2} + \kappa_{1,3} + \sym{i} \\
\Theta(defxglmo)(\var{k}) & = \kappa_{1,1} \\
\Theta(defxglmo)(\var{m}) & = \sym{m} \\
\Theta(defxglmo)(\var{n}) & = \sym{n} \\
\Theta(defxglmo)(\var{A}) & = \lambda \vec{\chi}~.~\sym{A}(\vec{\chi}) \\
\end{align*}
Function $\bar{\Psi}$ looks as follows
\begin{align*}
\bar{\Psi}(d) & = \true \\
\bar{\Psi}(de) & = \sym{i} < \sym{m} \\
\bar{\Psi}(def) & = \true \\
\bar{\Psi}(\ldots x) & = \true \\
\bar{\Psi}(\ldots g) & = \forall s~(0 < s < \sym{s}_1 \rightarrow s + \sym{i} < \sym{n}) \\
\bar{\Psi}(\ldots l) & = \sym{s}_1 + \sym{i} \geq \sym{n} \\
\bar{\Psi}(\ldots m) & = \star \leq 15 \\
\bar{\Psi}(\ldots o) & = \true
\end{align*}
Function $\bar{\Theta}$ is
\begin{align*}
\bar{\Theta}(defxglmo)(\var{i}) & = \sym{i} + 1 \\
\bar{\Theta}(defxglmo)(\var{j}) & = \sym{s}_1 + \sym{i} \\
\bar{\Theta}(defxglmo)(\var{k}) & = \star \\
\bar{\Theta}(defxglmo)(\var{m}) & = \sym{m} \\
\bar{\Theta}(defxglmo)(\var{n}) & = \sym{n} \\
\bar{\Theta}(defxglmo)(\var{A}) & = \lambda \vec{\chi}~.~\sym{A}(\vec{\chi}) \\
\end{align*}
After iteration of regular program variables we receive the following $\theta^{\vec{\kappa}}$
\begin{align*}
\theta^{\vec{\kappa}}(\var{i}) & = \kappa_2 + \sym{i} \\
\theta^{\vec{\kappa}}(\var{j}) & = \star \\
\theta^{\vec{\kappa}}(\var{k}) & = \star \\
\theta^{\vec{\kappa}}(\var{m}) & = \sym{m} \\
\theta^{\vec{\kappa}}(\var{n}) & = \sym{n} \\
\theta^{\vec{\kappa}}(\var{A}) & = \lambda \vec{\chi}~.~\sym{A}(\vec{\chi}) \\
\end{align*}
Function $\Gamma(defxg, \Omega, \theta^{\vec{\kappa}}(\var{A}), \sym{x})$ looks as follows
\begin{align*}
\Gamma(defxg, \Omega, \theta^{\vec{\kappa}}(\var{A}), \sym{x}) \equiv & (0 \leq \sym{s}_1 - 1 \rightarrow \sym{s}_1 - 1 + \kappa_2 + \sym{i} < \sym{n}) \wedge \sym{s}_1 + \kappa_2 + \sym{i} \geq \sym{n}
\end{align*}
And therefore the formula $S_1$ is
\begin{align*}
S_1 \equiv & \exists \matr{m_1 \\ m_2 \\ m_3}, \matr{w_1 \\ w_2 \\ w_3} \forall \matr{\sym{n} \\ \sym{i}}, \kappa_2, \sym{s}_1~(\kappa_2 > 0 \wedge \sym{s}_1 > 0 \wedge (0 \leq \sym{s}_1 - 1 \rightarrow \sym{s}_1 - 1 + \kappa_2 + \sym{i} < \sym{n})~\wedge \\ & \hspace{2cm} \sym{s}_1 + \kappa_2 + \sym{i} \geq \sym{n}) \rightarrow \sym{s}_1 = \max \left\{ 0, \left( \matr{m_1 \\ m_2 \\ m_3} \kappa_2 + \matr{w_1 \\ w_2 \\ w_3} \right)^T \matr{\sym{n} \\ \sym{i} \\ 1} \right\}
\end{align*}
After we get a model from an SMT solver we build the following solution for $\sym{s}_1$
\begin{align*}
\sym{s}_1 = \max\{ 0, \sym{n} - \kappa_2 - \sym{i} \}
\end{align*}
Therefore fix-point $\theta^{\vec{\kappa}}$ is
\begin{align*}
\theta^{\vec{\kappa}}(\var{i}) & = \kappa_2 + \sym{i} \\
\theta^{\vec{\kappa}}(\var{j}) & = \max\{ 0, \sym{n} - \kappa_2 - \sym{i} \} + \sym{i} \\
\theta^{\vec{\kappa}}(\var{k}) & = \star \\
\theta^{\vec{\kappa}}(\var{m}) & = \sym{m} \\
\theta^{\vec{\kappa}}(\var{n}) & = \sym{n} \\
\theta^{\vec{\kappa}}(\var{A}) & = \lambda \vec{\chi}~.~\sym{A}(\vec{\chi}) \\
\end{align*}
Looping condition $\varphi^{\vec{\kappa}}$ of the outer loop looks as follows
\begin{align*}
\varphi^{\vec{\kappa}} \equiv ~& \forall \tau_2~0 \leq \tau_2 < \kappa_2 \rightarrow (\kappa_2 + \sym{i} < \sym{m} \wedge \exists \matr{\kappa_{1,1} \\ \kappa_{1,2} \\ \kappa_{1,3}} \matr{0 \\ 0 \\ 0} \leq \matr{\kappa_{1,1} \\ \kappa_{1,2} \\ \kappa_{1,3}}~\wedge \\ & \Psi(defxg) \wedge \kappa_{1,1} + \kappa_{1,2} + \kappa_{1,3} + \sym{i} \geq \sym{n} \wedge \kappa_{1,1} \leq 15)
\end{align*}

\paragraph{Whole program} The backbone tree of the program has only single backbone path $abcdefxglnopqr$ after its symbolic execution, since a backbone path going through program edge $(d,o)$ is infeasible. Therefore function $\Psi$ looks as follows:
\begin{align*}
\Psi(a) & = \true \\
\Psi(ab) & = \true \\
\Psi(abc) & = \true \\
\Psi(\ldots d) & = \forall \tau_2~0 \leq \tau_2 < \kappa_2 \rightarrow (\tau_2 < \sym{m}~\wedge \\ & \hspace{1cm} \exists \matr{\kappa_{1,1} \\ \kappa_{1,2} \\ \kappa_{1,3}} \matr{0 \\ 0 \\ 0} \leq \matr{\kappa_{1,1} \\ \kappa_{1,2} \\ \kappa_{1,3}}~\wedge \\ & \hspace{2cm}
(\forall \tau_{1,1}~0 \leq \tau_{1,1} < \kappa_{1,1} \rightarrow \exists \matr{\tau_{1,2} \\ \tau_{1,3}}~\matr{0 \\ 0} \leq \matr{\tau_{1,2} \\ \tau_{1,3}} \leq \matr{\kappa_{1,2} \\ \kappa_{1,3}}~\wedge \\ & \hspace{7.7cm} \tau_{1,1} + \tau_{1,2} + \tau_{1,3} + \tau_2 < \sym{n}~\wedge \\ & \hspace{7.7cm} \sym{A}(\tau_2, \tau_{1,1} + \tau_{1,2} + \tau_{1,3} + \tau_2) > 10~\wedge \\ & \hspace{7.7cm} \sym{A}(\tau_2, \tau_{1,1} + \tau_{1,2} + \tau_{1,3} + \tau_2) < 100)~\wedge \\ & \hspace{2cm}
(\forall \tau_{1,2}~0 \leq \tau_{1,2} < \kappa_{1,2} \rightarrow \exists \matr{\tau_{1,1} \\ \tau_{1,3}}~\matr{0 \\ 0} \leq \matr{\tau_{1,1} \\ \tau_{1,3}} \leq \matr{\kappa_{1,1} \\ \kappa_{1,3}}~\wedge \\ & \hspace{7.7cm} \tau_{1,1} + \tau_{1,2} + \tau_{1,3} + \tau_2 < \sym{n}~\wedge \\ & \hspace{7.7cm} \sym{A}(\tau_2, \tau_{1,1} + \tau_{1,2} + \tau_{1,3} + \tau_2) \geq 100)~\wedge \\ & \hspace{2cm}
(\forall \tau_{1,3}~0 \leq \tau_{1,3} < \kappa_{1,3} \rightarrow \exists \matr{\tau_{1,1} \\ \tau_{1,2}}~\matr{0 \\ 0} \leq \matr{\tau_{1,1} \\ \tau_{1,2}} \leq \matr{\kappa_{1,1} \\ \kappa_{1,2}}~\wedge \\ & \hspace{7.7cm} \tau_{1,1} + \tau_{1,2} + \tau_{1,3} + \tau_2 < \sym{n}~\wedge \\ & \hspace{7.7cm} \sym{A}(\tau_2, \tau_{1,1} + \tau_{1,2} + \tau_{1,3} + \tau_2) \leq 10)~\wedge \\ & \hspace{2cm} \kappa_{1,1} + \kappa_{1,2} + \kappa_{1,3} + \tau_2 \geq \sym{n} \wedge \kappa_{1,1} \leq 15) \\
\Psi(\ldots e) & = \kappa_2 < \sym{m}\\
\Psi(\ldots f) & = \true \\
\Psi(\ldots x) & = \true \\
\Psi(\ldots g) & = (\forall \tau_{3,1}~0 \leq \tau_{3,1} < \kappa_{3,1} \rightarrow \exists \matr{\tau_{3,2} \\ \tau_{3,3}}~\matr{0 \\ 0} \leq \matr{\tau_{3,2} \\ \tau_{3,3}} \leq \matr{\kappa_{3,2} \\ \kappa_{3,3}}~\wedge \\ & \hspace{6.2cm} \tau_{3,1} + \tau_{3,2} + \tau_{3,3} + \kappa_2 < \sym{n}~\wedge \\ & \hspace{6.2cm} \sym{A}(\kappa_2, \tau_{3,1} + \tau_{3,2} + \tau_{3,3} + \kappa_2) > 10~\wedge \\ & \hspace{6.2cm} \sym{A}(\kappa_2, \tau_{3,1} + \tau_{3,2} + \tau_{3,3} + \kappa_2) < 100)~\wedge \\ & \hspace{0.5cm}
(\forall \tau_{3,2}~0 \leq \tau_{3,2} < \kappa_{3,2} \rightarrow \exists \matr{\tau_{3,1} \\ \tau_{3,3}}~\matr{0 \\ 0} \leq \matr{\tau_{3,1} \\ \tau_{3,3}} \leq \matr{\kappa_{3,1} \\ \kappa_{3,3}}~\wedge \\ & \hspace{6.2cm} \tau_{3,1} + \tau_{3,2} + \tau_{3,3} + \kappa_2 < \sym{n}~\wedge \\ & \hspace{6.2cm} \sym{A}(\kappa_2, \tau_{3,1} + \tau_{3,2} + \tau_{3,3} + \kappa_2) \geq 100)~\wedge \\ & \hspace{0.5cm}
(\forall \tau_{3,3}~0 \leq \tau_{3,3} < \kappa_{3,3} \rightarrow \exists \matr{\tau_{3,1} \\ \tau_{3,2}}~\matr{0 \\ 0} \leq \matr{\tau_{3,1} \\ \tau_{3,2}} \leq \matr{\kappa_{3,1} \\ \kappa_{3,2}}~\wedge \\ & \hspace{6.2cm} \tau_{3,1} + \tau_{3,2} + \tau_{3,3} + \kappa_2 < \sym{n}~\wedge \\ & \hspace{6.2cm} \sym{A}(\kappa_2, \tau_{3,1} + \tau_{3,2} + \tau_{3,3} + \kappa_2) \leq 10) \\
\Psi(\ldots l) & = \kappa_{3,1} + \kappa_{3,2} + \kappa_{3,3} + \kappa_2 \geq \sym{n} \\
\Psi(\ldots n) & = \kappa_{3,1} > 15 \\
\Psi(\ldots o) & = \true \\
\Psi(\ldots p) & = \sym{m} > 15 \\
\Psi(\ldots q) & = \sym{n} > 20 \\
\Psi(\ldots r) & = \true \\
\end{align*}
And we receive the following abstraction $\hat{\varphi}$
\begin{align*}
\hat{\varphi} \equiv~& \exists \kappa_2~(0 \leq \kappa_2~\wedge
\\ & \hspace{0.9cm}
\forall \tau_2~0 \leq \tau_2 < \kappa_2 \rightarrow ( \\ & \hspace{2cm} \tau_2 < \sym{m}~\wedge \\ & \hspace{2cm} \exists \matr{\kappa_{1,1} \\ \kappa_{1,2} \\ \kappa_{1,3}} \matr{0 \\ 0 \\ 0} \leq \matr{\kappa_{1,1} \\ \kappa_{1,2} \\ \kappa_{1,3}}~\wedge \\ & \hspace{3cm}
(\forall \tau_{1,1}~0 \leq \tau_{1,1} < \kappa_{1,1} \rightarrow \exists \matr{\tau_{1,2} \\ \tau_{1,3}}~\matr{0 \\ 0} \leq \matr{\tau_{1,2} \\ \tau_{1,3}} \leq \matr{\kappa_{1,2} \\ \kappa_{1,3}}~\wedge \\ & \hspace{8.7cm} \tau_{1,1} + \tau_{1,2} + \tau_{1,3} + \tau_2 < \sym{n}~\wedge \\ & \hspace{8.7cm} \sym{A}(\tau_2, \tau_{1,1} + \tau_{1,2} + \tau_{1,3} + \tau_2) > 10~\wedge \\ & \hspace{8.7cm} \sym{A}(\tau_2, \tau_{1,1} + \tau_{1,2} + \tau_{1,3} + \tau_2) < 100)~\wedge \\ & \hspace{3cm}
(\forall \tau_{1,2}~0 \leq \tau_{1,2} < \kappa_{1,2} \rightarrow \exists \matr{\tau_{1,1} \\ \tau_{1,3}}~\matr{0 \\ 0} \leq \matr{\tau_{1,1} \\ \tau_{1,3}} \leq \matr{\kappa_{1,1} \\ \kappa_{1,3}}~\wedge \\ & \hspace{8.7cm} \tau_{1,1} + \tau_{1,2} + \tau_{1,3} + \tau_2 < \sym{n}~\wedge \\ & \hspace{8.7cm} \sym{A}(\tau_2, \tau_{1,1} + \tau_{1,2} + \tau_{1,3} + \tau_2) \geq 100)~\wedge \\ & \hspace{3cm}
(\forall \tau_{1,3}~0 \leq \tau_{1,3} < \kappa_{1,3} \rightarrow \exists \matr{\tau_{1,1} \\ \tau_{1,2}}~\matr{0 \\ 0} \leq \matr{\tau_{1,1} \\ \tau_{1,2}} \leq \matr{\kappa_{1,1} \\ \kappa_{1,2}}~\wedge \\ & \hspace{8.7cm} \tau_{1,1} + \tau_{1,2} + \tau_{1,3} + \tau_2 < \sym{n}~\wedge \\ & \hspace{8.7cm} \sym{A}(\tau_2, \tau_{1,1} + \tau_{1,2} + \tau_{1,3} + \tau_2) \leq 10)~\wedge \\ & \hspace{3cm} \kappa_{1,1} + \kappa_{1,2} + \kappa_{1,3} + \tau_2 \geq \sym{n}~\wedge \\ & \hspace{3cm} \kappa_{1,1} \leq 15)~\wedge
\\ & \hspace{0.9cm}
\kappa_2 < \sym{m}~\wedge
\\ & \hspace{0.9cm}
\exists \matr{\kappa_{3,1} \\ \kappa_{3,2} \\ \kappa_{3,3}}~(\matr{0 \\ 0 \\ 0} \leq \matr{\kappa_{3,1} \\ \kappa_{3,2} \\ \kappa_{3,3}}~\wedge
\\ & \hspace{2.6cm}
(\forall \tau_{3,1}~0 \leq \tau_{3,1} < \kappa_{3,1} \rightarrow \exists \matr{\tau_{3,2} \\ \tau_{3,3}}~\matr{0 \\ 0} \leq \matr{\tau_{3,2} \\ \tau_{3,3}} \leq \matr{\kappa_{3,2} \\ \kappa_{3,3}}~\wedge \\ & \hspace{6.2cm} \tau_{3,1} + \tau_{3,2} + \tau_{3,3} + \kappa_2 < \sym{n}~\wedge \\ & \hspace{6.2cm} \sym{A}(\kappa_2, \tau_{3,1} + \tau_{3,2} + \tau_{3,3} + \kappa_2) > 10~\wedge \\ & \hspace{6.2cm} \sym{A}(\kappa_2, \tau_{3,1} + \tau_{3,2} + \tau_{3,3} + \kappa_2) < 100)~\wedge
\\ & \hspace{2.6cm}
(\forall \tau_{3,2}~0 \leq \tau_{3,2} < \kappa_{3,2} \rightarrow \exists \matr{\tau_{3,1} \\ \tau_{3,3}}~\matr{0 \\ 0} \leq \matr{\tau_{3,1} \\ \tau_{3,3}} \leq \matr{\kappa_{3,1} \\ \kappa_{3,3}}~\wedge \\ & \hspace{6.2cm} \tau_{3,1} + \tau_{3,2} + \tau_{3,3} + \kappa_2 < \sym{n}~\wedge \\ & \hspace{6.2cm} \sym{A}(\kappa_2, \tau_{3,1} + \tau_{3,2} + \tau_{3,3} + \kappa_2) \geq 100)~\wedge
\\ & \hspace{2.6cm}
(\forall \tau_{3,3}~0 \leq \tau_{3,3} < \kappa_{3,3} \rightarrow \exists \matr{\tau_{3,1} \\ \tau_{3,2}}~\matr{0 \\ 0} \leq \matr{\tau_{3,1} \\ \tau_{3,2}} \leq \matr{\kappa_{3,1} \\ \kappa_{3,2}}~\wedge \\ & \hspace{6.2cm} \tau_{3,1} + \tau_{3,2} + \tau_{3,3} + \kappa_2 < \sym{n}~\wedge \\ & \hspace{6.2cm} \sym{A}(\kappa_2, \tau_{3,1} + \tau_{3,2} + \tau_{3,3} + \kappa_2) \leq 10)~\wedge
\\ & \hspace{2.6cm}
\kappa_{3,1} + \kappa_{3,2} + \kappa_{3,3} + \kappa_2 \geq \sym{n}~\wedge
\\ & \hspace{2.6cm}
\kappa_{3,1} > 15~\wedge
\\ & \hspace{2.6cm}
\sym{m} > 15~\wedge
\\ & \hspace{2.6cm}
\sym{n} > 20))
\end{align*}
From a model returned from an SMT solver we can see, that we have input to the program which reaches the target location. Note that although Z3~SMT solver correctly computed content of the array \var{A} (i.e.~there are numbers $11$ everywhere), the size $n$ of the array \var{A} is unnecessarily large $1257$.

\end{document}